% !TeX encoding = UTF-8
% !TeX spellcheck = en_US

\documentclass{article}

\usepackage[utf8]{inputenc}
\usepackage{authblk}
\usepackage[T1]{fontenc}
\usepackage{mathtools}
\usepackage{paralist}
\usepackage{bbm}
\usepackage{array}
\newcolumntype{C}[1]{>{\centering\let\newline\\\arraybackslash\hspace{0pt}}m{#1}}
\newcolumntype{L}[1]{>{\raggedright\let\newline\\\arraybackslash\hspace{0pt}}m{#1}}

\usepackage{graphicx}
\usepackage{subcaption}

\usepackage{amsthm}
\usepackage{amssymb}
\usepackage{algorithm}  % the wrapper for floats
\usepackage{algpseudocode}  % the layout, it includes algoritmicx

\usepackage{float}
\newfloat{algorithm}{t}{lop}

\iffalse
%\iftrue
  \newcommand{\rnote}[1]{\marginpar{\small\color{red}#1}}
  \newcommand{\infloatrnote}[1]{{\color{red}[#1]}}
\else
  \newcommand{\rnote}[1]{}
  \newcommand{\infloatrnote}[1]{}
\fi

\let\emptyset\varnothing

\usepackage{multirow}
\usepackage{forloop}

%usepackage{algorithmic}
\usepackage[pdftex,pdftitle={overlay-indexes and DB-trees}]{hyperref}
%\usepackage{forloop}
%inline items
\usepackage[inline]{enumitem}
\usepackage{tabto}

\usepackage{listings}
\lstset{
	basicstyle=\ttfamily,
	mathescape
}

\graphicspath{{figures/}}

\newtheorem{theorem}{Theorem}

\newtheorem{lemma}{Lemma}
\newtheorem{property}{Property}

\newcommand{\hash}[1]{\mathrm{hash}(#1)}

\newcommand{\removedtodo}[2][]{}

\newlength{\mywidth}
\setlength{\mywidth}{0.99\linewidth}

\algblockdefx{UponReceiving}{EndUponReceiving}[1]{\textbf{upon receiving} #1 \textbf{containing}}{}

\algblockdefx{Do}{EndDo}{\textbf{do}}{}

\algblockdefx{Indent}{EndIndent}{}{}
\algblockdefx{Status}{EndStatus}{\textbf{state}}{}

\algnewcommand{\LComment}[1]{\(\triangleright\) #1}

\renewcommand{\emph}[1]{\textit{#1}}

\begin{document}

\title{Overlay Indexes: Efficiently Supporting Aggregate Range Queries and Authenticated Data Structures in Off-the-Shelf Databases}

\author[a]{Diego Pennino}
\author[b]{Maurizio Pizzonia}
\author[c]{Alessio Papi}

\affil[abc]{Universit\`a degli Studi Roma Tre, Dipartimento di Ingegneria, Sezione Informatica e Automazione\\Via della Vasca Navale 79, 00146, Roma (Italy) }
\affil[a]{pennino@ing.uniroma3.it} 
\affil[b]{pizzonia@ing.uniroma3.it}
\affil[c]{alessio.papi@live.it}
\date{}

\maketitle

\begin{abstract}
	% !TeX encoding = UTF-8
% !TeX root = main.tex
% !TeX spellcheck = en_US
% cat 001_abstract.tex | egrep -v '^%|^$' | wc -

Commercial off-the-shelf DataBase Management Systems (DBMSes) are highly optimized to
process a wide range of queries by means of carefully designed indexing and
query planning. However, many aggregate range queries are usually performed by
DBMSes using sequential scans, and certain needs, like storing 
Authenticated Data Structures (ADS), are not supported at all. Theoretically,
these needs could be efficiently fulfilled adopting specific kinds of indexing,
which however are normally ruled-out in DBMSes design.

We introduce the concept of \emph{overlay index}: an index that is meant to be
stored in a standard database, alongside regular data and managed by regular
software, to complement DBMS capabilities. We show a data structure, that we call
\emph{DB-tree}, that realizes an overlay index to support a wide range of
custom aggregate range queries as well as ADSes, efficiently. All DB-trees operations can be
performed by executing a small number of queries to the DBMS, that can be issued
in parallel in one or two \emph{query rounds}, and involves a logarithmic amount
of data. We experimentally evaluate the efficiency of DB-trees showing
\infloatrnote{C1.1} that our approach is effective, especially if data updates
are limited.

\medskip 	\noindent\textbf{Keywords.} 
Database systems, 
Indexes,
Tree data structures,
Data security,
Computational efficiency,
Aggregated range queries,
Authenticated data structures.
\end{abstract}

% !TeX encoding = UTF-8
% !TeX root = main.tex
% !TeX spellcheck = en_US

\section{Introduction }\label{sec:intro}

In relational and NoSQL databases, the ability to obtain aggregate information
from a (possibly large) set of ``records'' (tuples, documents, etc.) has always
been an important feature. Usually, the set on which to apply an
\emph{aggregation function} (e.g., COUNT or SUM in
SQL~\cite{eisenberg1999sql}) is identified by some form of record selection. In
the simplest form, records that have one of their fields in a given
range are selected and  the aggregation function is applied on them. This kind of
queries are usually referred to as \emph{aggregate range queries}. Applications of
aggregate range queries can be easily found in many data analytics activities
related to business intelligence, market analysis, user profiling, IoT, etc.
Further, this feature is one of the fundamental elements of a good system for big-data
analytics. In many applications, the speed at
which queries are fulfilled is often critical, possibly marking the distinction
between a system that meets the user needs and one that does not. Typical
examples are interactive visual systems, where users expect the system 
to respond in less than a second.

The vast majority of DataBase Management Systems (DBMSes) supports aggregate
range queries to some extent. 
When data are large, aggregation performed by a complete scan of the
selected data can be too costly to be viable. In theory, adopting proper data
structures for indexing~\cite{cormen2009introduction}, it is possible to answer
 aggregate range queries in $O(\log n)$ time, where $n$ is the amount of
selected records to be aggregated, for any selection range and for a quite
large class of aggregation functions\footnote{Essentially, for indexing techniques to be applicable, aggregation functions have to be
	\emph{associative}. In Section~\ref{ssec:db-tree-formal}, we provide a formal
	definition of  a class of aggregation functions that theoretically can be
	computed efficiently by proper indexing techniques.}. The general idea behind
those data structures is to store partial pre-computed aggregate values in each
node of the index. In this way, it is possible to answer aggregate range
queries on any range without actually scanning the data. Update operations on
these data structures also take logarithmic time, allowing them to be used even
when data is subject to updates.

However, in practice, indexes realized by many DBMSes are designed to support
regular (non aggregated) queries. This is reflected in the limited advantage
that regular indexes can provide to aggregate range queries. In particular, most
\mbox{DBMSes} can exploit regular indexes only for aggregation based on MIN and MAX
functions. Other typically DBMS-supported aggregation functions, like 
SUM, AVG, etc., usually require sequential scans. 
%Regular indexes can also be helpful to reduce and optimize disk access for sequential scans.
To speed up these
sorts of queries, a typical trick is to keep data in main memory, which is
costly and usually not possible for big-data applications. 
%Further certain
%application may require unsupported aggregation functions. While some of those them can 
%be hadled by adopting certain tricks may be used to 
%them can be 
%realized by explicitly storing additional values in the database, others cannot
%(e.g., inner vectorial product or associative cryptographic
%hash~\cite{tillich1994hashing}).

In this paper, we introduce the concept of \emph{overlay index}, which is a data
structure that plays the role of an index but is explicitly stored in a database
along with regular data. The logic to use an overlay index is not frozen in the
DBMS but can be programmed and customized at the same level of the application
logic, obtaining great flexibility.
However, designing an overlay index rises specific challenges.
In principle, any memory oriented data structure could be easily
represented in a database. Nonetheless, operations on these data structures
typically require traversing
a logarithmic number of elements, each of them pointing to the next one, which is
an inherently sequential task.
While this is fast and acceptable in main memory, sequentially performing a
logarithmic number of queries to a DBMS is extremely slow. In fact, each query
encompasses client/server communication, usually involving the network and the
operating system, which introduces a large cost that cannot be mitigated by the DBMS
query planner. Further,  
performing queries sequentially prevents the exploitation of the capability of RAID arrays and DBMS
clusters to fulfill many requests at the same time and of 
disk schedulers to
optimize the order of disk access~\cite{thomasian2011survey}.

Our main contribution is a new data structure, called \emph{DB-tree}, for the
realization of an overlay index. DB-trees are meant to be stored and used in
standard DBMSes to support custom aggregate range queries and possibly other
needs, like, for example, data authentication by Authenticated Data
Structures~\cite{tamassia2003authenticated} (ADS). DB-trees are a sort of search
trees whose balancing is obtained through randomization, as for skip lists~\cite{pugh1998skip}. In a
DB-tree, query operations require only a constant number of range selections,
that can be executed in parallel in a single  \emph{round} and that return a
logarithmic amount of data. Updates, insertions and deletions, also 
involve a logarithmic amount of data and can be executed in at most two rounds.
We formally describe all algorithms and prove their correctness
and efficiency.

Additionally, we present experimental evidence of the efficiency of our
approach, on two off-the-shelf DBMSes, by comparing the queries 
running times using DB-trees against the same operations performed with the only support of the DBMS.\rnote{C1.1} Experimental results
show that the adoption of DB-trees brings a large gain for range queries, but they may introduce a non negligible overhead when data changes.
We also discuss several applicative and architectural
aspects as well as some variations of DB-trees.

This paper is structured as follows. In Section~\ref{sec:soa}, we show the state
of the art. In Section~\ref{sec:overlay-indexes}, we introduce the concept of
overlay index and discuss several architectural aspects. DB-trees are introduced
in Section~\ref{sec:db-tree} along with some formal theoretical results.
Algorithms to query a DB-tree and perform insertions, deletions and updates are
shown in Section~\ref{sec:algorithms}.
In Section~\ref{sec:exepriments}, we show an experimental comparison between
DB-trees and plain DBMS. In Section~\ref{sec:groupBy}, we show how it is
possible to use DB-trees to perform, efficiently, aggregate range queries
grouped by values of a certain column. We also show an experimental comparison
with other approaches. In Section~\ref{sec:ADS}, we show how DB-trees can be
adapted to realize persistent ADSes. In Section~\ref{sec:arch_discussion}, we discuss some architecture generalizations. In
Section~\ref{sec:conclusions}, we draw the conclusions.

%Certain ways of accessing data 
%important methodologies and products which are now fundamental in many business aspects of our life (business intelligence, )

% !TeX encoding = UTF-8
% !TeX root = main.tex
% !TeX spellcheck = en_US

\section{State of the Art}\label{sec:soa}

In this section, we review the state of the art of technology and research about
aggregate range queries optimization. We also review the state of the art about
persistent representation of ADSes, which is a relevant application of the
results of this paper.

Optimization of query processing in DBMSes is a very classic subject in database
research, which historically also dealt with  the optimization of aggregate
queries (see for example~\cite{von1987translating}). \emph{Indexes} are primary
tools for the optimization of query execution. They are data structures internally used by DBMSes
to speed up searches. They are persisted on disk in a DBMS proprietary format.
Typical indexes realize some form of balanced search tree or hash
table~\cite{cormen2009introduction,ramakrishnan2000database}. \rnote{C2.4}Specific indexing 
techniques for uncertain data are also known, see for example~\cite{cheng2004efficient,angiulli2012indexing,chen2015efficient,chen2017indexing}. 
Some research effort was also dedicated to the creation of a general architecture for indexing, see, for example,~\cite{VLDB95*562,kornacker1997concurrency}.
However, these results have been adopted only by specific DBMSes (see, for example, \cite{borodin2017optimization}).
The query
optimizer of DBMSes can take into account the presence of indexes in the
planning phase if this is deemed favorable. In a typical DBMS, the creation of
an index is asked by the database administrator, or the application developer,
using proper constructs (available for example in SQL). The decisions about
indexes creation is usually based on the foreseen frequencies of
the queries, the involved data size, and execution time constraints. Several works deal with the possibility for a
DBMS to self tune and to choose right indexes (see, for
example,~\cite{chaudhuri2007self,kraska2019sagedb}). Usually, there is no way for the user of
the DBMS to access an index directly or to create indexes that are different
from the ones the DBMS is designed to support. \rnote{C2.4}A notable exception to this
is the PostgreSQL DBMS that provides some flexibility~\cite{postgresql12doc}.

Concerning aggregate range queries, in common \mbox{DBMSes}, regular indexing is
usually only effective for some aggregate functions, like MIN and MAX.
Sometimes, certain index usages can provide a great speed-up for aggregate range
queries due to the fact that putting certain data in the index may avoid random
disk access and/or most processing could occur
in-memory~\cite{stackoverflow1439016-oracle}. %However, this only
%works for specific cases, for example when indexes can be kept in-memory.
The current DBMS technology and the SQL standard do not allow the user to
specify custom indexes to obtain logarithmic time execution of aggregate range
queries, even for SQL-supported aggregation functions for which this would be theoretically possible.

A wide variety of techniques was proposed to optimize the execution of aggregate
range queries in DBMSes. A
whole class of proposals deal with \emph{materialized views} (see for
example~\cite{halevy2001answering,
	goldstein2001optimizing,gupta1993maintaining,srivastava1996answering,muller2013workload}). These
techniques require the DBMS to keep the results of certain queries stored and
up-to-date. These can be used to simplify the execution of certain aggregate
queries and are especially effective when data is not frequently updated.

A largely investigated approach is called \emph{Approximate Query Processing}
(AQP), which aims at gaining efficiency while reducing the precision of query results.
A survey of the achievements in this area is provided
in~\cite{li2018approximate}. This approach is relevant especially when data are
large. An approximate approach targeted to big-data environments is proposed
in~\cite{yun2014fastraq}. A method to perform approximate queries on granulated
data summaries of large datasets is shown in~\cite{slkezak2018new}. Approximated
techniques are now available on some widely used
systems~\cite{su2016approximate,chandramouli2013scalable}. The
VerdictDB~\cite{park2018verdictdb} system provides a handy way to support AQP
for SQL-based systems by adding a query/result-rewriting middle layer between
the client and the DBMS.  

A context in which aggregation speed is very relevant is in On-Line Analytical
Processing (OLAP) systems (see, for
example,~\cite{ho1997range,gupta1995aggregate}). Several works deal with fast
methods to obtain approximated
results~\cite{shanmugasundaram1999compressed,cuzzocrea2005providing,acharya1999aqua} for OLAP systems. In this field, specific indexing techniques may be adopted~\cite{sarawagi1997indexing}. Specific data structures were proposed to support aggregated range queries, like for example aR-trees~\cite{papadias2001efficient} for spatial OLAP systems.
The work in~\cite{lopez2005spatiotemporal} surveys several aggregation
techniques targeted to the storage of spatiotemporal data. 

To support strict time bounds, specific techniques for in-memory databases
exist~\cite{chavan2018accelerating}.

As will be clear in the following, one of the applications of our results is to
support \emph{Authenticated Data Structures} (\emph{ADS}) (see
Section~\ref{sec:ADS}). ADSes are useful when we need a cryptographic
proof that the result of a query is coherent with a certain version of the
dataset and that version is succinctly known to the client by a cryptographic hash that the client trusts. The first ADS was proposed by
Merkle~\cite{merkle1987digital}. Merkle trees are balanced search trees in which
each node contains a cryptographic hash of its children and, recursively, of the
whole subtree. 
The work
\rnote{C2.4} in~\cite{pennino2019pipeline} shows how to arbitrarily scale the throughput of 
an ADS-based system in a cloud setting with an arbitrarily large number of clients.
ADSes are especially desirable in the context of \emph{outsourced
	databases}. 
The work in~\cite{li2006dynamic} studies data structures to realize
authenticated versions of B-trees to authenticate queries. This proposal is
meant to be used in DBMSes as an internal indexing structure. Other works tried
to represent ADSes in the database itself. Several general-purpose 
techniques to represent a tree are presented in~\cite{celko2012joe}. The problem of
representing an \emph{authenticated skip list}~\cite{tamassia2003authenticated}
in a relational table was investigated in~\cite{di2007authenticated}. They
propose the use of \emph{nested sets} to perform queries in one query round. An
efficient use of this approach to compute authenticated replies for a wide class
of SQL queries is provided in~\cite{palazzi2010query}. 
Nested sets represents nodes
of trees as intervals bounded by integers and a parent-child relation is
represented as a containment relation between two intervals. Unfortunately,
nested sets cannot be updated efficiently. In~\cite{tropashko2005nested},
several variations of nested sets are described, varying the way in which
intervals bounds are represented. These methods are based on numerical
representations of rational numbers and are limited by precision problems.

\section{Overlay Indexes}\label{sec:overlay-indexes}

In this section, we discuss the rationale for introducing overlay indexes, the
applicative contexts where overlay indexes can be fruitfully applied and
discuss some architectural aspects. This discussion is largely independent from the
DB-tree data structure, which we propose as one form of realization of overlay indexes, described in Section~\ref{sec:db-tree}.

In the following, when we refer to a \emph{query}, we may intend 
either a proper data-reading query or a generic statement, which can also change the data. The distinction should be clear form the context.

\begin{figure}
	\centering
	\includegraphics[width=\linewidth]{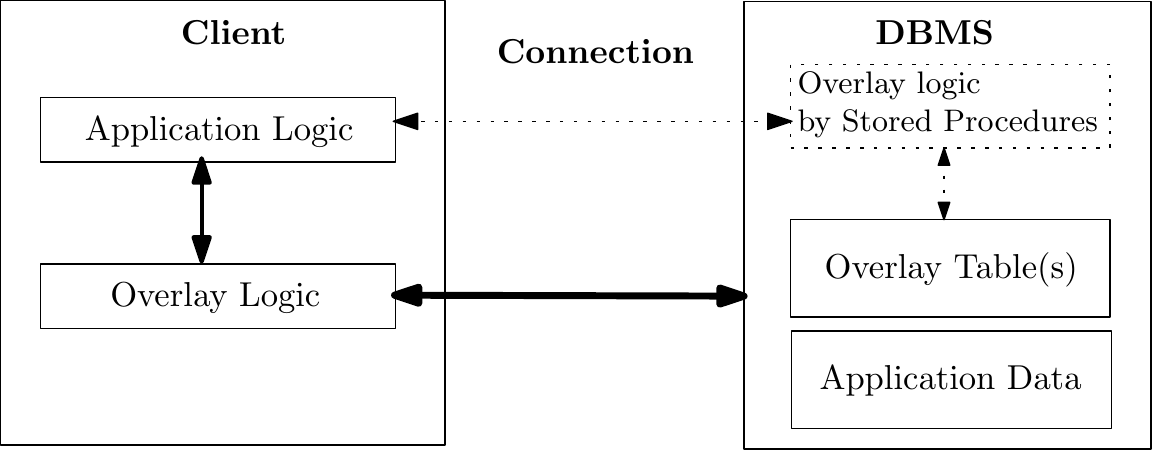}
	\caption{A schematic architecture for the adoption of overlay indexes. Thick arrows 
	shows the data path when overlay logic is realized client-side. Dotted arrows shows 
    data path when overlay logic is realized by stored procedures in the DBMS.}
	\label{fig:architecture}
\end{figure}

In Figure~\ref{fig:architecture},  we show a schematic architecture for the adoption of overlay indexes. We suppose to
have a \emph{client} and a \emph{DBMS} connected by a \emph{connection}. In our
scheme, the client contains the \emph{application logic} and performs queries to the
DBMS through the connection. Note that, we use this scheme in the following
discussion but we do not intend to restrict the use of overlay indexes to this
simple architecture. For example, the client may be the middle tier
server that performs DBMS queries on behalf of user clients, the storage of the
DBMS may be located in a cloud, the client may be on the same machine of the
DBMS and the connection may be just inter-process communication, and
so on. The protocol used to perform queries to the DBMS is a standard one (e.g.,
SQL or one of the many NoSQL query languages). Additional requirements on features supported by the DBMS
depend on  the way overlay indexes are realized. We provide these details for DB-trees in
Section~\ref{ssec:db-tree-formal}.

\subsection{Applicative contexts} 

The applicative contexts where we believe
overlay indexes may turn out to be of great help are those where
\begin{inparaenum}[(1)]
\item the application performs both updates (possibly comprising insertion and deletion) and reads on the database, %
\item some read operations have strong efficiency requirements, like, for example,
in interactive applications, %
\item the amount of data is so large that in-memory solutions cannot be applied and, 
hence, the efficiency of those read
operations can only be obtained by adopting indexes, and %
\item  the DBMS does not support the right indexing for those read operations.
\end{inparaenum}
For example, an application might need to provide the sum of a column for those
tuples having the value of their key contained in a certain range, which depends
on (unpredictable) user requests, or might need to provide a
cryptographic hash of the whole table to the client for security purposes. The
first case, is often inefficiently supported and often efficiency is obtained
not by proper indexing but by keeping data in main memory. The second case is usually not
supported by DBMSes. Clearly, in the above described conditions, approximate query processing
could be adopted. However, in certain situations approximation is not desirable or
is ruled-out by the nature of the aggregation function, as for the cryptographic hash case.

\subsection{Rationale} 

Clearly, the introduction of an overlay index has some
performance overhead when data are updated, as any indexing approach has, but
can be the only viable approach for certain non-supported aggregation functions
or can dramatically reduce the time taken to reply to certain queries for which
specific indexing is not available from the DBMS. For example, in the cases
where the DBMS cannot use indexes, an aggregated query may run in $O(r)$
where $r$ is the number of elements in the range selected by the query. This is
essentially due to the fact that the DBMS has no better strategy than
sequentially scan the result of the selection. For an overlay index realized as
a DB-tree (see Section~\ref{sec:db-tree}), both queries and updates take $O(\log
r)$. The speed-up obtained for the aggregate range queries may be huge, 
if $r$ is even moderately large, while paying a logarithmic slow-down on the 
update side is usually affordable, especially if data are rarely updated
 (see also Section~\ref{sec:exepriments}).

\subsection{Architectural Aspects}\label{ssec:architectural-aspects}

We refer to Figure~\ref{fig:architecture}.
Realizing an overlay index requires to introduce
\begin{inparaenum}[(i)]
\item one or more specific table(s) into the database, which we call \emph{overlay table(s)},
alongside the regular data, with the purpose to store the overlay index, and
\item a (possibly only conceptual) middle layer, which we call \emph{overlay logic},
that is in charge of keeping overlay tables up-to-date with the data and to fulfill
the specific queries the overlay index was introduced for.
\end{inparaenum}

We observe that the overlay logic can be naturally designed as a real middle
layer, which should 
\begin{inparaenum}[(1)]
\item take an \emph{original} query performed by the client, for example
expressed in plain or augmented SQL language, %
\item produce appropriate \emph{actual} queries
that act upon regular tables and/or overlay tables and submit them to the DBMS, %
\item get their results from the DBMS, and %
\item compose and interpret the results to reply to the original query of the client.
\end{inparaenum}
This is the approach taken by VerdictDB~\cite{park2018verdictdb}
for approximate processing of aggregated range queries. Since we aim at proving
the soundness and the practicality of the overlay index approach, the
construction of such a middle layer for overlay indexes is out of the scope
of this paper.

A simpler approach is to have a library that is only in charge to change or
query overlay tables. In this case, it is responsibility of the application to call this library
to change an overlay table every time the corresponding regular data table is
changed. Special care should be taken to keep consistency between the two.
For example, the whole update (of data and overlay tables) should be performed
within a transaction. 
In certain situations, it may be convenient to keep all the
data within the index itself without having a distinct regular data table. This
is the approach that we adopted for the experiments described in
Section~\ref{sec:exepriments}.

When designing a data structure for an overlay index, the time complexity of its operations
is clearly a major concern. However, we should also take into account its 
efficiency in terms of data transferred for each operation, and 
also how this transfer is performed. In fact, to perform a query on an overlay
index, it is likely that several actual queries to the DBMS should be performed.
It is important that these queries could be done in parallel. A 
\emph{round} is a set of actual queries that can be performed in parallel to the
DBMS. We assume actual queries to be submitted to the DBMS at the same instant.
The \emph{duration} of a round is the time elapsed from queries submission to the end of the longest one. Round duration is composed of
the execution time (on the DBMS) of the longest query and the communication latencies in both directions. A poor implementation may require the original query to take
several rounds to be accomplished, possibly consisting of only one actual query each. For example,
a plain porting of any balanced data structure (like, for example, AVL-trees,
skip lists, or B-trees) to use a DBMS as storage would take $O(\log n)$ rounds for their operations, where $n$ 
is the number of elements in the data structure.
This means that in the overall time to execute the original query, we should sum up the time spent
by $O(\log n)$ actual queries for both communication and execution on the DBMS.
%First, we note that, unless many queries are issued at the same time,
% both the channel and the DBMS
%are underutilized. 
As already mentioned, highly parallel systems (like
RAID arrays) may greatly reduce the overall response time if the \emph{tasks}
they have to perform (i.e., sectors to be read or written) are all known in
advance. In fact, they can usually execute many tasks concurrently making a much
better use of resources and obtaining a much smaller execution time, overall.
Even when a single hard drive is used, it is useful to know all the tasks in
advance since disk schedulers reorder all the tasks they know so that they are
fulfilled in a single sweep of the head of the hard
drive~\cite{thomasian2011survey} to reduce the time spent for seeking the
correct tracks. Some studies show how disk schedulers are relevant also when
solid state drives or virtualization is
adopted~\cite{boutcher2010does,kim2009disk,wang2013novel}.

%We model the connection as limited in
%terms of time each query request/response takes to be transmitted, called
%\emph{latency}, and in terms of number the ``size'' of query requests/responses
%that can be handled in the unit of time, called \emph{throughput}. 

To improve performances, the overlay logic may realize some form of caching by 
keeping part of the overlay table in the memory of the client. While this may
speed up some queries, it introduces a cache consistency problem, if more clients
are present. For data-changing queries, it is likely
that the overlay logic needs to know beforehand the part of the overlay table 
that is going to be modified,
before submitting the changes to the DBMS (this is the case for
DB-trees, described in Section~\ref{sec:db-tree}). The obvious approach is to
perform changes in (at least) two rounds. The first (\emph{read round}), 
to retrieve the part of the
overlay table that is involved in the change  and, the second (\emph{update round}),
to actually perform the change. The introduction of
caching may help in reducing the amount of data transferred from the database in
the read round. However, since the data needed for the change depends on the
request performed by the user, caching is unlikely to make the  read round
unnecessary, unless the application always updates the same set of data.
A particular case is the insertion of a large quantity of data (also called \emph{batch insert}
in technical DBMS literature). In this case, many insertions may be cumulated in cache and written
in one round, possibly using batch insert support from the DBMS itself.
We consider all aspects introduced by caching to be outside of the scope of this
paper. In the realization adopted for the experimentation of
Section~\ref{sec:exepriments}, we do not adopt any form of caching and we have
only one client.

It is worth mentioning that overlay logic may also be realized  exploiting
programmability facilities of certain DBMSes, usually called \emph{stored
	procedures}. 
In this case, the impact of communication between the overlay logic
and the DBMS would be negligible. This has two notable effects:
\begin{inparaenum}[(1)]
	\item the inefficiency of data-changing operations due to the need of two query rounds is mitigated and %
	\item the adoption of a realization that performs
	in many rounds (e.g., logarithmic in the data size) becomes more affordable.
\end{inparaenum}
Anyway, having many sequential query rounds still makes a poor use of RAID arrays, DBMS clusters, and
disk schedulers.  Hence, also in this setting, it is advisable to adopt special data
structures that limit the number of query rounds, like DB-trees. 
We note that stored procedures are proprietary features
of DBMSes, hence, exploiting them links overlay logic to a specific DBMS
technology. Further, not all DBMSes support them, especially in the NoSQL world.
%Hence, this solution is advisable only if the client side solutions do not meet
%the expected performances.

%
%\hrule
%\todo[inline]{NOTE ancora da considerare}
%
%
%NoSQL limitations: 
%no aggregation optimization.
%
%
%
%Applications where overlay indexes may be useful and no other solutions exists: 
%big data does not fit in memory, 
%high query rate, 
%limited update rate (this also limits concurrent update problems), 
%
%Discuss also concurrent access. 
%
%
%To be applicable: self-decomposable aggregation function.

% !TeX encoding = UTF-8
% !TeX root = main.tex
% !TeX spellcheck = en_US

\makeatletter
\newlength{\trianglerightwidth}
\settowidth{\trianglerightwidth}{$\triangleright$~}
\algnewcommand{\LineComment}[1]{\Statex \hskip\ALG@thistlm $\triangleright$ #1}
\algnewcommand{\LineCommentCont}[1]{\Statex \hskip\ALG@thistlm%
	\parbox[t]{\dimexpr\linewidth-\ALG@thistlm}{\hangindent=\trianglerightwidth
		\hangafter=1 \strut$\triangleright$ #1\strut}}
\makeatother

\section{The DB-Tree Data Structure}\label{sec:db-tree}

In this section, we describe the DB-tree data structure, which we propose to
realize an overlay
index. We first describe it intuitively. Then, we provide a formal description of
the data structure with its invariants. Finally, we describe its fundamental properties.

\subsection{Intuitive Description}\label{ssec:db-tree-intuitive}

A DB-tree stores key-value pairs ordered by key. To simplify the description we
assume that keys are unique and both keys and values have bounded length. 
A DB-tree is a randomized data structure that shares
certain features with \emph{skip
	lists}~\cite{pugh1998skip} and \emph{B-trees}~\cite{comer1979ubiquitous}, which
are widely used data structures to store key-value pairs, in their order. We
start by recalling skip
lists, which are conceptually simpler than DB-trees, and then we show how  
DB-trees derive from them. A formal description of DB-trees is provided in
Section~\ref{ssec:db-tree-formal}.

\begin{figure*}
	\centering
	\includegraphics[width=\textwidth]{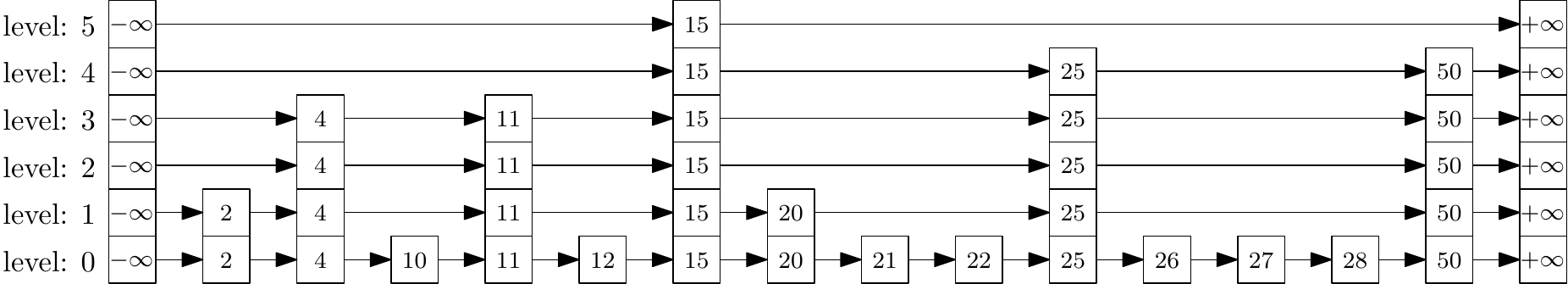}
	\caption{An example of a skip list. For each element, only the key is shown.
		Level 0 also stores values, not shown. } \label{fig:skiplist}
\end{figure*}

%\Figure[t!](topskip=0pt, botskip=0pt, midskip=0pt){fig1.png}
%{An example of a skip list. For each element, only the key is shown. Level 0
%also stores values, not shown.\label{fig:skiplist}}

A skip list is made of a number of linked lists. An example is shown in
Figure~\ref{fig:skiplist},
where linked lists are drawn horizontally.
Each element of those linked lists
stores a key and each list is ordered. Each list is a \emph{level} that is 
denoted by a number. Level zero contains all the keys and it is the only level
that also stores values. Higher levels are progressively decimated, that is,
each level
above zero contains only a fraction of the keys of the level below. In this
way, each key $k$ is associated with a \emph{tower} of elements up to level
$l(k)$, called \emph{height} of the tower. For the example of
Figure~\ref{fig:skiplist}, we have $l(10)=0$ and $l(11)=3$. Beside regular pointers of
linked
lists, in skip lists, elements have also pointers that vertically link elements
of the same tower. 

\algnewcommand\algorithmicswitch{\textbf{switch}}
\algnewcommand\algorithmiccase{\textbf{case}}
\algnewcommand\algorithmicassert{\texttt{assert}}
\algnewcommand\Assert[1]{\State \algorithmicassert(#1)}%
% New "environments"
\algdef{SE}[SWITCH]{Switch}{EndSwitch}[1]{\algorithmicswitch\ #1\
	\algorithmicdo}{\algorithmicend\ \algorithmicswitch}%
\algdef{SE}[CASE]{Case}{EndCase}[1]{\algorithmiccase\ #1}{\algorithmicend\
	\algorithmiccase}%
\algtext*{EndSwitch}%
\algtext*{EndCase}%

\begin{algorithm}[H]
	\caption{Extraction of a random level. }
	\label{algo:random-level}
	
	\begin{algorithmic}[1] 
		
		\Ensure A random level for a skip list or a DB-tree.
		
		\LineCommentCont{We denote by \emph{RandomChoice} a random value in
			$\{\mbox{GO-UP}, \mbox{STOP}\}$,
			where \mbox{GO-UP} is extracted with probability $p$ and \mbox{STOP} with
			probability $1-p$.}
		\State $l \gets 0$
		\While { RandomChoice is GO-UP }
		\State $l \gets l + 1$
		\EndWhile		
		\State \Return $l$
	\end{algorithmic}
	
\end{algorithm}

In traditional skip lists, when
$k$ is inserted, $l(k)$ is randomly selected using the approach described in
Algorithm~\ref{algo:random-level}.
We assume to have a random generator with two possible outcomes: GO-UP, with
probability $p$, and STOP, with probability $1-p$.
Initially, $l(k)$ is set to zero. We iteratively produce a random
outcome and increase $l(k)$ each time we obtain GO-UP. The procedure ends when
we obtain the first STOP.
In this way, each level contains a fraction $p$ of the
elements of the previous level. A typical value for $p$ is $1/2$. In a skip
list, the search of a key $k$ proceeds as follows. 
First, we linearly search the
highest level for the largest key less than or equal to $k$. If we have not
found $k$ yet, we traverse the tower link descending of one level and start to
search again until we either find $k$ (success) or reach level zero and a
key greater than $k$. This procedure takes $O(\log n)$, where $n$
is the number of keys in the skip list. Deletion and insertion can be also
performed in $O(\log n)$ (further details can be found in~\cite{pugh1998skip}).

Skip lists are regarded as efficient and easy-to-implement data structures,
since 
they do not need any complex re-balancing.
However, they are meant to be stored in memory, where traversing pointers is
fast. An attempt to represent a skip list in databases was made
in~\cite{di2007authenticated}, resulting in operations taking  $O(\log
n)$ query rounds, since the execution of each round depends on the result of the previous one.

\begin{figure*}
	\centering
	\includegraphics[width=\textwidth]{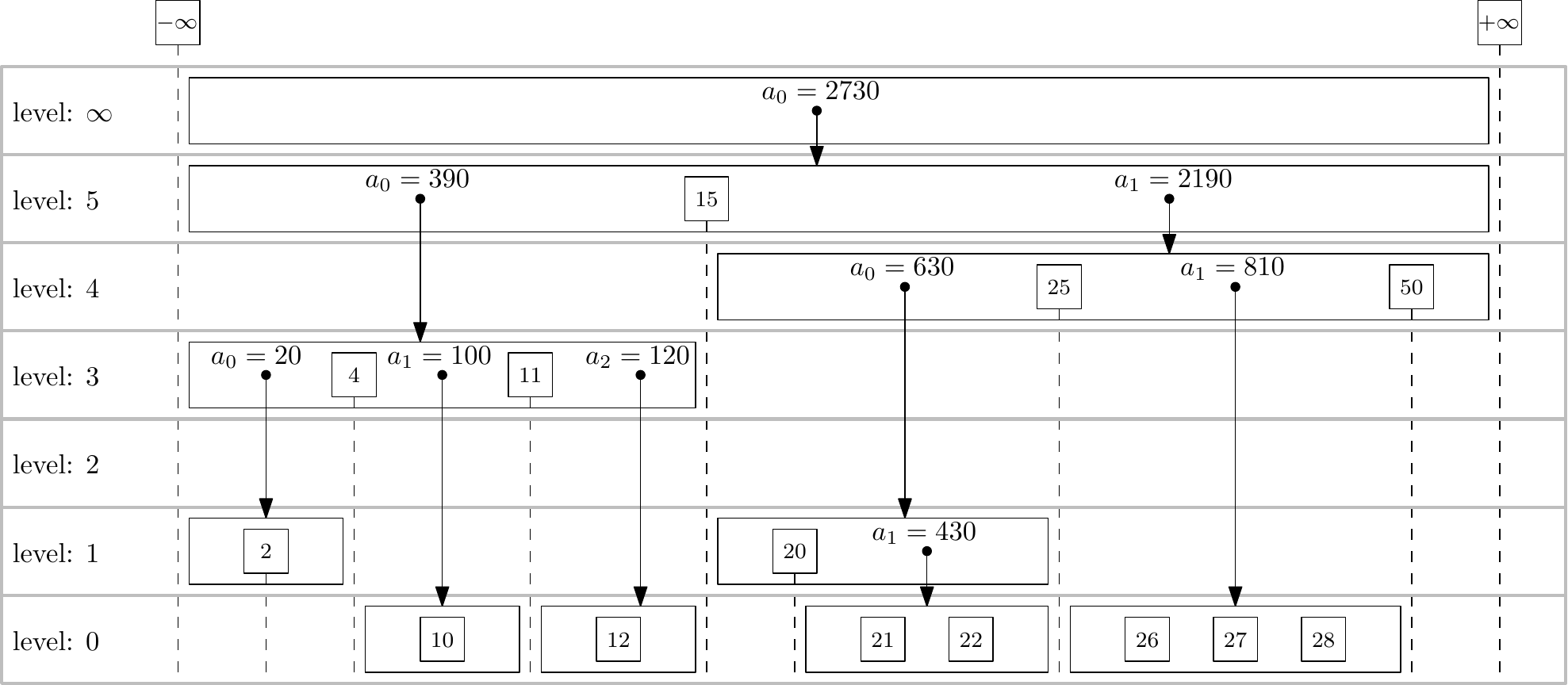}
	\caption{A DB-tree corresponding to the skip list shown in
		Figure~\ref{fig:skiplist} for the SUM aggregation function. For simplicity, we
		do not show values. In this and in the following examples, for each key $k$ the
		corresponding value is intended to be $v=10k$.} \label{fig:db-tree}
\end{figure*}
DB-trees can be interpreted, and intuitively explained, as a clever way of
grouping elements of a corresponding skip list so that update and query
operations can be performed in only one or two query rounds. The DB-tree
corresponding
to the skip list shown in Figure~\ref{fig:skiplist} is shown in
Figure~\ref{fig:db-tree}. To simplify the picture, in this and in the following
examples, we
show only keys and omit values, implicitly intending (only for the purpose of
the examples) that for each $k$ the corresponding value $v$ is derived by $k$ so
that $v=10k$.
We construct a DB-tree starting from a skip list as follows.
Levels of the DB-tree are in one-to-one correspondence with levels of the skip
list.
Only the elements at the top of each skip list tower are represented in the
DB-tree and 
grouped into
\emph{nodes}. Nodes are associated with levels. Each node spans consecutive keys
of its level, but it cannot span 
keys having, in between, others contained in some level above.
In other words, nodes at level $l$ only contains keys $k$ such
that $l(k)=l$.  Each key at level above $l$ separates nodes at level $l$ and
below.  
In Figure~\ref{fig:db-tree}, this separation is represented by vertical dashed
lines.
The number of keys associated with a node is not fixed.
Consider two adjacent, and not consecutive, keys in a node. The missing keys in that
level are represented by nodes at inferior levels, which are \emph{children}
or \emph{descendants} of that node. 

Since we intend to use DB-trees to support aggregate range
queries,  we store in each node the \emph{aggregate values} of the key-value
pairs
that are between two adjacent keys in the node, which are indeed explicitly
contained in
its descendants. We have one aggregate value for each child and the sequence
of keys of a node is interleaved with the aggregate values related to children. 
An aggregate value is omitted if the corresponding child is not present.
In the example of Figure~\ref{fig:db-tree}, aggregate values are shown in the
nodes interleaved with keys. We recall that, for the sole purposes of this
example, values $v$ are derived from the corresponding $k$ ($v=10k$).
\rnote{C3.2}

To realize a DB-tree, the overlay logic (see Section~\ref{sec:overlay-indexes})
stores its data in a single overlay table $T$. We use the relational DBMS jargon
just
for clarity, but we do not restrict the kind of the \emph{underlying DBMS} to
this class. The
value associated
with each key may be
stored in $T$ itself, if the application does not need to perform other queries
that are unrelated with the DB-tree. Otherwise, a regular table $D$ should be
present, and $T$
is treated as an index that should be kept up-to-date with $D$. %In this
%case, the non-key columns of $D$, or a subset thereof, can be considered as
%storing the value associated with the corresponding key and being the object of
% the aggregation queries that the DB-tree should optimize.
In the following, we consider only table $T$, intending that the shown results
can be
applied also if a corresponding $D$ is present. In the rest of the paper, we use
the symbol $T$ to denote also an abstract DB-tree and we denote by $|T|$ the
number
of keys it contains.

\subsection{Formal Description}\label{ssec:db-tree-formal}
A DB-tree contains key-value pairs, where keys are non-null, distinct, and
inherently
ordered, and values are from a set $V$. We intend to support, efficiently,
arbitrary \emph{aggregate range queries}. An aggregate range query performs
aggregation on values
related to keys within a range that is chosen by the user and it is not known
in advance. We assume that the kinds of aggregation queries to support are based
on
a \emph{decomposable aggregation function}\footnote{We define a decomposable
	aggregation function in a similar way as in distributed systems
	literature, see for example~\cite{jesus2014survey}.}  
A decomposable aggregation function $\alpha:V^n \rightarrow C $ is
obtained by composing a triple of functions $\langle f, g, h \rangle$ defined as
follows.
\begin{itemize}
	\item $f: A \times A \rightarrow A$ is associative, that is  
	$f(a,b,c)=f(f(a,b),c)=f(a,f(b,c))$, and there exists an identity element denoted
	by $1_A$, that is for any $x \in A: f(1_A, x)= f(x,1_A)= x$. The associativity
	of $f(a,b)$ allows us to write $f(a_1,\dots,a_n)$, since the grouping according
	to which $f$ is applied is irrelevant for the final result.
	\item $g: V \rightarrow A$.
	\item $h: A \rightarrow C$.
\end{itemize}

We define $\alpha(v_1, \dots,v_n) = h(f(g(v_1),\dots,g(v_n)))$. We call $f$ the
\emph{core aggregation function}. \rnote{C2.1}In the following, by 
\emph{aggregation function}, we refer to either the single core aggregation
function $f$ or the whole decomposable aggregation function (i.e., the triple).
The
actual meaning should be clear from the context. 
Depending on $g$, $V$ may or may not comprehend the null value.
In the following, results of evaluation of $f(\cdot)$ are called \emph{aggregate
	values}.

This model can support many practical aggregation functions, like, count, sum,
average, variance, top-$n$, etc., which are referred to as \emph{distributive} and
\emph{algebraic} aggregation functions in literature~\cite{gray1997data}.
\rnote{C3.5 C2.1}For example, we can support top-2 (giving the first and second maximum) with the following definitions.
\begin{itemize}
	\item  $A = \{V,\bot\} \times \{V,\bot\}$, where
	we intend that the first element of each pair is the \emph{maximum}, the second
	is the \emph{second maximum}, $\bot$ means \emph{undefined}, and $\bot$ is less
	than
	any element in $V$,
	\item $g(v)= (v, \bot)$,
	\item $h(x)=x$,
	\item $f( (v_1,v_2) , (v_3,v_4) )= (m_1, m_2)$ where $m_1=\max\{v_1,v_3\}$ and
	$m_2$ is the second maximum in
	$U=\{v_1,v_2,v_3,v_4\}$, that is $m_2=\max( U-\{m_1\})$, and
	\item the identity element is $(\bot,\bot)$.
\end{itemize}
The above definitions can be easily generalized to support top-$n$.
\rnote{C3.10}Our model does not directly support so-called \emph{holistic}
aggregation
functions. In this kind of aggregation functions,  there is no constant bound on
the size of the storage needed to describe a sub-aggregate~\cite{gray1997data}. 
For this reason, they are
commonly recognized as hard to optimize (see for
example~\cite{wesley2016incremental,li2005no,chiou2001optimization}).
Examples of this kind of aggregation functions are median, n-th percentile, and
mode. 
%\todo{IMPORTANTE: verificare terminologia standard
%in~\cite{palpanas2002incremental} (Sect. classes of aggregate functions)   
%~\cite{lopez2005spatiotemporal}}

\rnote{C3.5}We are now ready to formally describe the DB-tree data structure.
DB-trees support any aggregation function that fits the definition of
decomposable aggregation function stated above. \rnote{C3.6} We assume that the
aggregation function to be supported is known before the creation of the
DB-tree,
or at least $g$ and $f$ are known.

A DB-tree, keeps certain aggregate values ready to be used 
to compute aggregate range queries on any range, quickly. Its distinguishing
feature with respect
to other results known in literature is that it is 
intended to be efficiently stored and managed in
databases. While typical research works in this area show data structures whose
elements
are accessed by memory pointers (for in-memory data structures) or by block
addressing (for disk/filesystem based data structures), the primary mean to
access the elements of a DB-tree is
by range selection queries provided by the DBMS itself.

We do not restrict the kind of \emph{underlying database} that can be used to
store a DB-tree. However, for simplicity, we describe our model using
terminology taken from relational databases. We only require the underlying 
database to have %
\begin{inparaenum}[(i)]
	\item the ability to perform range selection on several columns, efficiently, 
	which is usually the case when proper indexes are defined
	on the relevant columns, and
	\item the ability to get, efficiently, the tuple containing the maximum
	(or minimum) value for a certain column among the selected tuples.
\end{inparaenum}

\newcommand{\standsfor}[1]{\mathrm{standsfor}(#1)}
\newcommand{\range}[1]{\mathrm{range}(#1)}

\rnote{C2.1} A DB-tree $T$ \emph{contains} a sequence of key-value pairs,
ordered according to their keys. A DB-tree is logically made of \emph{nodes}
forming a rooted tree. Each node $n$ of a DB-tree \emph{stands for} a
subsequence of contiguous key-value pairs of $T$, denoted by $\standsfor{n}$. 

\rnote{C2.1} A node $n$ is associated with an open interval $(n.\mathrm{min},
n.\mathrm{max})$ called \emph{range}, denoted $\range{n}$, where
$n.\mathrm{min}$ and $n.\mathrm{max}$ are either keys in $T$ or can assume
values $-\infty$ or $+\infty$. In any case, it should hold that
$n.\mathrm{min}<n.\mathrm{max}$. A node $n$ stands for the subsequence of
key-value pairs in $ T $ whose keys are strictly contained in $\range{n}$.
In other words, $n.\mathrm{min}$ is the key in $T$ right before $\standsfor{n}$,
or $-\infty$ if it does not exist, and  $n.\mathrm{max}$ is the key in $T$ right
after $\standsfor{n}$, or $+\infty$ if it does not exist.

A node $n$ explicitly \emph{contains} only some key-value pairs among those of
$\standsfor{n}$. The key-value pairs that are explicitly contained in $n$, do not need to be
necessarily contiguous in $ \standsfor{n} $. Those that are not explicitly contained in $ n $ are contained in 
nodes that are descendants of $n$.
The root of $T$ stands for the whole sequence contained in $T$ and has range
$(-\infty,+\infty)$. A node $n$ is associated with its \emph{aggregate
	sequence}, which is derived from $\standsfor{n}$ by substituting the
key-value pairs that are not explicitly contained in $n$ with the corresponding aggregate values. More
formally, the aggregate sequence for
a node $n$ is $a_0, p_1, a_1, \dots, a_{m-1}, p_m, a_m$, denoted
$n.\mathrm{aseq}$, where  $p_i$'s are key-value pairs $\langle k_i, v_i\rangle$,
$m$ is the number of keys in $n.\mathrm{aseq}$, and $a_i$'s are aggregate values
(note that subscripts indicate positions of key-value pairs within
$n.\mathrm{aseq}$ and not within the whole sequence in $T$). 
We say that $n$ \emph{contains} a key $k$ when $k$ is in $n.\mathrm{aseq}$.
\rnote{C2.1}  Some of the aggregate values in $n.\mathrm{aseq}$ may be \emph{missing}, as it
is explained in the following.
It is worth noting that, if $k_1,k_2,\dots, k_m$ are contained in $n$, it should
hold that
$n.\mathrm{min}<k_1<k_2<\dots<k_m<n.\mathrm{max}$. 
The number $m$ of key-value pairs contained in a node is not the same for all
nodes and may vary when the DB-tree is updated. The value $m$ related to a node
$n$ is denoted $n.m$.

For each node $n$, the children of $n$ are in one-to-one correspondence with
aggregate values in $n.\mathrm{aseq}$.
We denote $n_i$ the child of $n$ associated with aggregate value $a_i$ in
$n.\mathrm{aseq}$.
Keys in $n.\mathrm{aseq}$ impose limits on the keys contained in the children.
\rnote{C2.1}Namely, $n_i.\mathrm{min}=k_i$ and $n_i.\mathrm{max}=k_{i+1}$, but for $n_0$,
for which $n_0.\mathrm{min}=n.\mathrm{min}$, and $n_m$, for which
$n_m.\mathrm{max}=n.\mathrm{max}$, respectively.
If $k_i$ and $k_{i+1}$ are consecutive in $\standsfor{n}$, 
the $i$-th aggregate value in $n.\mathrm{aseq}$ is \emph{missing}, and the
corresponding child is also \emph{missing}.  If an aggregate value and its
corresponding child are not missing we say that they are \emph{present}.
The (present) aggregate value $a_i$ is the value of the core aggregation
function on
$\standsfor{n_i}$. In practice, exploiting the associative property of
$f(\cdot)$, we compute $a_i$ on the basis of $n_i.\mathrm{aseq}$.
Let $n.\mathrm{aseq}=a_0, \langle k_1,v_1 \rangle, a_1, \dots, a_{m-1}, \langle
k_m,v_m \rangle , a_m$, we define $f(n)=f(a_0, g(v_1), a_1, \dots, a_{m-1},
g(v_m), a_m)$, where we intend that any missing aggregate value should be
omitted from the list of the arguments of $f(\cdot)$.\ Hence, we can write $a_i
= f(n_i)$.

Each node $n$ has a level denoted $n.\mathrm{level}\geq 0$, which is $\infty$
for the root. The children of $n$ have levels that are strictly lower than
$n.\mathrm{level}$. We say that a node $n'$ is \emph{above} (\emph{below}) $n$
if level of $n'$ is greater (lower) than the level of $n$.
% and $n'$ is ancestor (descendant) of $n$. 

When a key $k$ is inserted into a DB-tree, it is associated with a randomly
selected level obtained using Algorithm~\ref{algo:random-level}. 
The level of $k$ is denoted $l(k)$.
Key $k$ is then inserted in a node at that level.
%In the following, we summarize all invariants of DB-trees, then we show some
%fundamental theoretical properties. Finally, we describe the algorithms to
%query and change 
%a DB-tree.

\subsection{Summary of Invariants}\label{ssec:invariants}

We now formally summarize the invariants that must hold for each node $n$ in a
DB-tree $T$.
In the following $m=n.m$, $l=n.\mathrm{level}$, 
$n.\mathrm{aseq} = a_0, \langle k_1, v_1\rangle, a_1, \dots, a_{m-1}, \langle
k_m, v_m\rangle, a_m$ where some $a_i$ are possibly missing, as explained above.
Children of $n$ are denoted $n_0,\dots,n_m$ and some of them may be possibly
missing. 
\begin{enumerate}
	
	\item \label{inv:root} If $n$ is the root of $T$, $n.\mathrm{min}=-\infty$,
	$n.\mathrm{max}=+\infty$, 
	$n.\mathrm{level}=+\infty$. If $T$ is empty, $n.m=\bot$, $n.\mathrm{aseq}$ is
	an empty sequence and $n$ has no children. If $T$ is not empty, $n.m=0$,
	$n.\mathrm{aseq}=a_0$ and $n$ has one child.
	
	\item \label{inv:keys-order-range} If $ n $ is not root, $n.m>0$ and
	\\$n.\mathrm{min}<k_1<k_2<\dots<k_m<n.\mathrm{max}$.
	%   \item if $n_0$ is present, $n_0.\mathrm{min}= n.\mathrm{min}$, for $n$ non
	%root of $T$
	%	\item if $n_m$ is present, $n_m.\mathrm{max}= n.\mathrm{max}$, for $n$ non
	%root of $T$
	%	\item if $n_i$ (with $0<i<m$) is present, $n_i.\mathrm{min}= k_i$ and
	%$n_i.\mathrm{max}= k_{i+1}$, for $n$ non root of $T$
	
	\item \label{inv:child-range} If $n_i$ (with $0\leq i \leq m$) is present,
	$n_i.\mbox{level} < n.\mbox{level}$,\\
	$n_i.\mathrm{min}= 	\begin{cases}
	n.\mathrm{min}& \text{if } i=0\\
	k_i           & \text{otherwise}
	\end{cases}$ \\
	and\\
	$n_i.\mathrm{max}= \begin{cases}
	n.\mathrm{max}& \text{if } i=m\\
	k_{i+1}           & \text{otherwise}
	\end{cases}$
	
	\item \label{inv:recursive-aggregation}for all $i=0,\dots,m$, $a_i$ is present
	iff $n_i$ is present and $a_i= f(n_i)$
\end{enumerate}

\subsection{Fundamental Properties}\label{ssec:DBtrees:fundamental-properties}
In this section, we introduce some fundamental properties that are important for
proving the efficiency of the algorithms described in
Section~\ref{sec:algorithms}.

The following property is a direct consequence of
Invariants~\ref{inv:keys-order-range} and~\ref{inv:child-range}.
\begin{property}[Range Monotonicity]\label{prop:range_ancestors}
	Given a node $n$ of a DB-tree and $n'$ parent of $n$,
	$\range{n}\subseteq\range{n'}$.
\end{property}

Now we analyze the relationship among nodes and between nodes and keys.
Let $n$ be a node that contains $k$. Key $k$ cannot be contained in a
node that is above $n$, since for all nodes above $n$ (which have $k$ in their
range) $k$ is represented by an aggregate value (see
Invariants~\ref{inv:recursive-aggregation}). Key $k$
cannot be contained in a node that is below $n$, since it does not exist any of
those nodes whose range contains $k$ (see Invariants~\ref{inv:child-range} plus
the definition of $\range{\cdot}$ as an open interval).
From the above considerations, the following properties hold.

\begin{property}[Unique Containment]\label{prop:uniqueness}
	A key $k$ contained in a DB-tree $T$ is contained in one and only one node of
	$T$.
\end{property}

\begin{property}[Lowest Level]\label{prop:lowest-level}
	The node that contains a key $k$ is the one with minimum level among those that
	have $k$ in their range.
\end{property}

\rnote{C3.3} Concerning space occupancy, we note that in a DB-tree, a node
contains one or more key-value pairs and each of them is contained in at most
one node. This means that nodes are at most as many as the key-value pairs
stored in the DB-tree. Hence, the space occupancy for a DB-tree is $O(|T|)$, in
the worst case. In Section~\ref{sec:exepriments}, we also provide some details
about space occupancy in practice.

To reason about the size of the data that are transferred between the overlay
logic and the DBMS, it is  useful to
state the following property.

\begin{property}[Expected Node Size]\label{prop:constant-size-node}
	The expected number of keys that are contained in each node of a DB-tree is the
	same for all nodes.
	%	 and is equal to $1/p-1$, if the number
	%	of keys contained in the DB-tree is large. 
\end{property}

This property can be derived from a consideration on the homogeneity of the
levels. The number of
keys contained in a node at level $i$ depends from the probability $p$ with
which
a key has level greater than $i$. In fact, these are the keys that partition
keys of level $i$ into distinct nodes. 
Since $p$ does not depends on $i$, this proves the property.

%Each node at level $=i$ contains the keys between two keys that have level
%$>i$. 
%Let $m$ be the total number of keys at level $\geq i$, we expect $m p$ to be
%keys with level $>i$. The keys at level $=i$ is expected to be $m-mp$.
%These keys are partitioned into expected $pm+1$ nodes.
%The expected number of keys in a node at a generic level $i$ is given by
%$m(1-p)/(pm+1)$, which equals $1/p-1$ for large $m$.

The following lemma states a fundamental result on which the efficiency of
querying a DB-tree is based.

\begin{lemma}[Expected maximum level]\label{lem:expected-maximum-level}
	Given a set of $r$ keys ${k_1, \dots, k_r}$ contained in a DB-tree $T$, the 
	expected value of $\max\{l(k_1), \dots, l(k_r)\}$, where $l(k_i)$ is the level
	of $k_i$, is $O(\log r)$.
\end{lemma}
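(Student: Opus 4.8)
The plan is to exploit that the levels $l(k_1), \dots, l(k_r)$ are independent and identically distributed according to the geometric law induced by Algorithm~\ref{algo:random-level}. Since the level of a key is the number of consecutive GO-UP outcomes preceding the first STOP, each level satisfies $\Pr[l(k_i) \geq j] = p^j$ for every integer $j \geq 0$, where $0 < p < 1$, and the levels of distinct keys are determined by independent coin flips performed at insertion time. Writing $M = \max\{l(k_1), \dots, l(k_r)\}$, the claim reduces to bounding the expectation of the maximum of $r$ i.i.d.\ geometric variables.

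First I would use the tail-sum formula for the expectation of a non-negative integer-valued random variable, $E[M] = \sum_{j \geq 1} \Pr[M \geq j]$. The event $\{M \geq j\}$ holds exactly when at least one of the $r$ keys reaches level $j$, so by the union bound $\Pr[M \geq j] \leq \min\{1,\, r\, p^j\}$. These two bounds trade dominance at the threshold $j^\star = \lceil \log_{1/p} r \rceil$, which is $O(\log r)$ because $\log(1/p)$ is a positive constant; note that $r\, p^{j^\star} \leq 1$ by the choice of $j^\star$.

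Then I would split the tail sum at $j^\star$. For $j \leq j^\star$ I bound each term trivially by $1$, contributing at most $j^\star = O(\log r)$. For $j > j^\star$ I use $\Pr[M \geq j] \leq r\, p^j$ and sum the resulting geometric series: $\sum_{j > j^\star} r\, p^j = r\, p^{j^\star+1}/(1-p) \leq p/(1-p)$, a constant independent of $r$ since $r\,p^{j^\star} \leq 1$. Adding the two contributions yields $E[M] \leq j^\star + p/(1-p) = O(\log r)$, as required.

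The computation is essentially routine once the geometric tail and the union bound are in place; the only point that requires care is the independence of the levels. Each level is fixed at the moment its key is inserted and is never altered afterwards, so conditioning on the $r$ keys being present in $T$ does not disturb the product form $\Pr[M \leq j] = (1 - p^{j+1})^r$ nor the union bound above. The substantive idea, rather than any technical difficulty, is the choice of the threshold $\log_{1/p} r$: below it the trivial bound already supplies the logarithmic term, while above it the expected count $r\,p^j$ of keys reaching level $j$ decays geometrically, so the remaining tail contributes only $O(1)$.
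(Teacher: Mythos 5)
Your proof is correct, but it takes a genuinely different and more elementary route than the paper's. The paper writes down the exact expectation $\sum_{i=0}^\infty i\bigl((1-p^{i+1})^r-(1-p^{i})^r\bigr)$, rewrites it as the alternating binomial sum $\sum_{k=1}^{r}(-1)^{k-1}\binom{r}{k}\frac{p^k}{1-p^k}$, and then delegates the asymptotic analysis of that sum to an external reference, invoking the N\o rlund--Rice integral machinery of Flajolet and Sedgewick; the authors explicitly remark that they could not find an elementary proof in the skip-list literature. What you give is precisely such an elementary proof: the tail-sum identity $E[M]=\sum_{j\geq 1}\Pr[M\geq j]$, the union bound $\Pr[M\geq j]\leq\min\{1,\,r p^j\}$, and a split at the threshold $j^\star=\lceil\log_{1/p} r\rceil$, yielding the explicit bound $E[M]\leq j^\star+p/(1-p)$. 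Your observation that the levels are fixed independently at insertion time matches the independence assumption the paper also relies on (it is what justifies their $(1-p^{i+1})^r$ formula). The trade-off is the usual one: the paper's exact-sum route can in principle extract sharper asymptotics (the leading constant and the oscillating lower-order terms that make such sums ``subtle''), whereas your argument is self-contained, avoids any citation of analytic machinery, and suffices entirely for the stated $O(\log r)$ claim.
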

In other words, considering $r$ keys, the expected value of the maximum of their
levels is logarithmic in $r$. Note that, this result also holds for skip lists,
but we were not able to find its proof in standard skip list literature. For
example,
in~\cite{pugh1998skip}, the proposed method for dimensioning the number of
levels of a skip list focuses on the level whose expected number of elements is
$1/p$, which is $O(\log n)$ with $n$ the number of keys in the skip list.
While this approach is viable for the maximum level, it cannot be applied 
when we should measure the number of levels spanned by a subset of the keys that
are contained in a much larger skip list or DB-tree.

The probability that $r$ keys are all at levels less than or equal to $i$ is
$(1-p^{i+1})^r$, since all random levels are independent. The probability that
the maximum is at level $i$ is  $(1-p^{i+1})^r-(1-p^{i})^r$, since this is the
probability of having all keys below $i+1$ minus the probability that all of
them are below $i$. Hence, the expected maximum level for $r$ keys can be
expressed as $\sum_{i=0}^\infty{i\left((1-p^{i+1})^r-(1-p^{i})^r\right)}$, which
can be rewritten as $\sum_{k=1}^{r}(-1)^{k-1}\binom{r}{k}\frac{p^k}{1-p^k}$ by
expanding binomials powers, reordering, and solving the infinite sum.
Sums similar to this 
are known to be quite subtle to deal with.
Flajolet and Sedgewick have provided several examples of 
how to tackle this kinds of sums in~\cite{flajolet1995mellin} using 
a mathematical tool called ``Nørlund--Rice integral''.
The proof that the above sum is $O(\log r)$ can be found
in~\cite{stackexchange3269979}.

\begin{figure*}
	\centering
	\begin{subfigure}[b]{.48\textwidth}
		\includegraphics[trim={2cm 0 6cm
			0},clip,width=.95\linewidth]{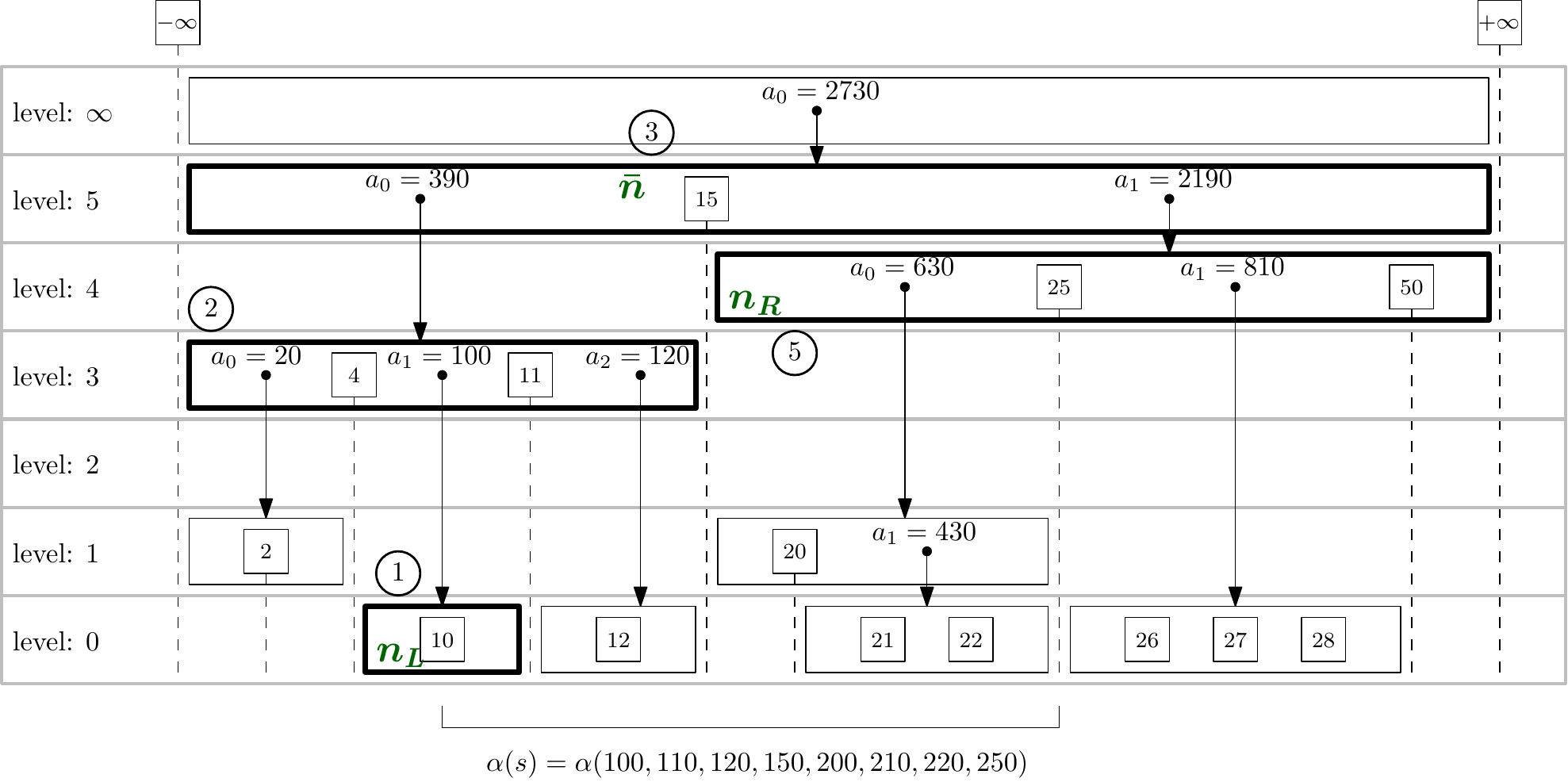}
		\caption{The nodes contributing to the construction of the sequence $s$,
			i.e., in the set $N$, are highlighted with thick contours.\\}
	\end{subfigure}%
	\hfill
	\begin{subfigure}[b]{.48\textwidth}
		
		\begin{enumerate}
			\item $n_L$: $s=[100]$
			\item ancestor of $n_L$: $s = s \| [110, a_2]= [100, 110, 120]$
			\item $\bar n$: $s = s \| [150]= [100, 110, 120, 150]$
			\item no nodes are between of $\bar n$ and $n_R$ 
			\item $n_R$: $s = s \| [a_0, 250]= [100, 110, 120, 150, 630, 250]$
		\end{enumerate}
		
		%$\mathbf{s = \{100, 110,  a_2, 150, a_0, 250\} =  \{100, 110,  120, 150, 630,
		%250\} }$\\$\mathbf{= \{100, 110, SUM(120), 150, SUM(200, a_1), 250\}
		%}$\\$\mathbf{= \{100, 110, 120, 150, 200, 210, 220, 250 \} = \{v_1, v_2 ,
		%\dots, v_r\}}$
		
		%\includegraphics[page=2, width=.95\linewidth]{figures/ProofLemma.pdf}
		\vskip 1cm
		\caption{The steps for the construction of the sequence $ s $ for the example.
			Sequences of values are shown within square brackets. Concatenation of sequences
			is denoted by $||$.}
		
	\end{subfigure}
	\caption{An example of construction of an aggregate range query according to 
		the proof of Lemma~\ref{lem:aggreagate-range-query}. 
		In this example, the query regards the sum of the values of the keys in the
		range from 10 to 25.  The aggregation function is SUM and for each $k$ its value
		is $v=10k$.}
	\label{fig:prooflemma}
\end{figure*}

The following lemma is relevant to understand the correctness of the query
procedure that is algorithmically introduced in Section~\ref{sec:algorithms}.

\begin{lemma}[Aggregate Range Query
	Construction]\label{lem:aggreagate-range-query}
	Given a sequence of keys $k_1<k_2<\dots<k_r$ contained in a DB-tree $T$ with
	their associated values $v_1,\dots,v_r$, 
	the aggregation function $\alpha(v_1,v_2,\dots,v_r)$ can be computed 
	from selected parts of the aggregate sequences of a set of nodes $N$ of $T$
	containing
	\begin{itemize}
		\item the node that contains $k_1$, denoted $n_L$,
		\item the node that contains $k_r$, denoted $n_R$, 
		\item the common ancestor of $n_L$ and $n_R$ with minimum level, denoted $\bar
		n$,
		\item the ancestors of $n_L$ and $n_R$ up to $\bar n$.
	\end{itemize} 
\end{lemma}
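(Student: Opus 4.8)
The plan is to reduce the claim, via the decomposability of $\alpha$, to a purely combinatorial statement about tiling the queried key range, and then to prove that tiling by descending the tree from $\bar n$. Since $\alpha(v_1,\dots,v_r)=h(f(g(v_1),\dots,g(v_r)))$ and $h$ is applied only once at the end, it suffices to compute the core aggregate $F=f(g(v_1),\dots,g(v_r))$ from the nodes in $N$ and then apply $h$. I would express $F$ as an $f$-composition of two kinds of pieces, each read off from the aggregate sequence of some node in $N$: (i) an explicitly contained key-value pair $\langle k_i,v_i\rangle$, contributing the term $g(v_i)$, and (ii) an entire child subtree all of whose keys fall inside $[k_1,k_r]$, contributing its stored aggregate value $a_j$, which by Invariant~\ref{inv:recursive-aggregation} equals $f(n_j)$, i.e. the core aggregate over $\standsfor{n_j}$. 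Because this is a range query, $k_1,\dots,k_r$ are by hypothesis exactly the keys of $T$ lying in the closed range $[k_1,k_r]$, so it is enough to show that these pieces tile that set of keys exactly, after which the associativity of $f$ finishes the argument.

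Second, I would set up the geometry. Let the \emph{left path} be $\bar n=p^L_0,p^L_1,\dots,p^L_t=n_L$, where each $p^L_{j+1}$ is the child of $p^L_j$ whose range contains $k_1$, and symmetrically the \emph{right path} $\bar n=p^R_0,\dots,p^R_s=n_R$ following $k_r$; these paths are well defined and, by Property~\ref{prop:lowest-level}, terminate exactly at $n_L$ and $n_R$, so their union is precisely the node set $N$ of the statement. At $\bar n$ the descents toward $k_1$ and $k_r$ enter distinct children $c_L=p^L_1$ and $c_R=p^R_1$, and the explicit keys of $\bar n$ lying in $[k_1,k_r]$ form a nonempty contiguous block of $\bar n.\mathrm{aseq}$ (the separating key between $c_L$ and $c_R$ is one such). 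From $\bar n$ I collect $g$ of each in-range explicit key together with the stored aggregate of every child strictly interior to that block, whereas the straddling children $c_L,c_R$ are descended into rather than taken. Along the left path, at each $p^L_j$ below $\bar n$ I take the suffix of its aggregate sequence from the descent point rightward; Property~\ref{prop:range_ancestors} and Invariant~\ref{inv:child-range} give $p^L_j.\mathrm{max}\le c_L.\mathrm{max}\le k_r$, so every collected key is genuinely $\le k_r$. The right path is handled symmetrically with prefixes and the bound $p^R_j.\mathrm{min}\ge c_R.\mathrm{min}\ge k_1$.

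Third comes correctness, phrased as \emph{exact tiling}: the collected explicit keys and collected child subtrees are pairwise disjoint and their union is exactly $\{k_1,\dots,k_r\}$. Disjointness follows from Property~\ref{prop:uniqueness} together with the open-interval partition of a node's range by its children's ranges (Invariant~\ref{inv:child-range}). Coverage I would prove by induction down the two paths: at $\bar n$ every key in $[k_1,k_r]$ is either an in-range explicit key, or lies in an interior child (taken as an aggregate), or lies in $c_L$ or $c_R$; the inductive hypothesis applied to $c_L$ (resp. $c_R$) states that the part of $[k_1,k_r]$ inside its range is tiled by the suffix (resp. prefix) collected along the remainder of the path, the base case being $n_L$ (resp. $n_R$), where $k_1$ (resp. $k_r$) is explicitly contained by Property~\ref{prop:lowest-level} and the full in-range suffix (resp. prefix) is taken. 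Once the tiling is established, associativity of $f$ lets me compose all collected terms in key order; each stored $a_j$ equals $f$ over the keys of $\standsfor{n_j}$ by Invariant~\ref{inv:recursive-aggregation}, so the composition equals $F$, and applying $h$ yields $\alpha(v_1,\dots,v_r)$.

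The main obstacle I anticipate is the boundary bookkeeping rather than any deep idea: making the recursion and the notion of ``slice up to the descent point'' precise, and discharging the degenerate cases $n_L=\bar n$, $n_R=\bar n$, and $n_L=n_R$ (a single node, where one takes the contiguous in-range slice of one aggregate sequence), together with the use of the identity $1_A$ when a straddling child happens to be missing. I would organize all of these uniformly through the induction on path length, so that each node of $N$ contributes exactly one contiguous slice of its aggregate sequence; this simultaneously guarantees the exactness of the tiling and confirms that $N$ is precisely the set of nodes listed in the statement.
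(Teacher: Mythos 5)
Your proposal is correct and follows essentially the same route as the paper's proof: both decompose $\{k_1,\dots,k_r\}$ into the in-range suffix of $n_L.\mathrm{aseq}$, suffixes at the ancestors of $n_L$, the middle slice of $\bar n.\mathrm{aseq}$ between the two straddling children, prefixes at the ancestors of $n_R$, and the in-range prefix of $n_R.\mathrm{aseq}$, and then conclude by Invariant~\ref{inv:recursive-aggregation} and the associativity of $f$. The only difference is presentational: the paper builds the witness sequence $s$ left-to-right and asserts the tiling implicitly, whereas you descend from $\bar n$ and make the disjointness-and-coverage induction explicit, which is a slightly more rigorous rendering of the same argument.
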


\begin{proof}
	We build a sequence $s$, that contains some of the values in
	$\{v_1,v_2,\dots,v_r\}$ and the missed values are substituted by aggregate
	values form nodes of $N$. 
	We show that $\alpha(s)=\alpha(v_1,v_2,\dots,v_r)$.
	We build $s$ from the aggregate sequences of nodes in $N$.
	Sequence $s$ is built from left to
	right. An example of the construction described in this proof is
	shown in Figure~\ref{fig:prooflemma}.
	
	We start with $s$ containing, from $n_L.\mathrm{aseq}$, value $v_1$ associated
	with $k_1$
	and all values (aggregated or not) at its right, in their order. 
	We proceed by considering ancestors of $n_L$ in $N$ excluding $\bar n$, ordered
	by increasing
	level. For each of them, denoted by $n$, let $n'$ be its descendant in $N$. We
	add to $s$, from $n.\mathrm{aseq}$, the value associated with $n'.\mathrm{max}$,
	and
	all values (aggregated or not) at its right in their order. We do that, if
	$n'.\mathrm{max}$ exists in $n.\mathrm{aseq}$, otherwise we skip to the next
	node.
	Let $n'$ and $n''$ be the two descendants of $\bar n$, with $n'.\mathrm{max}\leq
	n''.\mathrm{min}$.
	We add to $s$, from $\bar n.\mathrm{aseq}$, the value associated with
	$n'.\mathrm{max}$
	and all values (aggregated or not) at its right, in their order, up to the value
	associated with $n''.\mathrm{min}$ ($n'.\mathrm{max}$ and $n''.\mathrm{min}$
	exists in $\bar n.\mathrm{aseq}$ by construction of $\bar n$, $n'$ and $n''$,
	and by Invariant~\ref{inv:child-range}).
	We continue by considering each 
	ancestor $n$ of $n_R$ in $N$, excluding $\bar n$, ordered by decreasing
	level. Let $n'$ be the descendant of $n$ in $N$. We add to $s$, from
	$n.\mathrm{aseq}$, all 
	values (aggregated or not) starting from left up to
	the value associated with $n'.\mathrm{min}$, in their order. We do that, if it
	exists in $n.\mathrm{aseq}$, otherwise we skip to the next node.
	Finally, we add to $s$, from $n_R.\mathrm{aseq}$, all values (aggregated or not)
	starting from left up to
	value $v_r$ associated with $k_r$, in their order.
	To prove that
	$\alpha(s)=h(f(s))=h(f(g(v_1),g(v_2),\dots,g(v_r))=\alpha(v_1,v_2,\dots,v_r)$,
	we 
	inductively apply Invariant~\ref{inv:recursive-aggregation}, to $f(s)$.
	This allows us to express 
	each aggregate value in $s$ as application of $f(\cdot)$ to the aggregate
	sequence of the corresponding child. The associative property of $f(\cdot)$
	ensures that the aggregate value remains the same. 
\end{proof}

Lemma~\ref{lem:aggreagate-range-query} states that it is possible to answer to
an
aggregate range query considering only certain parts of the aggregate
sequences of a small set of nodes $N$. We now estimates the size of $N$. The
following result relates this size to the results stated by
Lemma~\ref{lem:expected-maximum-level}.

\begin{lemma}\label{lem:lowest-common-ancestor-minimum-level}
	Given a sequence of keys $k_1<k_2<\dots<k_r$ contained in a DB-tree $T$, 
	let $n_L$ and $n_R$ be the nodes containing $k_1$ and $k_r$, respectively,
	and call $\bar n$ their common ancestor with minimum level, 
	the level of $\bar n$ is given by the maximum of the levels 
	$l(k_1),l(k_2),\dots,l(k_r)$.
\end{lemma}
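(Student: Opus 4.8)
The plan is to set $M=\max\{l(k_1),\dots,l(k_r)\}$ and prove the two inequalities $\bar n.\mathrm{level}\ge M$ and $\bar n.\mathrm{level}\le M$ separately. Throughout I rely on one basic fact that follows from the insertion rule together with Property~\ref{prop:uniqueness}: the unique node that contains a key $k$ has level exactly $l(k)$, or equivalently every key \emph{explicitly contained} in a node $n$ has level $n.\mathrm{level}$. I also read the statement in the range-query setting of Lemma~\ref{lem:aggreagate-range-query}, where $k_1<\dots<k_r$ are \emph{all} the keys of $T$ lying between $k_1$ and $k_r$; hence no key of $T$ falls strictly between two of the $k_i$. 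This consecutiveness hypothesis is genuinely needed, since an isolated high-level key placed between two low-level query keys would otherwise force the common ancestor arbitrarily high.

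For the lower bound I would argue directly from monotonicity. Since $\bar n$ is a common ancestor of $n_L$ and $n_R$, Property~\ref{prop:range_ancestors} gives $\range{n_L}\subseteq\range{\bar n}$ and $\range{n_R}\subseteq\range{\bar n}$, so $k_1,k_r\in\range{\bar n}$; as $\range{\bar n}$ is an interval and $k_1<k_i<k_r$, every $k_i$ lies in $\range{\bar n}$. By the Lowest Level property (Property~\ref{prop:lowest-level}) the node actually containing $k_i$ has minimum level among all nodes whose range contains $k_i$, whence $l(k_i)\le\bar n.\mathrm{level}$ for every $i$ and therefore $M\le\bar n.\mathrm{level}$.

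The harder direction is the upper bound, and my strategy is to exhibit a \emph{query} key that is explicitly contained in $\bar n$: such a key $\kappa$ has $l(\kappa)=\bar n.\mathrm{level}$ by the basic fact and $l(\kappa)\le M$ by definition of $M$, giving $\bar n.\mathrm{level}\le M$ at once. If $n_L=\bar n$ (which also covers $n_L=n_R$) I use $\kappa=k_1$, and symmetrically $\kappa=k_r$ if $n_R=\bar n$. Otherwise $n_L$ and $n_R$ are proper descendants of $\bar n$; note first that $\bar n$ cannot be the root: by Invariant~\ref{inv:root} the root's only child already has range $(-\infty,+\infty)$ and would be a common ancestor of smaller level, so $\bar n$ is a non-root node and by Invariant~\ref{inv:keys-order-range} has $\bar n.m\ge 1$ explicitly contained keys. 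Because $\bar n$ is their \emph{minimum}-level common ancestor, $n_L$ and $n_R$ must sit in the subtrees of two distinct children $\bar n_a,\bar n_b$ of $\bar n$, and $k_1<k_r$ forces $a<b$. Writing $\kappa_1<\dots<\kappa_{\bar n.m}$ for the keys explicitly contained in $\bar n$, Invariant~\ref{inv:child-range} gives $\bar n_a.\mathrm{max}=\kappa_{a+1}$ and $\bar n_b.\mathrm{min}=\kappa_b$ with $\kappa_{a+1}\le\kappa_b$; combining $k_1\in\range{n_L}\subseteq\range{\bar n_a}$ and $k_r\in\range{n_R}\subseteq\range{\bar n_b}$ then yields $k_1<\kappa_{a+1}\le\kappa_b<k_r$. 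Thus $\kappa=\kappa_{a+1}$ is explicitly contained in $\bar n$ and lies strictly between $k_1$ and $k_r$, so by consecutiveness it is one of the query keys, completing the upper bound. Combining the two bounds gives $\bar n.\mathrm{level}=M$.

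The step I expect to be the real obstacle is precisely this upper bound: one must identify inside $\bar n$ a separator of the two subtrees reaching $n_L$ and $n_R$ that is simultaneously explicitly contained in $\bar n$ (hence at level $\bar n.\mathrm{level}$) and one of the $k_i$ (hence at level $\le M$). The second attribute is exactly where the ``no intermediate key'' hypothesis is indispensable; without it the separator could have level larger than $M$ and the claim would fail. An alternative route for the upper bound is to take the node containing a key of maximal level $M$ and show it is itself a common ancestor of $n_L$ and $n_R$; that version is a little heavier, as it additionally needs the auxiliary observation that the finite endpoints $n.\mathrm{min},n.\mathrm{max}$ of any node have level strictly greater than $n.\mathrm{level}$ (a short induction from the root using Invariant~\ref{inv:child-range}).
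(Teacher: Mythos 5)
Your proof is correct and follows essentially the same decomposition as the paper's: one inequality comes from the fact that every $k_i$ lies in $\range{\bar n}$, so no $k_i$ can have level above $\bar n.\mathrm{level}$, and the other from exhibiting a query key explicitly contained in $\bar n$, whose level therefore equals $\bar n.\mathrm{level}$. The paper merely asserts this second fact (``at least one of the keys is contained in $\bar n$''); your separator argument via the two distinct children $\bar n_a,\bar n_b$ supplies the justification the paper omits, and you are also right to make explicit the hypothesis --- implicit in the paper but used when the lemma is invoked in Lemma~\ref{lem:aggr-range-query-efficiency} --- that $k_1,\dots,k_r$ are \emph{all} the keys of $T$ in $[k_1,k_r]$, without which the separator key could lie outside the query set and the claim would be false.
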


The above lemma is easily proven considering the construction used for proving
Lemma~\ref{lem:aggreagate-range-query}. Each of the keys $k_1,k_2,\dots,k_r$ is
either contained in one of the nodes in $N$ or in one of its descendants. Note
that, at least one of the keys is contained in $\bar n$, which is above of all
other nodes in $N$.
This means the $\bar
n.\mathrm{level}= \max \{ l(k_1),l(k_2),\dots,l(k_r) \}$.

\begin{lemma}[Aggregate Range Query Size]\label{lem:aggr-range-query-efficiency}
	Given a DB-tree $T$ and $k'<k''$ two keys contained in $T$, 
	such that $[k',k'']$ contains $r$ keys in $T$, the aggregate 
	range query on $[k',k'']$ can be answered considering
	expected $O(\log r)$ values spread on expected $O(\log r)$ nodes.
\end{lemma}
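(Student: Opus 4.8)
The plan is to build directly on the three preceding results and reduce everything to the combinatorial size of the node set $N$. First I would invoke Lemma~\ref{lem:aggreagate-range-query}, which already guarantees that the query can be answered correctly from selected parts of the aggregate sequences of the nodes in $N$; so the task reduces to bounding the number of nodes in $N$ and the number of values read from their aggregate sequences. By definition, $N$ is the union of two ancestor chains: one from $n_L$ up to $\bar n$ and one from $n_R$ up to $\bar n$.

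Second, I would bound $|N|$. Along either chain the level strictly increases with each step upward, by Invariant~\ref{inv:child-range} (a child has strictly smaller level than its parent), and levels are non-negative integers. Since $n_L$ and $n_R$ sit at levels $l(k_1)\ge 0$ and $l(k_r)\ge 0$, while $\bar n$ sits at level $L := \max\{l(k_1),\dots,l(k_r)\}$ by Lemma~\ref{lem:lowest-common-ancestor-minimum-level}, each chain visits at most $L+1$ nodes; more precisely, $N$ contains at most two nodes at every level $0\le \ell \le L$ and none above $L$. Hence $|N| \le 2(L+1)$. Taking expectations and applying Lemma~\ref{lem:expected-maximum-level}, which gives $E[L]=O(\log r)$, yields $E[|N|]=O(\log r)$, establishing the bound on the number of nodes.

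Third, for the number of values, I would note that from each node $n\in N$ the construction reads a contiguous stretch of $n.\mathrm{aseq}$, hence at most $|n.\mathrm{aseq}| \le 2\,n.m+1$ values. The total is therefore at most $\sum_{n\in N}(2\,n.m+1) = 2\sum_{n\in N} n.m + |N|$. Property~\ref{prop:constant-size-node} states that the expected number of keys in a node is a constant $\mu$ independent of the node, so heuristically each of the expected $O(\log r)$ nodes contributes $O(1)$ expected values, giving $E[\text{values}]=O(\log r)$.

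The delicate step is making that last product rigorous: $N$ is itself a random set determined by the key levels, and the size $n.m$ of a node is not obviously independent of the event that $n$ belongs to $N$, so I cannot simply multiply $E[|N|]$ by $\mu$. I would handle this by decomposing over levels: write $\sum_{n\in N} n.m = \sum_{\ell=0}^{\infty} S_\ell$, where $S_\ell$ counts the keys held by the at-most-two nodes of $N$ at level $\ell$ and vanishes for $\ell>L$. By linearity, $E[\sum_{n\in N} n.m] = \sum_{\ell\ge 0} E[S_\ell]$, and for each fixed $\ell$ I would bound $E[S_\ell]$ by the expected node size on the event $\{\ell \le L\}$, arguing that the specific chain nodes at level $\ell$ stay of expected constant order (as in Property~\ref{prop:constant-size-node}) while $P(\ell \le L) = 1-(1-p^{\ell})^{r}$ decays geometrically once $\ell$ exceeds the $O(\log r)$ threshold underlying Lemma~\ref{lem:expected-maximum-level}. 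This collapses the product into a single geometric-tail sum that is again $O(\log r)$, which is exactly where I expect the real work to lie.
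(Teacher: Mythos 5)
Your proof follows essentially the same route as the paper's: both reduce to the two ancestor chains meeting at $\bar n$, bound the number of nodes by the maximum level via Lemma~\ref{lem:lowest-common-ancestor-minimum-level} and Lemma~\ref{lem:expected-maximum-level}, and then bound the number of values using Property~\ref{prop:constant-size-node}. The one place you go beyond the paper is the final step: the paper simply multiplies the expected number of nodes by the expected node size, whereas you correctly observe that $N$ is a random set whose membership is not obviously independent of node sizes, and your level-by-level decomposition with a geometric tail is a legitimate (and more rigorous) way to justify the product that the paper takes for granted.
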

\begin{proof}
	The proof of Lemma~\ref{lem:aggreagate-range-query} constructs an aggregated
	sequence $s$ out of the aggregates sequences of nodes in $N$ which is made of
	two descending paths with a common ancestor $\bar n$.
	Lemma~\ref{lem:lowest-common-ancestor-minimum-level} states that the level of
	$\bar n$ is the maximum of the levels assigned to keys in $[k_1,k_r]$, with
	$k'\leq k_1$ and $k_r \leq k''$.
	Lemma~\ref{lem:expected-maximum-level} states that this 
	maximum is expected to be $O(\log r)$. \rnote{C4.1 C2.1}Since each of the two ascending paths
	constructed in the proof of Lemma~\ref{lem:aggreagate-range-query} contains at
	most one node for each level, the maximum number of nodes needed to 
	answer to an aggregate range query is expected $O(\log r)$.
	This proves the statement about the expected number of nodes 
	involved in the query.
	Since the size of each node has constant expected size
	(Property~\ref{prop:constant-size-node}), 
	it can provide only an expected constant number of values. Hence, also the
	number of
	values are expected $O(\log r)$.
\end{proof}

\section{DB-Tree Algorithms}\label{sec:algorithms}

\begin{algorithm}
	\caption{Algorithm for performing an aggregate range query on a DB-tree. }
	\label{algo:aggregate-range-query}
	
	\begin{algorithmic}[1] 
		 
		\Require Two keys $k'$ and $k''$ (with $k' < k''$ ) and a DB-tree $T$.
		\Ensure $\alpha(v_1,\dots, v_r)$ where $\langle k_1,v_1\rangle, \dots, \langle k_r,v_r\rangle$ and 
		$k_1,\dots,k_r$ are all the keys contained in $T$ within $k'$ and $k''$, and
		$k'\leq k_1 < \dots< k_r \leq k''$.

		\LineComment Execute Lines~\ref{line:query:L}-\ref{line:query:-bar-n} in one query round.
		\State \label{line:query:L}$L \gets $ a sequence of nodes resulting from the selection from $T$ of all nodes $n$ such that 
			$n.\mathrm{min} < k' < n.\mathrm{max} \leq k'' $ ordered by ascending level.

		\State \label{line:query:R}$R \gets $ a sequence of nodes resulting from the selection from $T$ of all nodes $n$ such that 
			$k' \leq n.\mathrm{min} < k'' < n.\mathrm{max} $ ordered by ascending level.
			
		\State \label{line:query:-bar-n}$\bar n \gets $ the node $n$ in $T$ such that $n.\mathrm{min} < k' < k'' < n.\mathrm{max} $,
		 with minimum level.

%		\State $L \gets $ a sequence of nodes resulting from the selection from $T$ of all nodes $n$ such that the following three conditions are satisfied.
%		\begin{enumerate}
%			\item $n.\mathrm{max} > k'$ and it has the minimum $n.\mathrm{max}$ among the nodes of the same level,
%			\item $n.\mathrm{min} < k'$
%			\item $ \exists k $ contained in $ n $ such that $k'\leq k \leq k'' $
%		\end{enumerate}
%		\State\Comment L is ordered by ascending $n.\mathrm{level}$.
%		
%
%		\State $R \gets $ a sequence of nodes resulting from the selection from $T$ of all nodes $n$ such that the following three conditions are satisfied.
%		\begin{enumerate}
%			\item $n.\mathrm{min} < k''$ and it has the maximum $n.\mathrm{min}$ among the nodes of the same level,
%			\item $n.\mathrm{max} > k''$
%			\item $ \exists k $ contained in $ n $ such that $k'\leq k \leq k'' $
%		\end{enumerate}
%		\State\Comment R is ordered by ascending $n.\mathrm{level}$.
		
		\State \label{line:query:computation-begin}$a_L \gets 1_A$, where $1_A$ is the identity element of $A$ 
		\ForAll { nodes $n$ in $L$ in ascending order }
			\State \label{line:query:L-groupby-changed} $s \gets 
			\begin{cases}				
				f(v_i, a_i, \dots, v_m, a_m), \mbox{ where } k' \leq k_i, & \mbox{if there exists at least one } \langle k_i,v_i \rangle \mbox{ with } k' \leq k_i\\
				 1_A,  & \mbox{otherwise}
			\end{cases} $%\Comment some $ a_i $ may be missing.
			\State $a_L \gets f(a_L,s)$
		\EndFor
		
		\State $a_R \gets 1_A$ 
		\ForAll { nodes $n$ in $R$ in ascending order }
			\State \label{line:query:R-groupby-changed} $s \gets 
			\begin{cases}				
				f(a_0, v_1, \dots a_{i-1},v_i),  \mbox{ where } k_i\leq k'', & 
				\mbox{if there exists at least one } \langle k_i,v_i \rangle \mbox{ with }  k_i\leq k''\\
				1_A,  & \mbox{otherwise}
			\end{cases} $ %\Comment some $ a_i $ may be missing.
			\State $a_R \gets f(s,a_R)$
		\EndFor
		\State Let $\bar n.\mathrm{aseq}$ be $a_0,\langle k_1,v_1 \rangle, a_1, \dots, \langle k_m,v_m \rangle, a_m$ %\Comment some $a_i$ may be missing.
		\State \label{line:query:n-groupby-changed}$s \gets f( v_i, a_i, \dots a_{j-1},v_j)$ where  $k_i$ is the lowest key such that $k'\leq k_i $ and $k_j$ is the highest key such that $k_j\leq k''$.
		\State \label{line:query:computation-end}\Return $f(a_L, s, a_R)$
	
	\end{algorithmic}
	
\end{algorithm}

In this section, we describe the algorithms to perform queries on a DB-tree and to modify its content. We also state their correctness and efficiency on the basis of the 
fundamental properties provided in Section~\ref{ssec:DBtrees:fundamental-properties}.
At the end of the section, we provide a comparison with similar data structures.

In the following, to simplify the description, we always assume $g(v)=v$ and $h(v)=v$
and never apply $g(\cdot)$ and $h(\cdot)$ explicitly. The adaption to the case of non-trivial $g(\cdot)$ and $h(\cdot)$ is straightforward.

When writing the algorithms, we denote by $a_0, \langle k_1, v_1 \rangle,$ $a_1,
\dots,$ $\langle k_m, v_m \rangle, a_m$ the aggregate sequence of a node $n$
containing $m$ keys. We use that notation implicitly meaning that some $a_j$ may be missing
(see Section~\ref{ssec:db-tree-formal}). For example, using that notation we includes
sequences like $\langle k_1, v_1 \rangle, \langle k_2, v_2 \rangle, a_2$,  for
$m=2$, where $a_0$ and $a_1$ are missing. We also includes the special case in which 
$m=0$, where the sequence is only $a_0$. Analogously, we write
$f(v_i, a_i,\dots,a_{j-1} v_j)$, $f(a_0, v_1,\dots,a_{i-1} v_i)$, and $f(v_i,
a_i,\dots v_m,a_m)$ to aggregate a subsequence of an aggregate sequence of a node, again, implicitly
meaning that some aggregate values may be missing.

\subsection{Aggregate Range Query}

The procedure to compute an aggregate range query is shown in
Algorithm~\ref{algo:aggregate-range-query}. Two keys $k'$ and
$k''$, not necessarily contained in DB-tree $T$, are provided as input. The algorithm computes the
aggregate value for the values of all keys between  $k'$ and $k''$ in $T$. The
procedure closely follows the proof of Lemma~\ref{lem:aggreagate-range-query}.
Firstly, it retrieves the nodes cited in the statement of that lemma, plus some others that
we will prove, are irrelevant. This is
done in Lines~\ref{line:query:L}-\ref{line:query:-bar-n}. These queries can be
performed in parallel in a single query round. 
The size of the data is $O(\log r)$ by Lemma~\ref{lem:aggr-range-query-efficiency}, where $r \leq |T|$ is the 
number of keys between $k'$ and $k''$.
Hence, the following theorem about efficiency of Algorithm~\ref{algo:aggregate-range-query} holds.

\begin{theorem}[Query Efficiency]
	On a DB-tree $T$, an aggregate range query on a range containing $r$ keys in $T$ can be executed, by Algorithm~\ref{algo:aggregate-range-query}, in only one query round and 
	the expected size of the transferred data
	is $O(\log r)$.
\end{theorem}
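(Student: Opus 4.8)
The plan is to treat the theorem's two claims separately: the one-round claim is structural and follows from the shape of Algorithm~\ref{algo:aggregate-range-query}, whereas the $O(\log r)$ data bound reduces, with a little care, to the lemmas already proved.

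First I would establish the single-round claim. The only accesses to the DBMS happen in Lines~\ref{line:query:L}--\ref{line:query:-bar-n}: two range selections producing $L$ and $R$, and one minimum-level selection producing $\bar n$. Each predicate refers only to the query bounds $k'$, $k''$ and to the stored fields $n.\mathrm{min}$, $n.\mathrm{max}$, $n.\mathrm{level}$, and crucially none of the three uses the output of another. Since the underlying database is assumed able to perform range selections and to return the minimum-level selected tuple, each is a single DBMS query, and because they share no data dependency they may be submitted simultaneously, i.e., in one round per the definition in Section~\ref{ssec:architectural-aspects}. I would then point out that Lines~\ref{line:query:computation-begin}--\ref{line:query:computation-end} merely combine, through $f(\cdot)$, values already contained in $L$, $R$ and $\bar n$, issuing no further query; hence the whole procedure runs in exactly one round.

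Next I would bound the number of retrieved nodes. The key observation is that, at any fixed level, the ranges of the nodes are pairwise disjoint open intervals: this follows from Range Monotonicity (Property~\ref{prop:range_ancestors}) together with Invariant~\ref{inv:child-range}, which makes siblings share only their boundary keys. Consequently at most one node per level has $k'$ strictly in its range, and since the predicate defining $L$ is exactly ``$k'$ in range and $n.\mathrm{max}\le k''$'', the set $L$ contains at most one node per level; the symmetric statement with $k''$ bounds $R$. Moreover every node of $L$ or $R$ sits at a level strictly below $\bar n.\mathrm{level}$, because a node containing $k'$ at level $\ge\bar n.\mathrm{level}$ is $\bar n$ or one of its ancestors, all of which have $\mathrm{max}>k''$ and thus fail the predicate of $L$ (symmetrically for $R$). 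Hence $|L|,|R|\le\bar n.\mathrm{level}$ and $|L\cup R\cup\{\bar n\}|\le 2\,\bar n.\mathrm{level}+1$.

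To conclude, I would rewrite $\bar n.\mathrm{level}=\max\{l(k_1),\dots,l(k_r)\}$ using Lemma~\ref{lem:lowest-common-ancestor-minimum-level} and invoke Lemma~\ref{lem:expected-maximum-level} to see that this maximum, and hence the node count, is expected $O(\log r)$. Since each node stores an aggregate sequence whose length is proportional to its number of keys, which is of constant expected size by Property~\ref{prop:constant-size-node}, the expected total transferred data is $O(\log r)$, exactly as argued in the proof of Lemma~\ref{lem:aggr-range-query-efficiency}. I expect the main obstacle to be this penultimate paragraph: the selection predicates of Lines~\ref{line:query:L} and~\ref{line:query:R} legitimately pull in nodes that do \emph{not} belong to the set $N$ of Lemma~\ref{lem:aggreagate-range-query} (the ``irrelevant'' nodes lying to the left of $k_1$ or to the right of $k_r$), so one cannot simply cite that lemma; the ``at most one node per level'' argument is what caps the entire retrieved chain---relevant and irrelevant nodes together---by $\bar n.\mathrm{level}$, keeping the bound intact.
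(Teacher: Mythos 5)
Your proof is correct, and for the data-size bound it takes a route that is slightly different from --- and in one respect more careful than --- the paper's. The paper disposes of the theorem in two sentences: the three selections of Lines~\ref{line:query:L}--\ref{line:query:-bar-n} are independent and hence form one round (same argument as yours), and the transferred data is $O(\log r)$ ``by Lemma~\ref{lem:aggr-range-query-efficiency}.'' But Lemma~\ref{lem:aggr-range-query-efficiency} bounds the set $N$ of Lemma~\ref{lem:aggreagate-range-query} (the two ancestor paths meeting at $\bar n$), whereas the algorithm retrieves $N'=L\cup R\cup\{\bar n\}$, which the paper itself later admits may strictly contain $N$ (the extra descendants of $n_L$ and $n_R$ down to $n'_L$ and $n'_R$); the paper never explicitly counts those extras in the efficiency argument, only in the correctness one. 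Your penultimate paragraph closes exactly this gap: by observing that same-level ranges are pairwise disjoint, that the $L$ (resp.\ $R$) predicate can therefore select at most one node per level, and that every selected node lies strictly below $\bar n$, you bound the \emph{actual} retrieved set by $2\,\bar n.\mathrm{level}+1$ and then reuse the same chain Lemma~\ref{lem:lowest-common-ancestor-minimum-level} $\rightarrow$ Lemma~\ref{lem:expected-maximum-level} $\rightarrow$ Property~\ref{prop:constant-size-node} that underlies Lemma~\ref{lem:aggr-range-query-efficiency}. What your version buys is a bound on $N'$ itself rather than on a subset of it; what the paper's version buys is brevity, at the cost of implicitly identifying the algorithm's $\bar n$ with the lemma's $\bar n$ and leaving the ``irrelevant'' nodes uncounted (a point you also gloss over only in the harmless direction, since the identification is established in the paper's correctness discussion). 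The one remaining informality --- taking the expectation of a sum of a random number of node sizes as if the two were independent --- is present in the paper's own Lemma~\ref{lem:aggr-range-query-efficiency} as well, so you are at the same level of rigor there.
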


\rnote{C3.3} Since Algorithm~\ref{algo:aggregate-range-query} processes each retrieved node a constant number of times, 
the execution of the part of the algorithm after data retrieval 
has expected time complexity $O(\log r)$.

We now focus on the correctness of Algorithm~\ref{algo:aggregate-range-query}.
Referring to the symbols
introduced by the statement of Lemma~\ref{lem:aggreagate-range-query}, we consider $k_1$ as the lowest key in $T$ such
that $k'\leq k_1$ and $k_r$ as the highest key in $T$ such that $k_r\leq k''$. %
% ' algo
% _ lemma
%
By construction, there is no other key between $k'$ and $k_1$ and between $ k_r $ and $ k''$ in $T$. 

We show that the set of nodes $ N'=L\cup R \cup \{\bar n
\} $, selected by the algorithm, contains the set of nodes $ N $ identified by
the statement of Lemma~\ref{lem:aggreagate-range-query}. The correctness of
Algorithm~\ref{algo:aggregate-range-query} will be given by the fact that the result
of the aggregate value computed on both sets of nodes are equal.

\begin{figure*}
	\centering
	\includegraphics[width=0.9\linewidth]{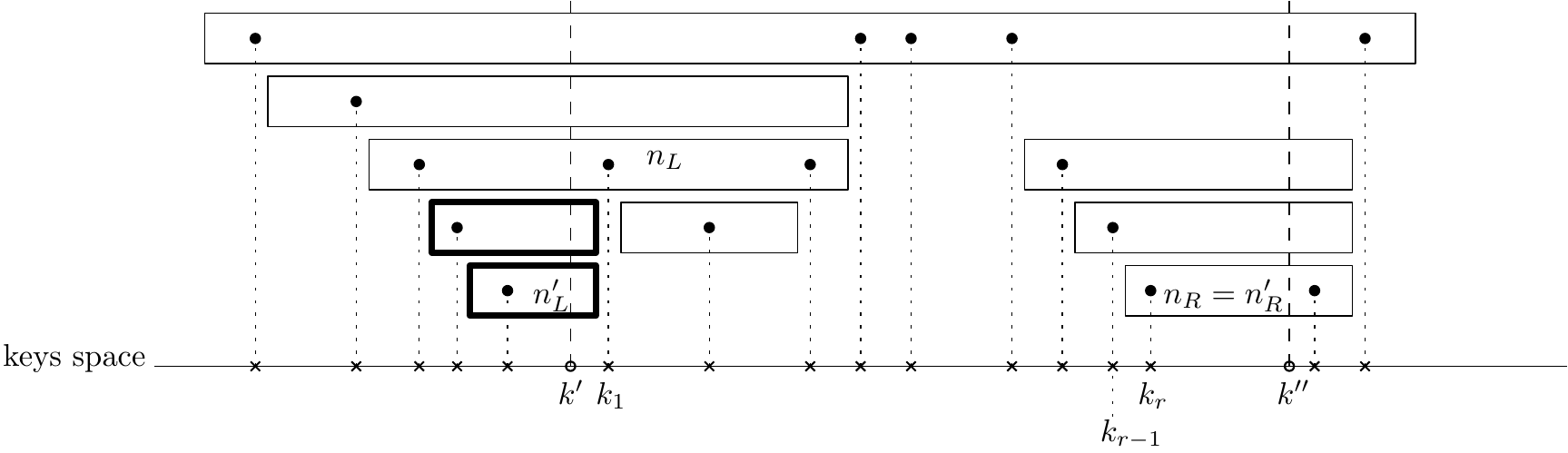}
	\caption{\infloatrnote{C2.1} An example in which Algorithm~\ref{algo:aggregate-range-query} selects more nodes than those considered in Lemma~\ref{lem:aggreagate-range-query}. The additionally selected nodes are evidenced with fat borders. See text for details.}
	\label{fig:algolemma}
\end{figure*}

We now analyze the differences between $N'$ (of the algorithm) and $N$ (of the lemma). \rnote{C2.1}Refer to Figure~\ref{fig:algolemma} for an example. 
The algorithm selects a $ n'_L $ such that $n'_L.\mathrm{min} < k' <
n'_L.\mathrm{max}$, if $ k_1 < n'_L.\mathrm{max}$ the node $ n'_L $ is equal to
$ n_L $ selected by the lemma, otherwise $ n_L $ is an ancestor of $ n'_L $. 
Analogous reasoning can be done for $ n'_R $ and $ n_R $. 
\rnote{C2.1}Figure~\ref{fig:algolemma} shows a case in which $n'_L.\mathrm{max}=k_1$ and hence
$ n_L $ is an ancestor of $ n'_L $. In the same figure, $ n'_R.\mathrm{min} = k_{r-1} < k_r$
and hence $n'_R= n_R$.

 This means that $L$ and $R$ 
of the algorithm may additionally include descendants of $ n_L $ and $n_R$ down to $ n'_L $ and $n'_R$, respectively.
The remaining part of $L$ ($R$) coincides with ancestors of $n_L$ ($n_R$), for both the 
algorithm and the lemma. We now show that $\bar n$ is the same for both the algorithm and the lemma.
From the relationship  $k' \leq k_1 < k_r  \leq k''$, 
$\bar n$ in the algorithm might be an ancestor of $ \bar n $ in the lemma, since
the range of the first may only be larger than the range of the second.
By Invariant~\ref{inv:child-range}, the range of a child is smaller only when 
there is a key in the parent that limits the range of the child, but this is not possible 
in our case, since no other key is present between $k' $ and $k_1$ and/or between $k_r$ and $ k'' $, by construction.
This means that $\bar n$ is the same for both the algorithm and the lemma.

We have just proven that $N'$ can differ from $N$ only by some descendants of $ n_L $ and $ n_R $. 
Now, we prove that the contribution of those descendants to the overall aggregate value is null.
Consider the relevant case $n'_L \neq n_L$, we have that $k_1$ is contained in $n_L$ by construction, $k'$ is contained in $\range{n'_L}$ 
by construction, no key is between $k'$ and $k_1$ by construction. This implies that $n'_L.\mbox{max}=k_1$ by Invariant~\ref{inv:child-range} and that there is no key in $n'_L.$aseq greater that $k'$ to considered in 
the computation of the aggregate value (see Line~\ref{line:query:L-groupby-changed}). The same reasoning holds even if 
there are several nodes between $n'_L$ and  $n_L$. An analogous reasoning can be done about the contribution of nodes
from $n'_R$ up to and excluding $n_R$.

The above considerations together with the Lemma~\ref{lem:aggreagate-range-query} prove the following theorem about the correctness of Algorithm~\ref{algo:aggregate-range-query}.

\begin{theorem}[Query Correctness]
	An aggregate range query executed by Algorithm~\ref{algo:aggregate-range-query}
	on a DB-tree $T$ with range $[k',k'']$ correctly returns
	$\alpha(v_1,\dots,v_r)$ where $v_1,\dots,v_r$ are the values corresponding to all keys $k_1,\dots,k_r$ in $T$ such that
	$k' \leq k_1 < \dots < k_r \leq k''$.
\end{theorem}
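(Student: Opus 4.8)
The plan is to reduce the theorem to Lemma~\ref{lem:aggreagate-range-query} by showing that the node set gathered by Algorithm~\ref{algo:aggregate-range-query}, namely $N'=L\cup R\cup\{\bar n\}$, yields the same aggregate value as the set $N$ identified in that lemma. First I would fix the relevant keys: let $k_1$ be the lowest key in $T$ with $k'\leq k_1$ and $k_r$ the highest with $k_r\leq k''$, and let $n_L,n_R$ be the nodes containing them (they exist and are unique by Property~\ref{prop:uniqueness}). The structural fact I would isolate up front, and reuse at every later step, is that by this choice no key of $T$ lies strictly between $k'$ and $k_1$, nor strictly between $k_r$ and $k''$.

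Second, I would match $N'$ against $N$. The selection predicates of Lines~\ref{line:query:L}--\ref{line:query:-bar-n}, together with Range Monotonicity (Property~\ref{prop:range_ancestors}), force $L$ to consist of $n_L$ and its ancestors up to but excluding $\bar n$, and symmetrically for $R$; the ascending-level ordering, combined with the accumulations $a_L\gets f(a_L,s)$ and $a_R\gets f(s,a_R)$, reproduces the two descending paths used in the construction of the sequence $s$ in the proof of Lemma~\ref{lem:aggreagate-range-query}. The only discrepancy is that the algorithm may select, in place of $n_L$, a lower straddling node $n'_L$ with $n'_L.\mathrm{min}<k'<n'_L.\mathrm{max}$ together with the descendants of $n_L$ down to it, and analogously $n'_R$ on the right.

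Third, I would argue that $\bar n$ coincides in both the algorithm and the lemma. A priori the algorithm's $\bar n$ could be a proper ancestor of the lemma's, since its range can only be larger; but by Invariant~\ref{inv:child-range} a child's range shrinks below its parent's only when a key of the parent bounds it, and no key separates $k'$ from $k_1$ or $k_r$ from $k''$, so the two ranges, hence the two nodes, must be equal.

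Fourth, and this is the step I expect to be the crux, I would show that the extra descendant nodes in $N'\setminus N$ contribute the identity $1_A$. When $n'_L\neq n_L$, the facts that $k_1$ is contained in $n_L$, that $k'\in\range{n'_L}$, and that no key separates them give $n'_L.\mathrm{max}=k_1$ by Invariant~\ref{inv:child-range}; hence $n'_L.\mathrm{aseq}$ contains no key strictly greater than $k'$, so the guarded expression of Line~\ref{line:query:L-groupby-changed} evaluates to $1_A$, and the same holds for every intermediate node down to $n_L$, with the symmetric argument for the right side via Line~\ref{line:query:R-groupby-changed}. Since $g=h=\mathrm{id}$ is assumed throughout this section, the partial aggregates $a_L$, $s$, and $a_R$ computed in Lines~\ref{line:query:computation-begin}--\ref{line:query:computation-end} then aggregate exactly the sequence $s$ of Lemma~\ref{lem:aggreagate-range-query}, so by associativity of $f$ we get $f(a_L,s,a_R)=\alpha(v_1,\dots,v_r)$. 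The main obstacle is precisely the boundary bookkeeping for $k',k''$ possibly not being keys of $T$: both the vanishing of the extra nodes and the invariance of $\bar n$ rest entirely on the ``no intervening key'' property, so I would state that property once and invoke it explicitly at each of these points.
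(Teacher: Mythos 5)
Your proposal is correct and follows essentially the same route as the paper: it reduces to Lemma~\ref{lem:aggreagate-range-query} by fixing $k_1,k_r$ via the ``no intervening key'' observation, shows $\bar n$ coincides in algorithm and lemma via Invariant~\ref{inv:child-range}, and shows the extra descendants of $n_L$ and $n_R$ selected by the algorithm contribute only $1_A$ because $n'_L.\mathrm{max}=k_1$ (and symmetrically on the right). This matches the paper's argument point for point, including the identification of the boundary bookkeeping as the crux.
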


\subsection{Update, Insertion and Deletion}\label{ssec:DBTree:upd-ins-del}

\begin{algorithm}
	\caption{This algorithm propagates the change of values of a certain node into
	the aggregate values in all its ancestors. It also inserts aggregate values that should be present in the aggregate sequence since the corresponding child is present. This algorithm is intended to be
	performed on a pool of nodes stored in main memory. }
\label{algo:update-aggrvalue-up}
	
	\begin{algorithmic}[1] 
		
		\Require A sequence $U$ of nodes, in ascending order of level, representing an ascending path in a DB-tree. The first node of $U$ is denoted $\bar n$. 
		\Ensure Aggregate values of nodes in $U$ above $n$ are updated to reflect the current state of $\bar n$.
		
		\State $a \gets f(\bar n)$
		\State $k \gets $ any key contained in $\bar n$.
		\State Remove $ \bar n $ from $ U $		
		\For { each node $n$ in $U$ in ascending order of level}
		\State Let $a_0, p_1, a_1, \dots,p_m, a_m$ be $n.\mathrm{aseq}$ with $m=n.m$ and $p_i=\langle k_i, v_i \rangle$ \Comment Some $a_i$'s might be \emph{missing}.
%		\LineComment{If the following switch assigns a missing $a_i$, it means that a child $n_i$ was added and $a_i$ should now be present.}
		\Switch{}  \Comment{This switch can assigns a previously \emph{missing} $a_i$ making it \emph{present}.}
		\Case{ $m=0$ }
		\State $a_0 \gets a$ 
		\EndCase
		\Case{ $k < k_1$}
		\State $a_0 \gets a$ 
		\EndCase
		\Case{ $k_m < k $}
		\State $a_m \gets a$ 
		\EndCase
		\Case{ $k_i<k<k_{i+1}$ for any $i\in\{0,\dots,m\}$}
		\State $a_i \gets a$ 
		\EndCase
		\EndSwitch
		\State $a \gets f(n)$
		\EndFor
		
	\end{algorithmic}
	
\end{algorithm}

\begin{algorithm}
	\caption{This algorithm updates a key $k$ in a DB-tree $T$ with a new value $v$ assuming that $k$ is already present in $T$.}
	\label{algo:update}
	
	\begin{algorithmic}[1] 
		%		\State \todo[inline] {prova}
		
		\Require A key $k$, a DB-tree $T$ that contains $k$, and a new value $v$ to be assigned to $k$.
		\Ensure The value for $k$ in $T$ is $v$.
		
		\State \label{line:alg:update-retrieve}$N \gets $ the sequence of nodes resulting from the selection from $T$ of all nodes $n$ such that $n.\mathrm{min} < k < n.\mathrm{max}$ ordered by ascending $n.\mathrm{level}$. This is performed in a single query round.
		
		\State $n \gets $ the first node in $N$. 
		
		\State Update the value for $k$ in $n$ with $v$.
		
		\State Update affected aggregate values of nodes in $N$ by calling Algorithm~\ref{algo:update-aggrvalue-up} on $N$.
		
		\State Store all nodes in $N$ to $T$, in one round.
	\end{algorithmic}
	
\end{algorithm}

All the algorithms to change a DB-tree $T$ shown in this section can be divided in the following three phases.

\begin{enumerate}[label=P\arabic*.,ref=P\arabic*]

	\item \label{phase:retrieve} The nodes of $T$ that are relevant for the change 
	are \emph{retrieved} from the database, in one single round, and put into a 
	\emph{pool} of nodes stored in main memory (\emph{read round}). 
	
	\item \label{phase:change} The pool is \emph{changed} to reflect the change required for $T$. Old
    nodes may be updated or deleted and new ones may be added.

    \item \label{phase:store} The nodes of the pool are \emph{stored} into the database, in one round (\emph{update round}).
\end{enumerate}

In our description, we always assume the pool to be empty when the algorithm
starts. A rough form of caching could be obtained by not cleaning up the pool
between two operations, but we do not do consider that, to simplify algorithms
descriptions. It is worth mentioning, that the expected size of the pool is
always $O(\log |T|)$ during the execution of each of the algorithms and that at
beginning and end of Phase~\ref{phase:change}, the DB-tree invariants hold.
\rnote{C3.3} Further, as it will be clear from the following descriptions, all
algorithms process nodes of the pool at most a constant number of times and
hence the execution time of the processing part of the algorithms, i.e.,
excluding interaction with the DBMS, is $O(\log |T|)$ on average, for all of
them.

A particular operation, performed in Phase~\ref{phase:change}, recurs in all
algorithms. When a value of a key is changed or a key is inserted or deleted, the
corresponding aggregate values of nodes in the path to the root should be
coherently updated (see Invariant~\ref{inv:recursive-aggregation} in
Section~\ref{ssec:invariants}). This is always performed as the last step of 
Phase~\ref{phase:change}. This procedure is shown in
Algorithm~\ref{algo:update-aggrvalue-up}. It simply traverses all nodes in a
path to the root computing and updating aggregate values from bottom to top.
If previously missing children was added, it also restore the corresponding aggregate values.

\textbf{Update.} The procedure to update the value of a key already present in a DB-tree $T$ is
shown in Algorithm~\ref{algo:update}.
 The algorithm follows the three phases
listed above. In Line~\ref{line:alg:update-retrieve}, the node containing $k$ is
retrieved with all its ancestors. This follows from
Properties~\ref{prop:range_ancestors},~\ref{prop:uniqueness}
and~\ref{prop:lowest-level}. From Lemma~\ref{lem:expected-maximum-level}, the
expected number of nodes retrieved is $O(\log |T|)$. Invariants are clearly preserved, 
since the structure of $T$ is unchanged and Invariant~\ref{inv:recursive-aggregation}
is ensured by the execution of Algorithm~\ref{algo:update-aggrvalue-up}.

 \begin{algorithm}
 	\caption{This algorithm inserts a new key-value pair into a DB-tree $T$, supposing the key is not contained in $T$.}
 	\label{algo:insert}
 	
 	\begin{algorithmic}[1] 
 		
 		\Require A key-value pair $\langle k, v\rangle$ and a representation of a DB-tree $T$ that does not contain $k$.
 		
 		\State \label{line:alg:insert:getN}$N \gets $ be the sequence of nodes resulting from the selection from $T$ of all nodes $n$ such that $n.\mathrm{min} < k < n.\mathrm{max}$ ordered by ascending $n.\mathrm{level}$. This is performed in a single query round.
 		%\State \Comment No node in $N$ contains $k$, by preconditions.
 		
 		\State \label{line:alg:insert:level}$l \gets$ a random level extracted according to Algorithm~\ref{algo:random-level}.

 		\State $\bar n \gets $ the node in $N$ such that $\bar n.\mathrm{level}=l$, or $\bot$ if it does not exist.

 		\State \label{line:alg:insert:getU}$U \gets $ the subsequence of nodes $n$ in $N$ with $n.\mathrm{level}>l$
 		
 		\State $n' \gets $ the first node in $U$
 		
 		\State \label{line:alg:insert:getD}$D \gets $ the subsequence of nodes $n$ in $N$ with $n.\mathrm{level}<l$

 		%\LineComment{If the aggregate value between $k_\mathrm{prev}$ and $k_\mathrm{next}$ is missing in $n'$ and, finally, should be present, Algorithm~\ref{algo:update-aggrvalue-up} will fix it.  }
% 		\State $a \gets $ the aggregate value between $k_\mathrm{prev}$ and $k_\mathrm{next}$ in $n'$, if not missing, $\bot$ otherwise.
% 		
% 		\If { $a = \bot$ }
% 		\State Insert an arbitrary placeholder as aggregate value between $k_\mathrm{prev}$ and $k_\mathrm{next}$ in $n'$ to be updated in the following.
% 		\EndIf 		

 		\If {$\bar n = \bot $ } \Comment Init with a dummy node, if needed.
 			\State $k_\mathrm{prev} \gets$ the highest key contained in $ n' $ such that $k_\mathrm{prev} < k $, otherwise $ k_\mathrm{prev} =  n'.\mathrm{min}$

			\State $k_\mathrm{next}  \gets$ the lowest key contained in $ n' $  such that $ k < k_\mathrm{next} $, otherwise $ k_\mathrm{next} =  n'.\mathrm{max}$

	 		\State $\bar n \gets$ a new node with 
	 		$\bar n.\mathrm{level}=l$, 
	 		$\bar n.m = 0$, 
	 		$\bar n.\mathrm{min} = k_\mathrm{prev}$, 
	 		$\bar n.\mathrm{max} = k_\mathrm{next}$, 
	 		$\bar n.\mathrm{aseq}$ is empty.
 		\EndIf

 		\State \label{line:algo:insert:insertnewpair}Insert   $\langle k, v \rangle$ in $\bar n.\mathrm{aseq}$ at the correct position to keep Invariant~\ref{inv:keys-order-range}, and coherently $\bar n.m \gets \bar n.m + 1$.
 		
% 		\If {$D$ is empty}  \Comment No node below $\bar n$ in $N$
%	 		\State Update affected aggregate values in $U$ by calling Algorithm~\ref{algo:update-aggrvalue-up} on $\{\bar n\} || U$.
%	 		\State Write nodes in $\{\bar n\}$, and $U$ in the representation of $T$, in one round.
%	 		\State \Return
% 		\Else   \Comment There are some nodes below $\bar n$ in $N$.
% 		
%	 		\State Replace $a_{x-1}$ in $\bar n.\mathrm{aseq}$ with $\bar a_{x-1}, \langle k, v \rangle, \bar a_{x}$ where $\bar a_{x-1}$ and $\bar a_{x}$ are arbitrary placeholders for the aggregate values to be updated in the following.

	 		%			\State Let $D$ be $N = \dots, n', \bar n, n', \dots$ into three order-preserving subsequences $D$, $\{\bar n\}$, and $U$, where $D= \dots, n'$ and $U= n', \dots$. 
	 		
	 		\State Let $L$ and $R$ be two empty sequences of nodes.
	 		
	 		\ForAll { nodes $n$ in $D$ in descending order of level} 			\label{line:algo:insert:split-for}

%	 		\State $k_\mathrm{prev} \gets$ the highest key contained in $ n $ such that $k_\mathrm{prev} < k $, otherwise $ k_\mathrm{prev} =  n.\mathrm{min}$
%	 		
%	 		\State $k_\mathrm{next}  \gets$ the lowest key contained in $ n $  such that $ k < k_\mathrm{next} $, otherwise $ k_\mathrm{next} =  n.\mathrm{max}$
	 		
	 		\State Create nodes $n_L$ and $n_R$ such that 
	 		\begin{quote}
	 			\begin{quote}
	 				\begin{itemize}
	 					\renewcommand\labelitemi{$\circ$} 
	 					\item $n_L.\mathrm{level} \gets n.\mathrm{level}$,
	 					$n_L.\mathrm{min} \gets n.\mathrm{min}$, 
	 					$n_L.\mathrm{max} \gets k$, 
	 					$n_L.\mathrm{aseq} \gets$ the order-preserving subsequence of $ n.\mathrm{aseq}$ containing all keys less than $k$ with all aggregate values at their left (if present). 
	 					\item $n_R.\mathrm{level} \gets n.\mathrm{level}$,  
	 					$n_R.\mathrm{min} \gets k$, 
	 					$n_R.\mathrm{max} \gets n.\mathrm{max}$, 
	 					$n_R.\mathrm{aseq} \gets$ the order-preserving subsequence of $ n.\mathrm{aseq}$ containing all keys greater than $k$ with all aggregate values at their right (if present).
	 				\end{itemize}
	 			\end{quote}
	 		\end{quote}
% 		 		\LineComment{Aggregate values at right of $n_L.\mathrm{aseq}$ or at left 
% 		 			$n_R.\mathrm{aseq}$ will be inserted by  Algorithm~\ref{algo:update-aggrvalue-up}, if needed.  }
 		
	 		\State Add $n_L$ at the beginning of $L$, only if $n_L$ contains at least one key.
	 		\State Add $n_R$ at the beginning of $R$, only if $n_R$ contains at least one key. \LineComment{Nodes in $L$ and $R$ turn out to be in ascending order of level.}
	 		\EndFor
	 		
	 		\State Update affected aggregate values on $L || \{\bar n\}$ and $R|| \{\bar n\}$ (if $L$ or $R$ are not empty) by calling Algorithm~\ref{algo:update-aggrvalue-up}. Note that, this also adds currently missing aggregate values that do have corresponding children. 
	 		
	 		\State Update affected aggregate values in $U$ by calling Algorithm~\ref{algo:update-aggrvalue-up} on $\{\bar n\} || U$.
	 		
	 		\State Write nodes in $L$, $R$, $\{\bar n\}$, and $U$ in one round in the representation of $T$.

% 		\EndIf
 		
 	\end{algorithmic}
 	
 \end{algorithm}

 \begin{figure*}
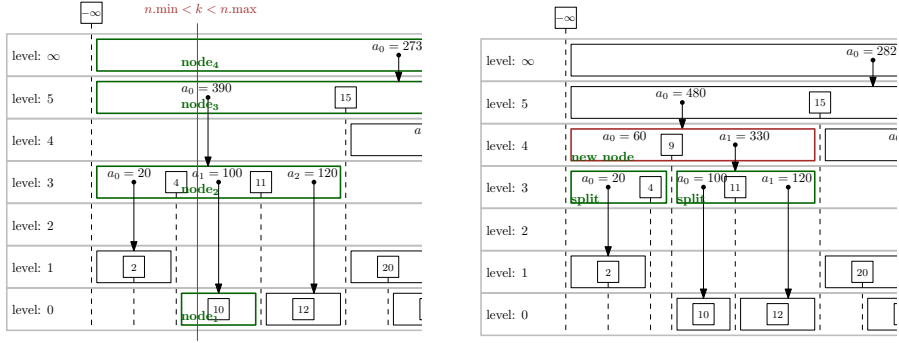

 	\centering
% 	\begin{subfigure}{.5\textwidth}
% 		\includegraphics[page=1,trim={0 0 9cm 0},clip,width=.95\linewidth]{figures/DB-tree.pdf}
% 		\caption{Insert of $\langle k=9, v=9 \rangle$ in level $ l=4 $.}
% 	\end{subfigure}%
 	\begin{subfigure}{.48\textwidth}
 		\includegraphics[page=2,trim={0 0 9cm 0},clip,width=.95\linewidth]{DB-tree.pdf}
 		\caption{Before the insertion of $k=9$. Nodes selected for $ N $ are shown with green thick contours. Partition of $N$ is as follows: $D=\{\mbox{node}_1,\mbox{node}_2\}$, $\bar n= \bot$, $U=\{n'=\mbox{node}_3,\mbox{node}_4\}$. }
 	\end{subfigure}
 	\hfill
	 \begin{subfigure}{.48\textwidth}
	 	\includegraphics[page=3,trim={0 0 9cm 0},clip,width=.95\linewidth]{DB-tree.pdf}
	 	\caption{State of $ T $ after the execution of the algorithm. The red node is new and it is $\bar n$. Green nodes are created from the split of part of the nodes in $D$ ($\mbox{node}_1$ was not split).}
	 \end{subfigure}
 	\caption{Example of execution of Algorithm~\ref{algo:insert}: insertion of a new key-value pair  $\langle k=9, v=90 \rangle$. The randomly obtained level for this key is $4$. The aggregation function is SUM and for each $k$ its value is $v=10k$. }
 	\label{fig:insert}
 \end{figure*}

\textbf{Insertion.} Algorithm~\ref{algo:insert} shows the procedure to insert a key $k$ in $T$ under the
hypothesis that $k$ is not already contained in $T$. The first line performs
the same query as for Algorithm~\ref{algo:update} to retrieve an ascending path
in $T$, denoted by $N$, of nodes that are involved in the insertion. The expected size of $N$ is $O(\log |T|)$ by Lemma~\ref{lem:expected-maximum-level}. 
Then, a random level $l$, where $k$ should be inserted, is obtained
(Line~\ref{line:alg:insert:level}). 
The following lines aim at identifying the node $\bar n$ in $N$ at
level $l$ and the node $n'$ that is the parent of $\bar n$. 
They also handle the case in which 
$\bar n$ does not exist. In this case, a new node is
inserted with no keys (for the moment). 
Note that, $n'$ always exists since at least the root of $T$ exists.
This and other operations in the rest of the algorithm may make \emph{present} some previously \emph{missing} aggregate value.
These aggregate values are actually fixed or inserted by Algorithm~\ref{algo:update-aggrvalue-up}, which is called at the end of Phase~\ref{phase:change}.
The path $N$ is also partitioned into $D$ (nodes below $\bar n$), $\{\bar n\}$, and $U$ (nodes above $\bar n$, starting with $n'$).

Actual insertion of $\langle k, v\rangle$ in $\bar n$ is performed in Line~\ref{line:algo:insert:insertnewpair}. Then, all nodes below $\bar n$ in $N$, i.e., those that are in $D$,
are split to keep Invariant~\ref{inv:child-range}. In fact, all nodes in $D$
(actually all nodes in $N$) were selected to have $k$ in their range, but after
the insertion of $k$ in $\bar n$, $k$ is contained in a node above them.
The split is
performed from top to bottom in the cycle starting at
Line~\ref{line:algo:insert:split-for}.
It sets min and max of each node considering the existence of $k$, 
according to Invariant~\ref{inv:child-range} (see Figure~\ref{fig:insert}) 
and sets the aggregate sequences of the resulting nodes, according to 
Invariant~\ref{inv:keys-order-range}.
The splits of nodes in $D$ form two 
branches whose nodes are stored in sequences $L$ and $R$ (in ascending order of level).
Special care is taken not to store in $L$ and $R$ nodes that do not contain any key.
When creating or modifying nodes in $L$ and $R$, as well as $\bar n$ and $n'$, the algorithm do not
care about setting the correct aggregate values that intersect or are adjacent to $k$. These are
updated and possibly inserted by calling
Algorithm~\ref{algo:update-aggrvalue-up} on $L|| \bar n$ and $R || \bar n$, if $L$ or $R$ are not empty, and then on $\bar n || U$. 
This makes the affected aggregate values to comply with 
Invariant~\ref{inv:recursive-aggregation}.
Clearly the resulting number of nodes is at most $2|N|$ and hence still expected
$O(\log |T|)$.

\begin{algorithm}
	\caption{This algorithms deletes a key $k$ from a DB-tree $T$.}
	\label{algo:delete}
	\begin{algorithmic}[1] 
		%		\State \todo[inline] {prova}
		
		\Require A key $k$ and a representation of a DB-tree $T$. We assume $k$ is contained in $T$.
		\Ensure The key $k$ is no longer contained in $T$.
		
		\State $N \gets $ the sequence of nodes resulting from the selection from $T$ of all nodes $n$ such that $n.\mathrm{min} \leq k \leq n.\mathrm{max}$ ordered by ascending $n.\mathrm{level}$. This is performed in a single query round.
		
		\State $\bar n \gets $ the node in $N$ that contains $k$.

		\State $l \gets  \bar n.\mathrm{level}$
		
		\State Let $L$ and $R$ two arrays indexed by $0,\dots,l-1$, all their elements are initialized with $\bot$.
		\State $L[i] \gets$ the node $n$ in $N$ such that $n.\mathrm{level}=i<l$. \Comment Note that, it also holds $n.\mathrm{max}=k$.
		\State $R[i] \gets$ the node $n$ in $N$ such that $n.\mathrm{level}=i<l$. \Comment Note that, it also holds $n.\mathrm{min}=k$.
%		\State $L[i] \gets$ the node $n$ in $N$ such that $n.\mathrm{level}=i<l$ and $n.\mathrm{min}<k$.
%		\State $R[i] \gets$ the node $n$ in $N$ such that $n.\mathrm{level}=i<l$ and $n.\mathrm{max}>k$.

		\State $U \gets$ the sequence of nodes $n$ in $N$ such that $n.\mathrm{level}>l$.

		\State Delete from $\bar n.\mathrm{aseq}$ the key-value pair for $k$ and coherently $\bar n.m \gets \bar n.m - 1$. Delete also the aggregate values at the left and right of $k$, if they are present. 
		%\LineComment{ The aggregate value, if needed will be inserted by Algorithm~\ref{algo:update-aggrvalue-up}.}

		\State Let $D$ an empty sequence of nodes.

		\State $n_\mathrm{prev} \gets \bar n$
		\For { $i$ from $l-1$ down to $0$ }
			%\State Add $n_\mathrm{prev}$ at the left of $D$.
			%\State \Comment Nodes in $D$ turn out to be in ascending order of level.
		
			\State $k_\mathrm{min} \gets $ the key right before $k$ in $n_\mathrm{prev}.\mathrm{aseq}$, if it exists, otherwise $n_\mathrm{prev}.\mathrm{min}$ 

			\State $k_\mathrm{max} \gets $ the key right after $k$ in $n_\mathrm{prev}.\mathrm{aseq}$, if it exists, otherwise $n_\mathrm{prev}.\mathrm{max}$					

			%\LineComment{In the following, aggregate values are not properly set and will be inserted/updated by Algorithm~\ref{algo:update-aggrvalue-up}, where needed.}
			\Switch {}
				\Case { $L[i]\neq\bot$ and $R[i]\neq\bot$} 
					\State Create a new node $n$.  \Comment Merge

					\State $n.\mathrm{level} \gets i$, 
					    $n.\mathrm{min} \gets k_\mathrm{min}$, 
					    $n.\mathrm{max} \gets k_\mathrm{max}$.

					\State Let  $L[i].\mathrm{aseq} = a^L_0, p^L_1,a^L_1,\dots,p^L_m, a^L_m$.

					\State Let  $R[i].\mathrm{aseq} = a^R_0, p^R_1,a^R_1,\dots,p^R_{m'}, a^R_{m'}$. 
					    
					\State $n.\mathrm{aseq} \gets a^L_0, p^L_1,a^L_1,\dots,p^L_m,  p^R_1,a^R_1,\dots,p^R_{m'}, a^R_{m'}$ 
					
					%\State \qquad \parbox{10cm}{where $\bar a$ is an arbitrary placeholder for the aggregate value to be updated in the following phase.}
						 
				\EndCase
				
				\Case { $L[i]\neq\bot$ and $R[i]=\bot$} 
					\State Let $n = L[i]$.
					\State $n.\mathrm{max} \gets k_\mathrm{max}$ \Comment Expand $n$ rightward
				\EndCase
				
				\Case { $L[i]=\bot$ and $R[i]\neq\bot$} 
					\State Let $n = R[i]$.
					\State $n.\mathrm{min} \gets k_\mathrm{min}$ \Comment Expand $n$ leftward
				\EndCase
			\EndSwitch
			\State Add $n$ to the beginning of $D$ 			\Comment Nodes in $D$ turn out to be in ascending order of level.
			\State $n_\mathrm{prev} \gets n$
		\EndFor

		%\State \Comment Aggregate values update phase.
		\State $P \gets 
				\begin{cases}				
					\{ \bar n \}, & \mbox{if }  \bar n.m > 0 \\
					\emptyset,  & \mbox{otherwise}
				\end{cases}$ 

		\State Update affected aggregate values by calling Algorithm~\ref{algo:update-aggrvalue-up} on $D||P||U$.
		Note that, this also adds currently missing aggregate values that have corresponding children. 
		
		\State Write nodes in $P$, $D$, and $U$ in the representation of $T$, in one round. 
		
	\end{algorithmic}
	
\end{algorithm}

 \begin{figure*}
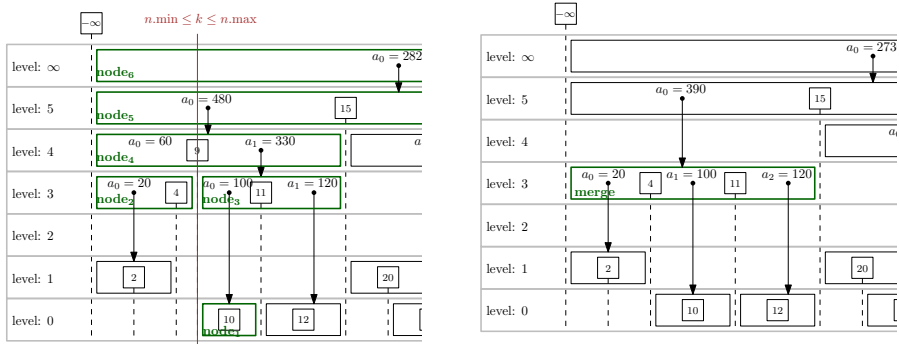

	\centering
%	\begin{subfigure}{.5\textwidth}
%		\includegraphics[page=4,trim={0 0 9cm 0},clip,width=.95\linewidth]{figures/DB-tree.pdf}
%		\caption{Delete of $ k=9 $.}
%		
%	\end{subfigure}%
	\begin{subfigure}{.48\textwidth}
		\includegraphics[page=5,trim={0 0 9cm 0},clip,width=.95\linewidth]{DB-tree.pdf}
		\caption{ Before the deletion of $k=9$. Nodes selected for $ N $ are shown with green thick contours. Partition of $N$ is as follows: $\bar n= \mbox{node}_4$,  $U=\{\mbox{node}_4,\mbox{node}_5,\mbox{node}_6\}$, $L=\{\mbox{node}_2\}$, $R=\{\mbox{node}_1,\mbox{node}_3\}$.}
	\end{subfigure}
	\hfill
	\begin{subfigure}{.48\textwidth}
		\includegraphics[page=6,trim={0 0 9cm 0},clip,width=.95\linewidth]{DB-tree.pdf}
		\caption{State of $ T $ after the execution of the algorithm. The node shown with green thick contour derives
			from the merge of $\mbox{node}_2$ and $\mbox{node}_3$, $\mbox{node}_4$ was removed,  the execution set $\mbox{node}_1.\mbox{min}=4$ (was $\mbox{node}_1.\mbox{min}=9$ before).}
	\end{subfigure}
	\caption{Example of execution of Algorithm~\ref{algo:delete}: deletion of key $k=9$. The aggregation function is SUM and for each $k$ its value is $v=10k$.}
	\label{fig:delete}
\end{figure*}

\textbf{Deletion.} Algorithm~\ref{algo:delete}  shows the procedure to delete a key $ k $ in $T$
under the hypothesis that $ k $ is already present in $ T $. See
Figure~\ref{fig:delete} for an example of execution of this algorithm. The first
line performs a query to retrieve all nodes $n$ having $k$ in their range or
$n.\mbox{min}=k$ or $n.\mbox{max}=k$. The nodes whose min and max are equal to $k$
are important for this algorithm, since the deletion of $k$ affects their range.
This set, denoted $N$, contains a node
$\bar n $ that contains $k$. We denote by $l$ its level. The other elements of
$N$ are partitioned into three paths, $U$, $L$, $R$ corresponding to the three
disjoint conditions of the query.  Path $U$ is from $\bar n$ to the root of $T$
(like in Algorithms~\ref{algo:update} and~\ref{algo:insert}). Path $L$ ($R$)
contains nodes $n$ such that $n.\mbox{max}=k$ ($n.\mbox{min}=k$). Paths $L$ and
$R$ are made of descendants of $\bar n$ by Invariant~\ref{inv:child-range}. By
applying Lemma~\ref{lem:expected-maximum-level} on the three paths, we get that
the expected size of $N$ is $O(\log |T|)$. In the algorithm, $L$ and $R$ are
represented as arrays having one element for each level below $l$. The $\bar
n.\mathrm{aseq}$ is updated removing the key-value pair for $ k $. The algorithm
proceeds by performing opposite operations with respect to
Algorithm~\ref{algo:insert}, that is, nodes in $L$ and $R$ that are at the same
level are merged so that they cover a range that is the union of the ranges of the merged nodes. This is
done so that Invariants~\ref{inv:keys-order-range} and \ref{inv:child-range} are preserved. The resulting nodes end up to be
a path of descendants of $\bar n$, denoted by $D$. 
Aggregate values that ``overlap'' 
the deleted key $k$ are not set during the merge. However, this is fixed when  
Invariant~\ref{inv:recursive-aggregation} is enforced by calling
Algorithm~\ref{algo:update-aggrvalue-up} on $D || \{\bar n\}  || U$.

\textbf{Bulk/Batch insertion.} \rnote{C1.1} The insertion of a large number of elements that
are known in advance is more efficiently performed as a single operation. In
particular, if the insertion starts from an empty DB-tree, the whole DB-tree can
be built from scratch in main memory (supposing that this is large enough). This can be easily performed applying, for
each element, a procedure similar to that shown in Algorithm~\ref{algo:insert},
where sets $N$, $U$, and $D$ are got from main memory. The DB-tree in
main memory is updated and kept ready for the next element of the bulk insertion. A the end, the
resulting DB-tree can be written into the database in a single round, possibly
using bulk insertion support provided by the DBMS.

In Section~\ref{sec:exepriments}, we show that single insertions
into DB-trees are quite slow with respect to insertion into regular tables. We
can adopt a bulk-like approach to speed-up many insertions (a \emph{batch}) in a non-empty
DB-tree. In this case, we have to be careful to read first into main memory all the
nodes that are affected by all insertions. This can be done by executing
Lines~\ref{line:alg:insert:getN},~\ref{line:alg:insert:getU},
and~\ref{line:alg:insert:getD} of Algorithm~\ref{algo:insert}, for each element.
Note that, this can be done in one query round. We also have to
keep track of deleted and changed nodes to correctly apply these changes at the
end of the bulk insertion. We do not further detail these procedures that are
simple variations of the algorithms shown in this section.

\subsection{Comparison of DB-trees with Skip Lists and B-trees}

\rnote{C3.2}

Now, we present a detailed comparison of DB-trees with skip
lists~\cite{pugh1998skip} and B-trees~\cite{comer1979ubiquitous}.

The most important distinguishing feature of DB-trees is related to the
representation of relationships between nodes. In DB-trees, we do not store any
pointer in the nodes: a parent-child relationship between nodes $n_1$ and $n_2$ is
implicitly represented by a containment relationship of $\range{n_1}$ and
$\range{n_2}$, where the extremes of ranges are explicitly represented in the
nodes. This approach allows us to obtain a path to the root in a single query
round and makes DB-trees very well suited to be represented in databases. On
the
contrary, skip lists and B-trees do use explicit pointers, thus implicitly
assuming that pointer traversal is an efficient primitive operation, which is
not true for databases.

\rnote{C3.3} Another fundamental difference with respect to common data
structures is that DB-trees are targeted to support aggregate range queries and
authenticated data structures, and are not targeted to speed up search. In fact,
DBMSes already perform searches very efficiently. On the contrary, DB-trees rely
on DBMS search efficiency to speed up aggregate range queries. DB-trees reach 
this objective without relying on traversals.

In Section~\ref{ssec:db-tree-intuitive}, we described DB-trees starting from
skip lists. In fact, each skip list instance is in one-to-one correspondence with
a DB-tree instance. Both data structures associate levels with keys and the level
of each key is selected in the same random way. However, while a skip list
redundantly stores a tower of elements for each key, the corresponding DB-tree
stores each key only once: essentially only the top element of the tower is
represented. Further, in a DB-tree, sequential keys in a level may be grouped
into a single node equipped with some metadata (level, min, and max) that allow
them to be efficiently retrieved from the database. In DB-trees, a node is the
smallest unit of data that is selected, retrieved, or updated when the DBMS is
contacted. Due to all these differences, algorithms performing operations on DB-trees turns out to be very different from the corresponding algorithms for skip lists.
Since the level associated with each key is the same in DB-trees and
skip lists, statistical properties of DB-trees (see
Section~\ref{ssec:DBtrees:fundamental-properties}) also hold for skip lists
(e.g., Lemma~\ref{lem:expected-maximum-level}), or can be recast to be
applicable in the skip list context (e.g.,
in the skip list context, Property~\ref{prop:constant-size-node} can be interpreted as related to the
expected number of consecutive tower-top elements in a level).

While statistical aspects of DB-trees are very similar to skip list ones,
storage aspects are somewhat similar to B-trees. In B-trees, each node $n$ is
meant to be stored in a disk block and contains a selection of keys (or
key-value pairs) interleaved by pointers to blocks storing the children of $n$.
In both DB-trees and B-trees, descendants of $n$ store keys that are between two
keys that are consecutively stored in $n$. In B-trees, each node can contain a number of
keys between $d$ and $2d$, where $ d $ is a fixed parameter. Hence, a node can have a number 
of children between $d+1$ and $2d+1$. In DB-trees, each node contains at least one key, but there is no
maximum number of keys for a node. The number of keys in a node is only
statistically constrained (see Property~\ref{prop:constant-size-node}). Insertion of a new key in a B-tree
occurs in a node that is deterministically identified and may provoke the split of
zero or more nodes, possibly all the way up to the root of the tree, in order to respect
the maximum number of keys per node. In DB-trees, the node $n$ where a new key is
inserted is at a level that is randomly selected, possibly creating a new one if
needed. Insertion of a key into node $n$ may provoke the split of nodes below
$n$ to respect Invariant~\ref{inv:child-range}. Deletion in B-trees may require
re-balancing through rotations to respect the minimum number of keys per node.
In DB-trees deletion of a key from node $n$ may provoke the merge of nodes below
$n$, which is exactly the opposite of what occurs during insertion. The maximum
depth of
a B-tree is 
$\left\lfloor \log_{d} (x+1)/2 \right\rfloor$, where $x$ is the number of keys
in the B-tree~\cite{cormen2009introduction}. The depth of the DB-tree is only
statistically characterized (see Lemma~\ref{lem:expected-maximum-level}).

\rnote{C3.3}Finally, we note that the depth of a B-tree and the number of
levels of a skip list are closely related to the efficiency of the operations on
the respective data structure.  For DB-trees, the depth of the tree is related
to the size of the data that should be handled by any operation.
As described in Section~\ref{sec:algorithms}, DB-tree operations, which are locally performed by the overlay logic, take
a time that is
proportional to the handled data and hence turns out to be $O(\log |T|)$ on
average, as for skip lists. However, it should be noted that
the selection of nodes involved in a query or change is not performed by the
implementation but is delegated to the DBMS. Its ability in efficiently executing
the queries requested by the overlay logic has a significant impact on the
overall performance (see also the experiments reported in
Section~\ref{sec:exepriments}).

% !TeX encoding = UTF-8
% !TeX root = main.tex
% !TeX spellcheck = en_US

\section{Experiments}\label{sec:exepriments}

We developed a software prototype with the intent to provide experimental
evidences of the performances of our approach.

The objectives of our experiments are the following.

\begin{enumerate}[label=O\arabic*.,ref=O\arabic*]
	
	\item \label{obj:range-time} We intend to show the response times of aggregate
range queries that we obtain by using DB-trees. We expect them to be ideally 
logarithmic in the size of the selected range. We compare them against the
response time obtained by using a plain DBMS with the best available indexing.

	\item \label{obj:ins-del-time} We intend to measure the response time for 
insert and delete operations using DB-trees and compare them against the response time of the
same operations performed using a plain DBMS. 
	
\end{enumerate}

%Our experiments focus on the execution of aggregated range queries and of insert and delete operations.

% for Linux on x86_64 (MySQL Community Server - GPL).
Tests were performed on a 
high-performance notebook with 16GB of main memory (SDRAM DDR3), M.2 solid-state
drive (SSD), Intel Core i7-8550U processor (1.80GHz, up to 4.0GHz, 8MB Cache). 
To understand the behavior of our approach on a less performant hard-disk-based system, 
we also performed the tests on an old machine equipped with Intel Core 2 Quad Q6600 2.4GHz, 8 GB of main memory and a regular hard disk drive (HDD). 
In the following, the two platforms are shortly referred to as \emph{SSD} and \emph{HDD}, respectively. Both platforms are linux-based 
and during the tests no other
cpu-intensive or i/o-intensive tasks were active. 

We repeated all experiments with two widely used DBMSes: PostgreSQL (Version 11.3) and MySql (Version 8.0.16).
DBMSes were run within docker
containers, however, no limitation of any kind was configured on those containers.

Our software runs on the Java Virtual Machine (1.8). Since the JVM
incrementally compiles  the code according to the ``HotSpot''
approach~\cite{paleczny2001java}, for each test we took care to let the system
run long enough before performing the measurements, to be sure that compilation
activity was over: we run the whole tests more than once and consider only the
times measured in the last run. The DB-tree code is written using the Kotlin
programming language and adopt the standard JDBC drivers for the connection with
the DBMSes. Time measurements are performed using the Kotlin standard API
\emph{measureNanoTime}, that executes a given block of code and returns elapsed
time in nanoseconds. In our experiments, we set the GO-UP probability for DB-trees to 0.5 (see Section~\ref{ssec:db-tree-intuitive}).

Regarding Objective~\ref{obj:range-time}, we prepared a dataset of 1 million
key-value pairs, \rnote{C4.2} whose keys are the integers from 1 to 1 million and whose values are
random integers. We randomly shuffled the pairs and inserted them into the DBMS. For
testing the plain DBMS, we created a table with two columns, with the key of the pair as primary
key of the table, and the two possible indexes on (key, value) and on (value, key). For the table that 
represents the DB-tree, 
we have columns for $n.$min, $n.$max, $n.$level, plus a column that contains a serialized form of the 
whole node. We do not perform any selection on this last column. Its content is just returned to the client.
\rnote{C3.9} We configured ($n.\mathrm{min}$, $n.\mathrm{max}$,
$n.\mathrm{level}$) as the primary key index and ($n.\mathrm{max}$, $n.\mathrm{min}$,
$n.\mathrm{level}$) as index.
In both PostgreSQL and MySQL, indexes are plain B-trees. In MySQL, primary key indexing is clustered.

\rnote{C1.1}Insertion of DB-trees was performed with a prototypical implementation of the
bulk insertion approach described in Section~\ref{ssec:DBTree:upd-ins-del}. The
time taken for bulk insertions are reported in
Table~\ref{table:spaceoccupation}.

We show tests using the SUM aggregation function, which is a very widely used one and whose optimization is likely to be an objective of DBMS designers. 
In the tests based on a plain DBMS, we performed SQL queries like the following, where
we used  self explanatory names.
\begin{verbatim}
SELECT SUM(value) 
FROM test_table
WHERE range_start<=key AND key<=range_end
\end{verbatim}
For the DB-trees tests, we used our realization of Algorithm~\ref{algo:aggregate-range-query}.

\rnote{C3.8} Actually, we also performed the same tests with aggregation
functions MIN, MAX, COUNT, and AVG. The MIN and MAX functions are highly
optimized in both DBMSes, and queries take very small time to execute,
independently from the size of the range. Hence, there is no point in using
DB-trees for MIN and MAX, at least in the considered DBMSes. On the contrary,
the results for tests with COUNT and AVG are very much like those that we show
for function SUM, hence we decided not to show additional charts for them.

\begin{figure}
	\centering
	\includegraphics[width=\linewidth,trim={3cm 0 3.5cm 0},clip]{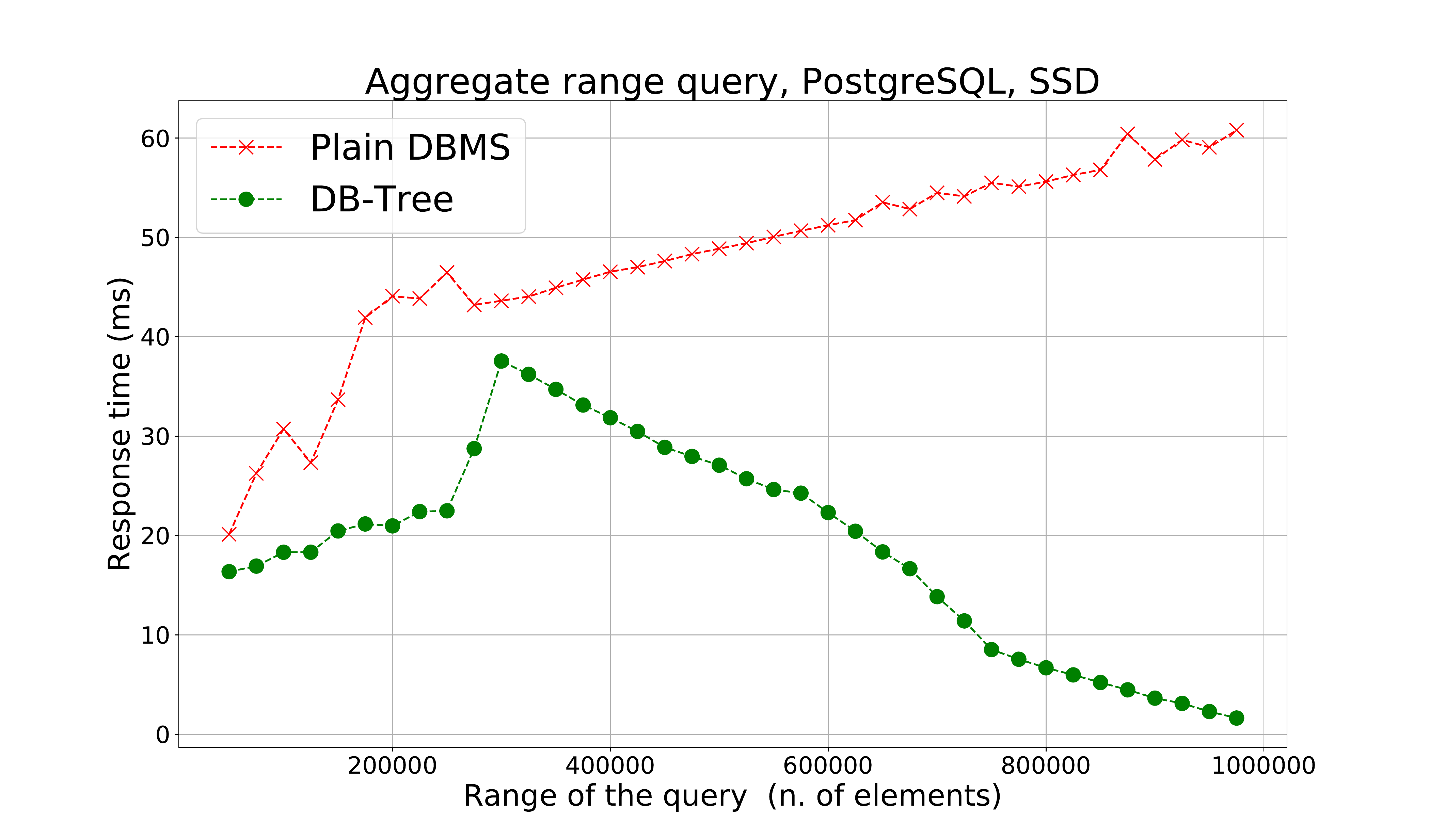}
	\infloatrnote{C4.2}\caption{Performances of aggregate range queries with a PostgreSQL DBMS, on the SSD platform.} 
	\label{fig:pos_SUM}
\end{figure}
\begin{figure}
	\centering
	\includegraphics[width=\linewidth,trim={2.5cm 0 3.5cm 0},clip]{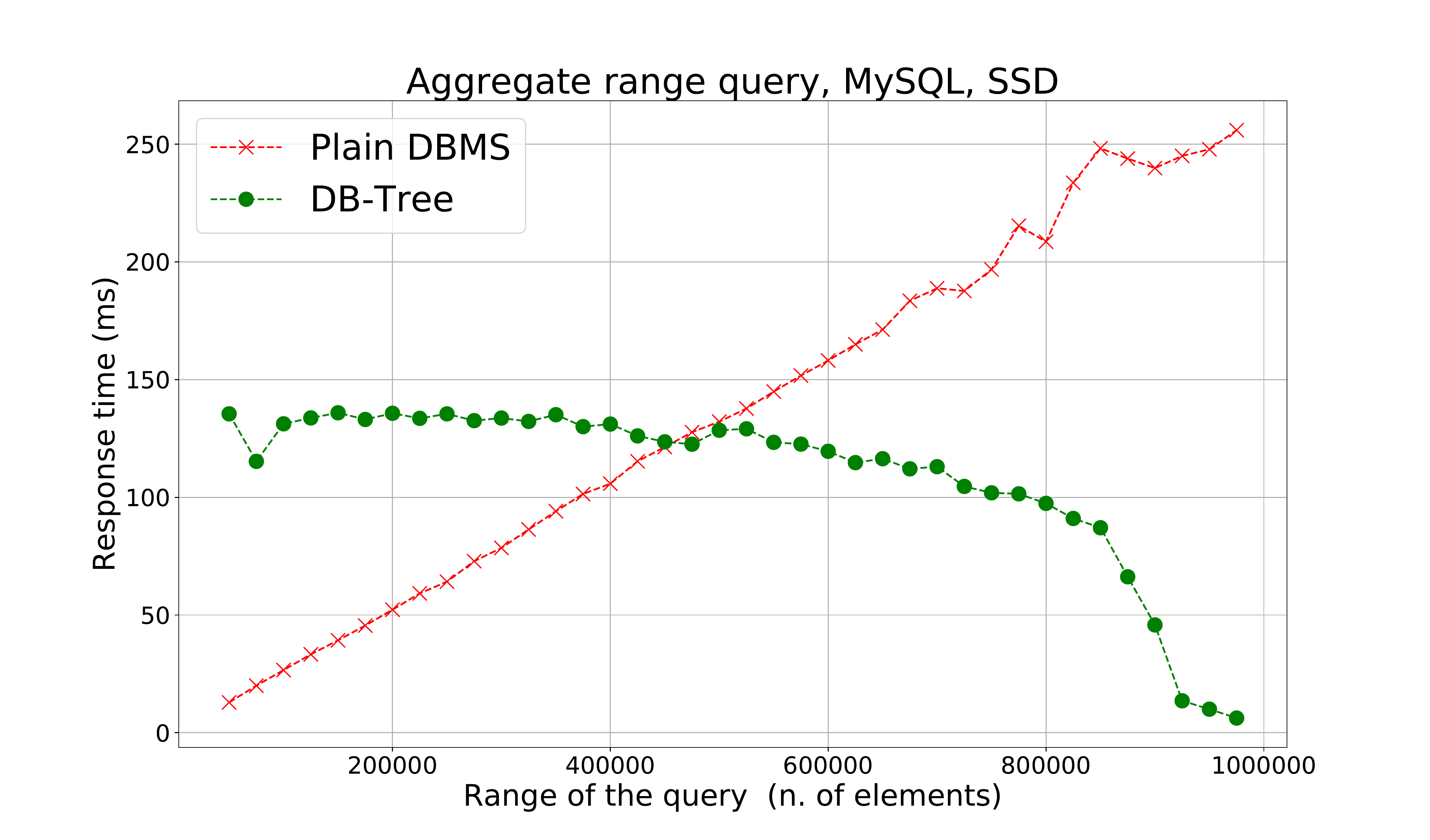}
	\infloatrnote{C4.2}
	\caption{Performances of aggregate range queries with a MySQL DBMS, on the SSD platform.}
	\label{fig:my_SUM}
\end{figure}
\begin{figure}
	\centering
	\includegraphics[width=\linewidth,trim={3cm 0 3.5cm 0},clip]{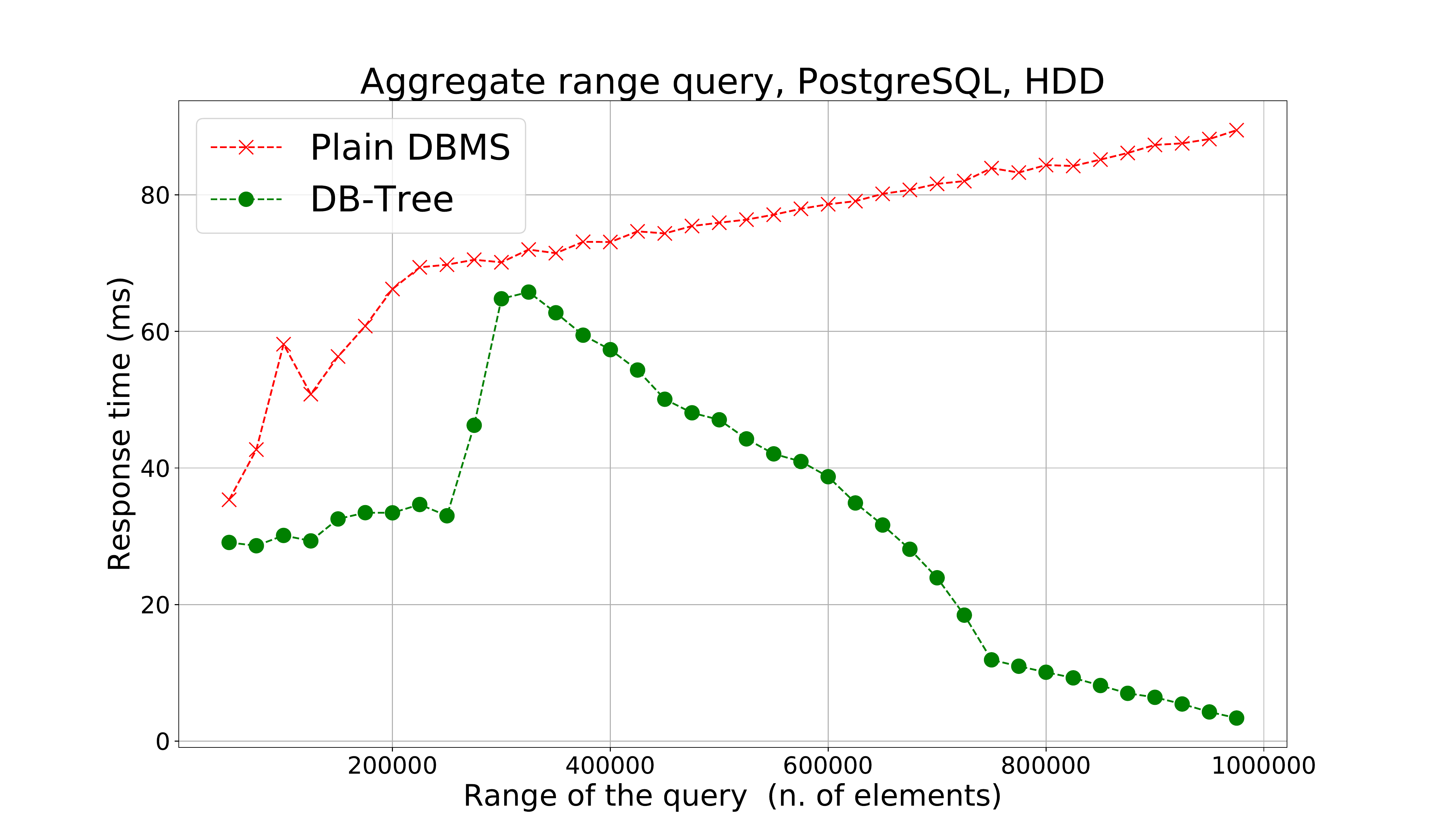}
	\infloatrnote{C4.2}
	\caption{Performances of aggregate range queries with a PostgreSQL DBMS, on the HDD platform.}
	\label{fig:pos_SUMHDD}

\end{figure}
\begin{figure}
	\centering
	\includegraphics[width=\linewidth,trim={3cm 0 3.5cm 0},clip]{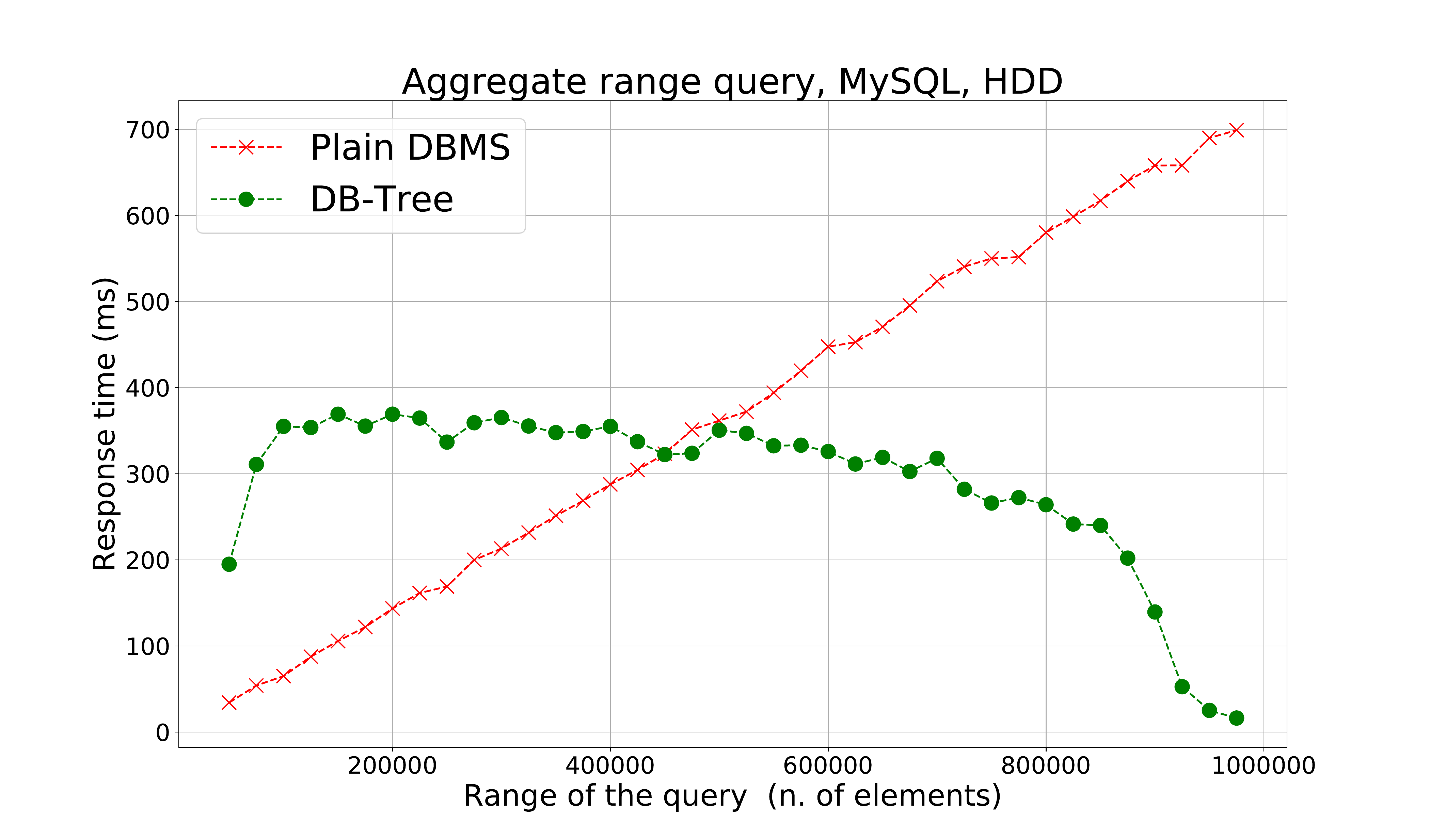}
	\infloatrnote{C4.2}
	\caption{Performances of aggregate range queries with a MySQL DBMS, on the HDD platform.} 
	\label{fig:my_SUMHDD}
\end{figure}

We generated the queries to be performed for the tests in the following way. We
considered range sizes from \rnote{C4.2}50.000 to 975.000, with steps of 25.000.
For each range size, we executed 200 queries with random ranges of that size
(i.e., with a random shift) and took the average of execution times. The dataset
and the ranges are the same for the tests using plain DBMSes and for the tests
using DB-trees.

Queries are executed without any delay in between. Results are shown in
Figures~\ref{fig:pos_SUM} and~\ref{fig:my_SUM}, for platform SSD, and in
Figures~\ref{fig:pos_SUMHDD} and~\ref{fig:my_SUMHDD}, for platform HDD. In each
figure, the x-axis shows the size of the range and the y-axis show the query
duration in milliseconds. Figures show performances for tests on plain DBMSes and for
tests adopting DB-trees. Using plain DBMS queries, the response time for the aggregate
range query is linear, for both PostgreSQL and MySQL. The tests show that, using
DB-trees, the response time is limited and well below the one obtained using
plain DBMS queries, starting from a certain range size. Concerning the shape
of the curves, we note that aggregate range queries are theoretically easy in
two extreme cases %
\begin{inparaenum}[(i)]
\item when the range is just a tuple, in this case an aggregate range query is
equivalent to a plain selection,  and %
\item when the range covers all the data, in this case a good strategy is to
keep an accumulator for the whole dataset. %
\end{inparaenum}
In all charts related to aggregate range queries, we note that the
curve for DB-trees response time is somewhat bell-shaped, reflecting that hard
instances are in the middle between the two extreme cases just mentioned.

We point out that the DBMS internally performs several non-obvious query
optimizations that can profoundly change the way the same kind of query is
performed on the basis of estimated amount of data to retrieve. A clear effect
of this can be seen in the roughly piecewise linear trend, shown in
Figures~\ref{fig:pos_SUM} and~\ref{fig:pos_SUMHDD}, for the performances of
aggregate range queries in PostgreSQL for plain DBMS tests.
Further, comparing charts for SSD and HDD platforms, we notice the expected
degradation due to the slower HDD technology and a less regular behavior for the HDD
case. However, trends are quite similar for both platforms for all cases.

\rnote{C3.3} Concerning space occupancy on disk, this clearly depends on actual
data types, indexes, and DBMS used. As an example, for the above described
experiments, the occupancy is summarized in Table~\ref{table:spaceoccupation}.
We report occupancy of data and indexes for a plain table, i.e., the one used for
plain DBMS tests, and for the table used for DB-tree
tests. We note that, the number of rows in the DB-tree table
is about half of the number of rows of the plain table.
In fact, in our representation each row represents a node and each node contains
one or more key-value pair. For PostgreSQL, the size of the data for the
DB-tree is roughly the same as the size of the data for the plain table, while
indexes are larger for the latter. For MySQL, the size of the data for the
DB-tree is about twice the size of the data for the plain table, while indexes
are comparable.

\begin{table*}
	\centering
	%\vskip 0.5cm
\begin{tabular}{|c|c|c|c|c|c|c|}
	\hline 
	DBMS &  & Rows & Tot. Sz. & Data Sz. & Index Sz. & Bulk Ins. Time \\ 
	\hline 
	\multirow{2}{*}{PostgresSQL} & Plain & 1M & 124MB & 42MB & 82MB & 35s\\ 
	\cline{2-7} 
	& DB-Tree & 505K & 89MB & 43MB & 46MB & 19s\\ 
	\hline 
	\multirow{2}{*}{MySQL} & Plain & 1M & 81MB & 43MB & 38MB &  2m35s\\ 
	\cline{2-7}
	& DB-tree & 505K & 118MB & 86MB & 32MB & 50s \\ 
	\hline 
\end{tabular}\infloatrnote{C1.1, C3.3} 

\caption{ Disk space occupancy and bulk insertion times for data used in the
	experiments of Section~\ref{sec:exepriments}.}
	\label{table:spaceoccupation}
\end{table*}

%We note that, both in the tests with postgresql and mysql, for the queries with a range greater than $ 675*10^3 $ the database, execute an optimization that affect in a good way our solution more than the simple DB.

%To obtain replicable tests, we run each test using a gradle script that generate the all key-value pairs with a pseudo-random order of insert and all the input for the queries to be tested.
%The pseudo-random generator takes a seed that allow us to create the same tests at each execution without a human influence regarding the keys to be tested.
%

\begin{figure}
	\centering
	\includegraphics[width=\linewidth,trim={3cm 0 3.5cm 0},clip]{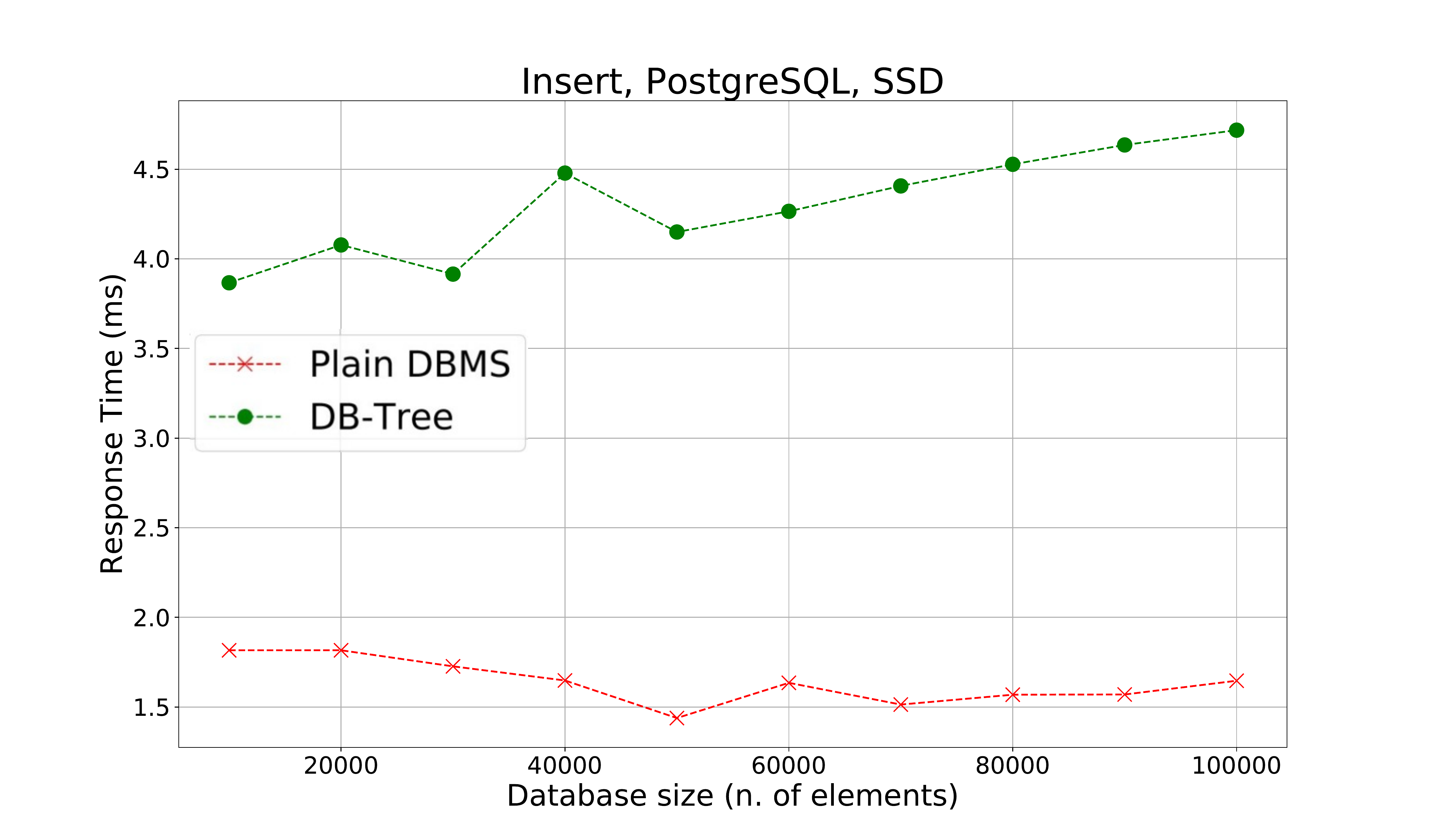}
	\infloatrnote{C4.2}\caption{Performance of the insert operation of a key with a PostgreSQL DBMS, on the SSD platform.} \label{fig:pos_ins}
\end{figure}

\begin{figure}
	\centering
	\includegraphics[width=\linewidth,trim={2.5cm 0 3.5cm 0},clip]{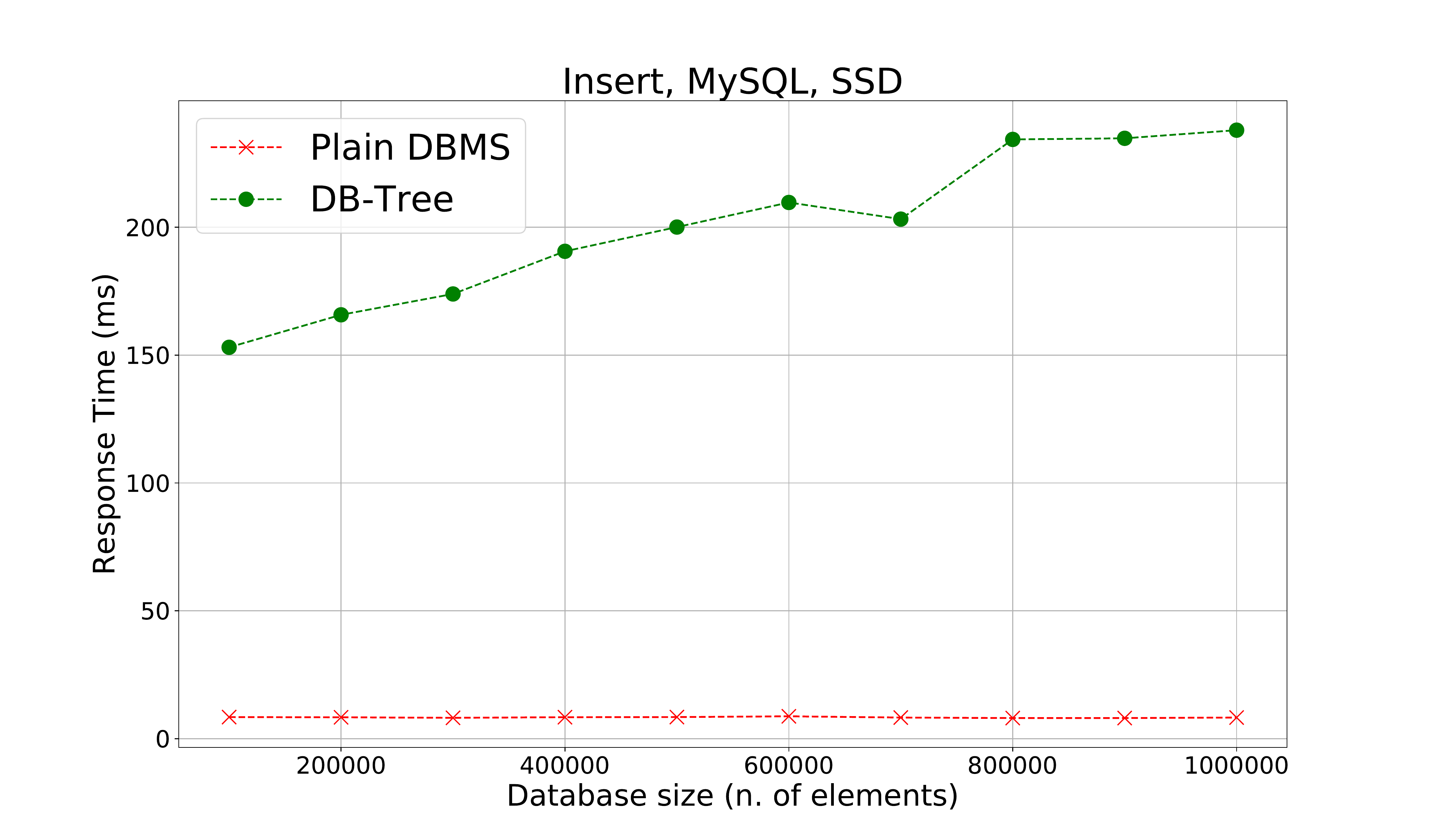}
	\infloatrnote{C4.2}\caption{Performance of the insert operation of a key with a MySQL DBMS, on the SSD platform.} 
	\label{fig:my_ins}
\end{figure}

\begin{figure}
	\centering
	\includegraphics[width=\linewidth,trim={3cm 0 4cm 0},clip]{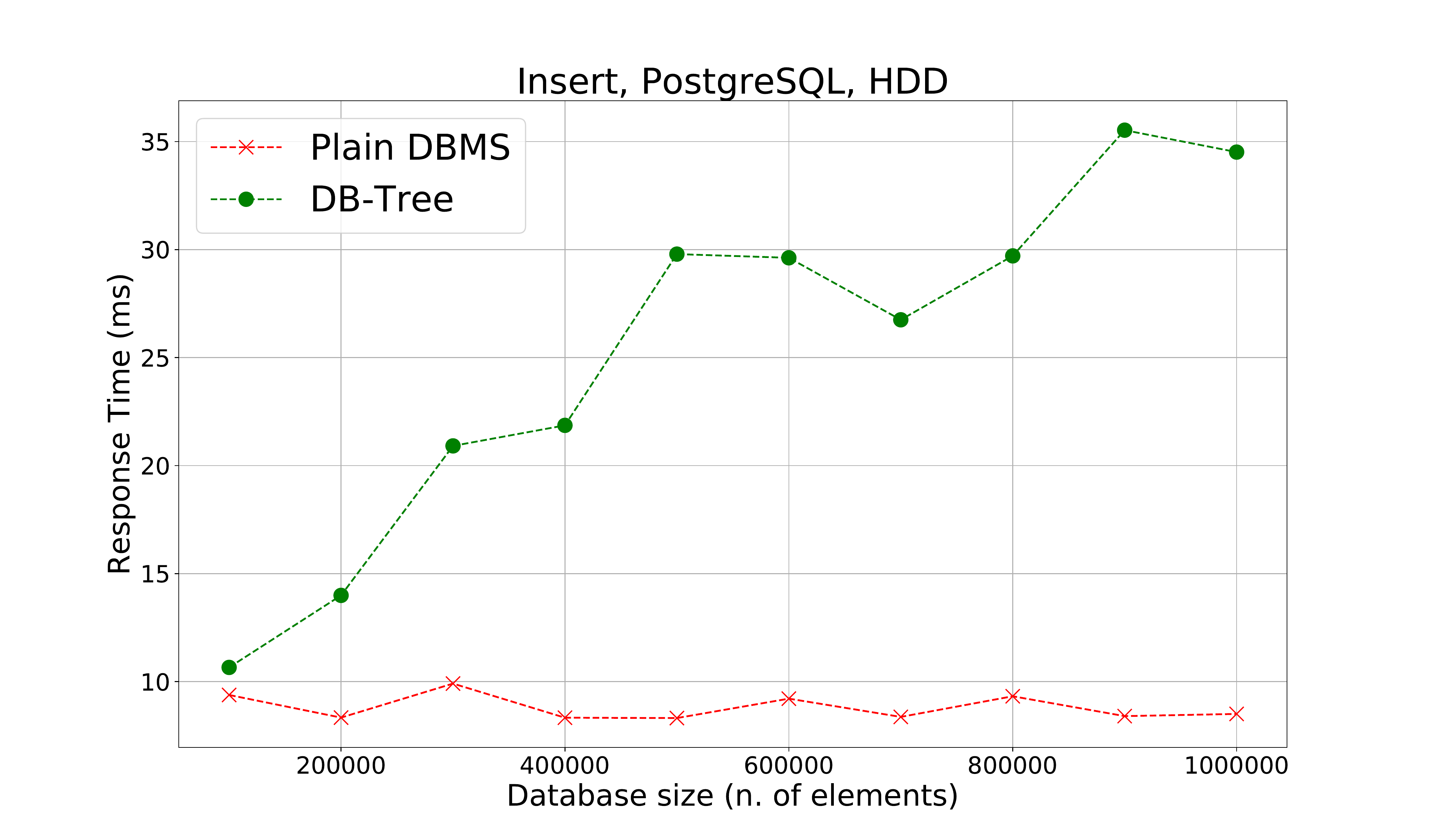}
	\infloatrnote{C4.2}\caption{Performance of the insert operation of a key with a PostgreSQL DBMS, on the HDD platform.} 
	\label{fig:pos_ins_SUMHDD}

\end{figure}

\begin{figure}
	\centering
	\includegraphics[width=\linewidth,trim={2.5cm 0 4cm 0},clip]{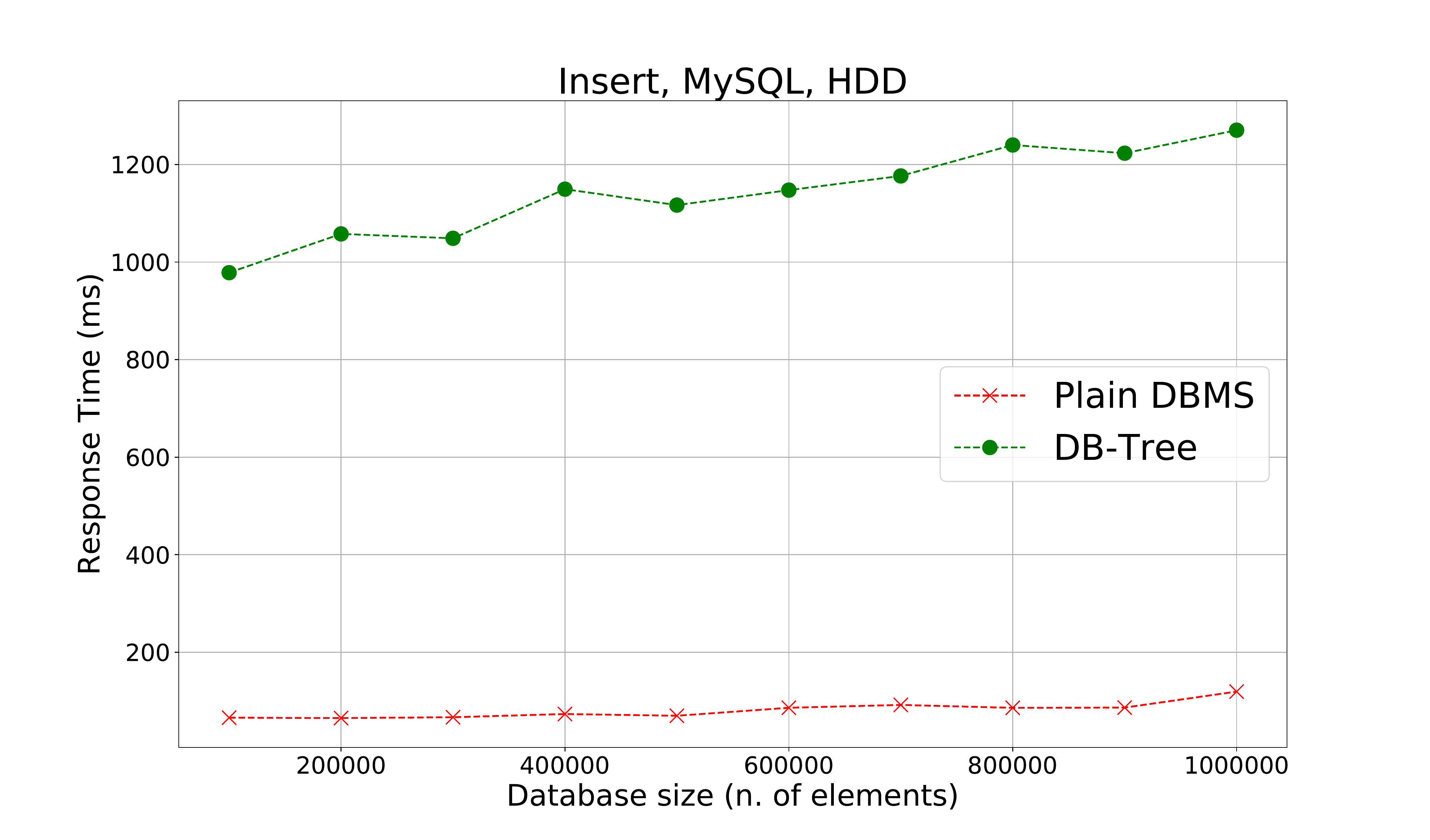}
	\infloatrnote{C4.2}\caption{Performance of the insert operation of a key with a MySQL, on the HDD platform.} 
	\label{fig:my_ins_SUMHDD}
\end{figure}

\begin{figure}
	\centering
	\includegraphics[width=\linewidth,trim={2.5cm 0 4cm 0},clip]{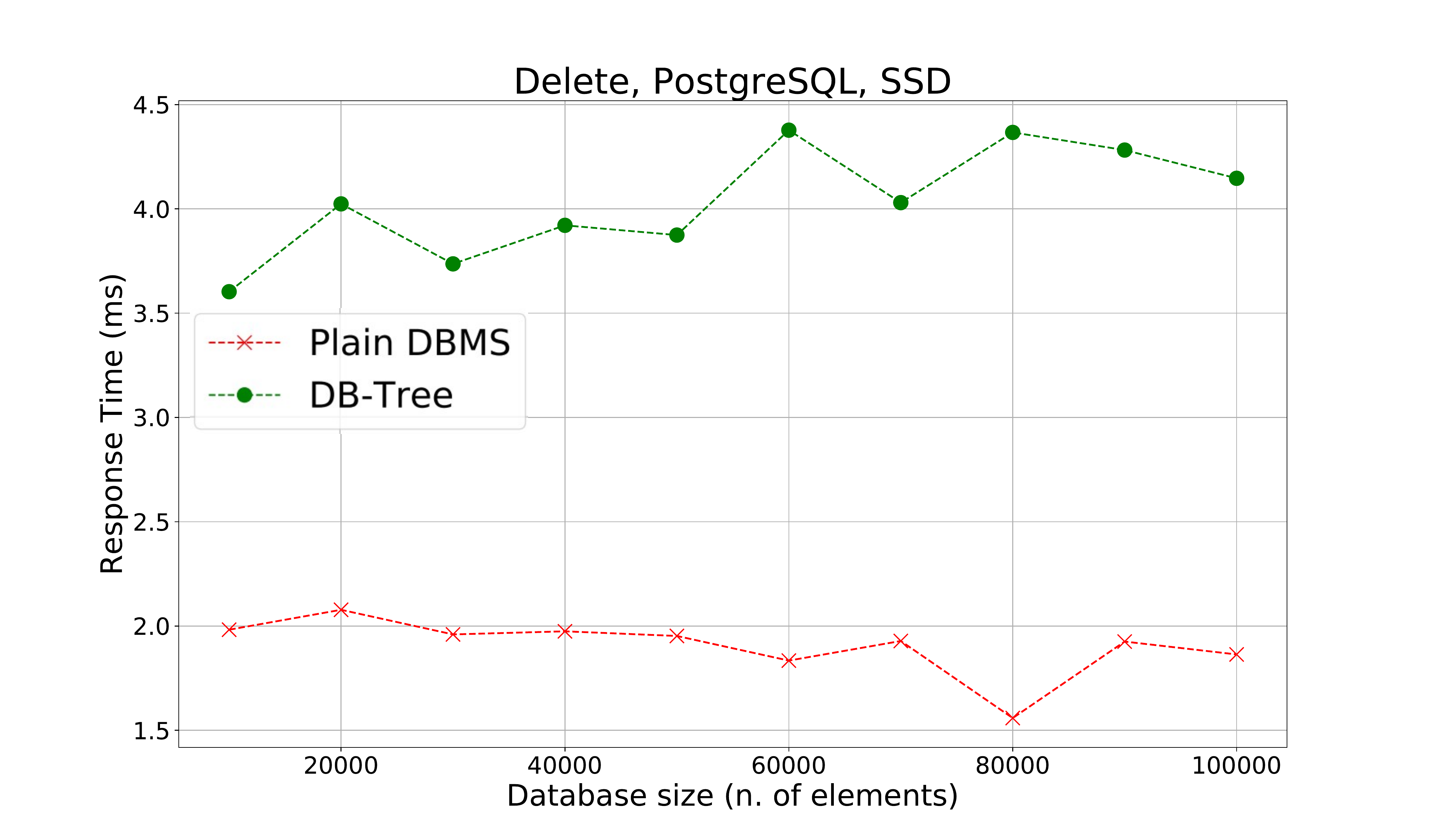}
	\infloatrnote{C4.2}\caption{Performance of the delete operation of a key with a PostgreSQL DBMS, on the SSD platform.} 
	\label{fig:pos_del}
\end{figure}

\begin{figure}
	\centering
	\includegraphics[width=\linewidth,trim={2.5cm 0 4cm 0},clip]{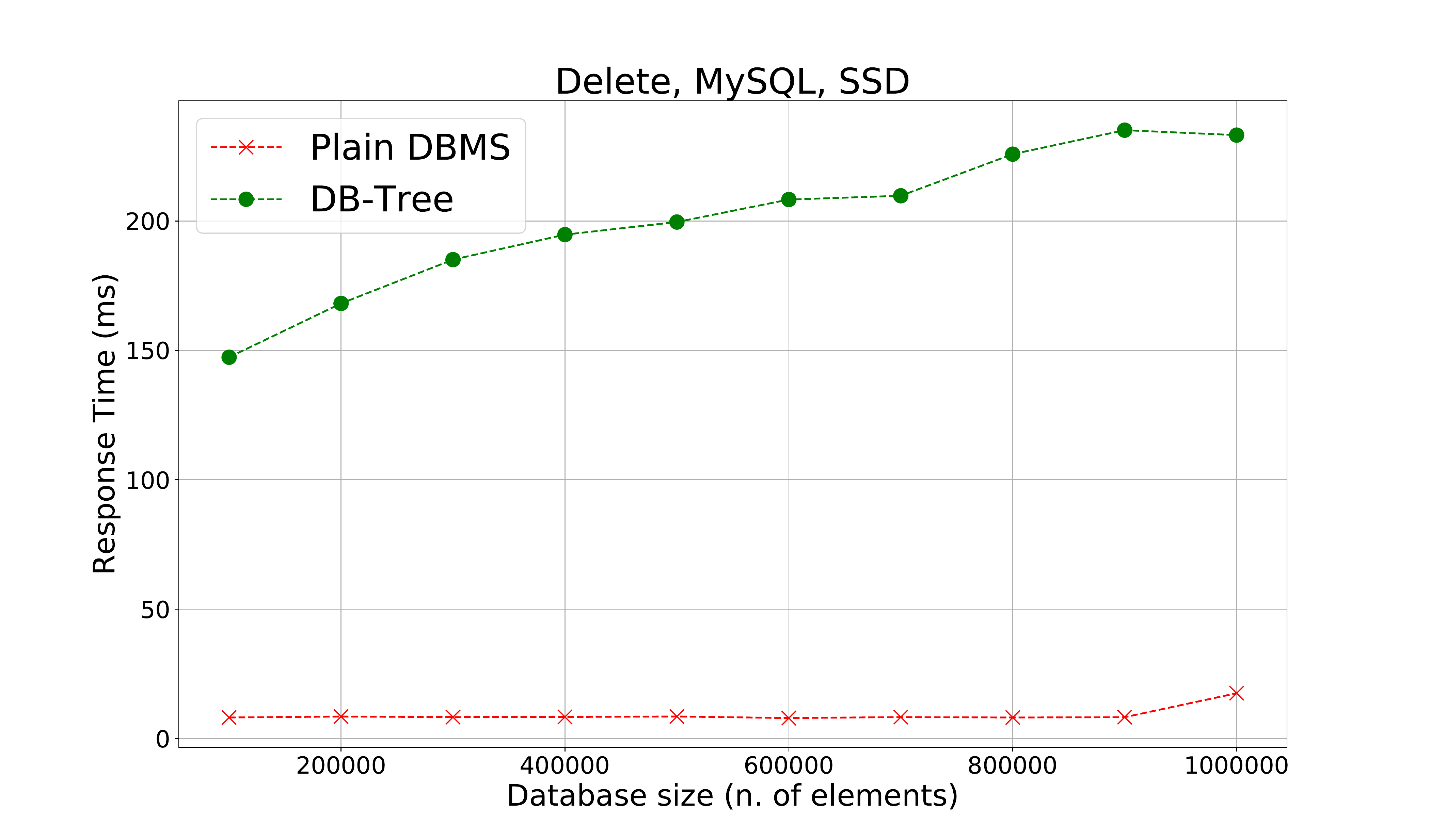}
	\infloatrnote{C4.2}\caption{Performance of the delete operation of a key with a MySQL DBMS, on the SSD platform.} 
	\label{fig:my_del}
\end{figure}

\begin{figure}
	\centering
	\includegraphics[width=\linewidth,trim={3cm 0 4cm 0},clip]{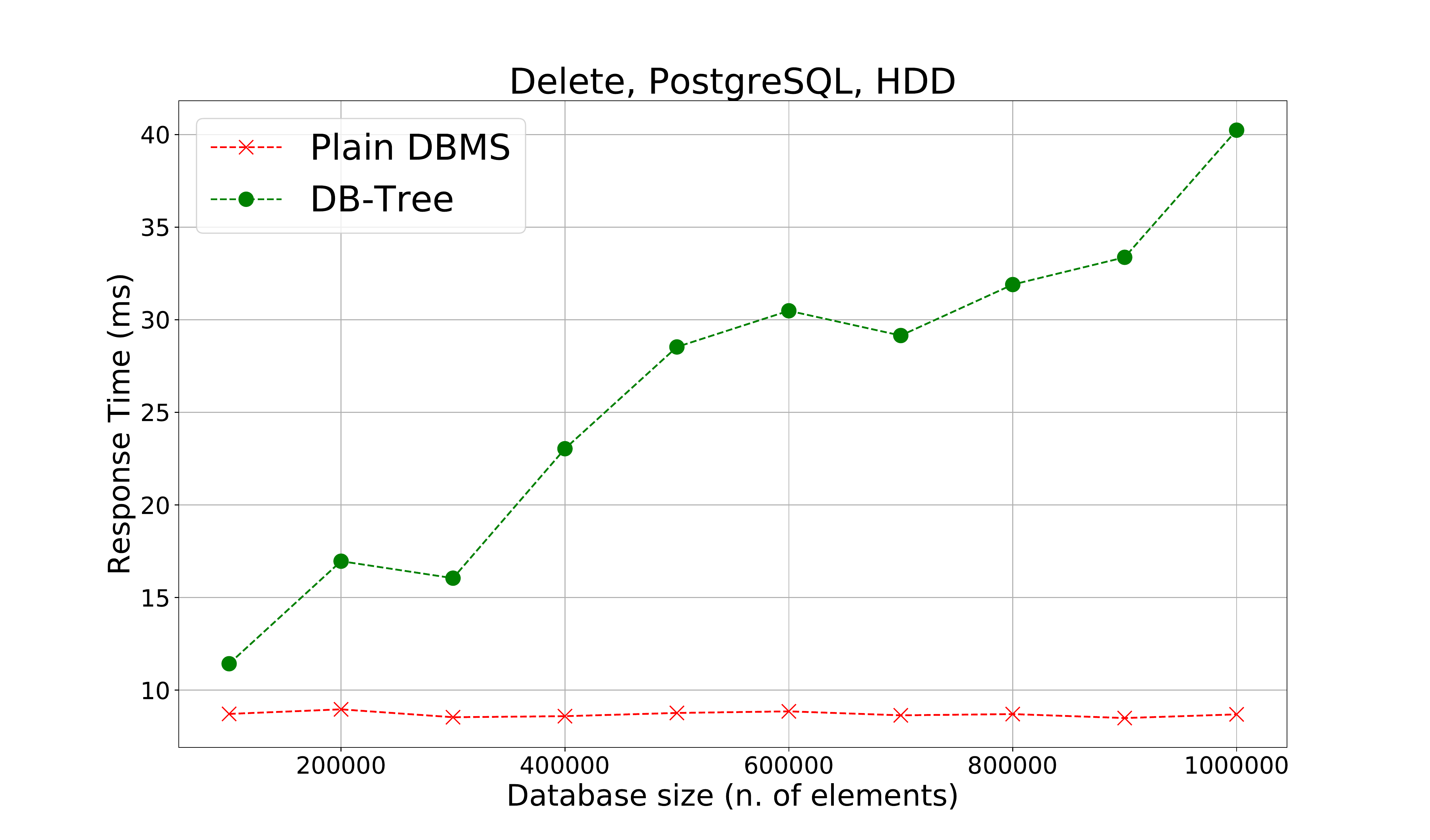}
	\infloatrnote{C4.2}\caption{Performance of the delete operation of a key with a PostgreSQL, on the HDD platform.} 
	\label{fig:pos_del_SUMHDD}
\end{figure}

\begin{figure}
	\centering
	\includegraphics[width=\linewidth,trim={2.5cm 0 4cm 0},clip]{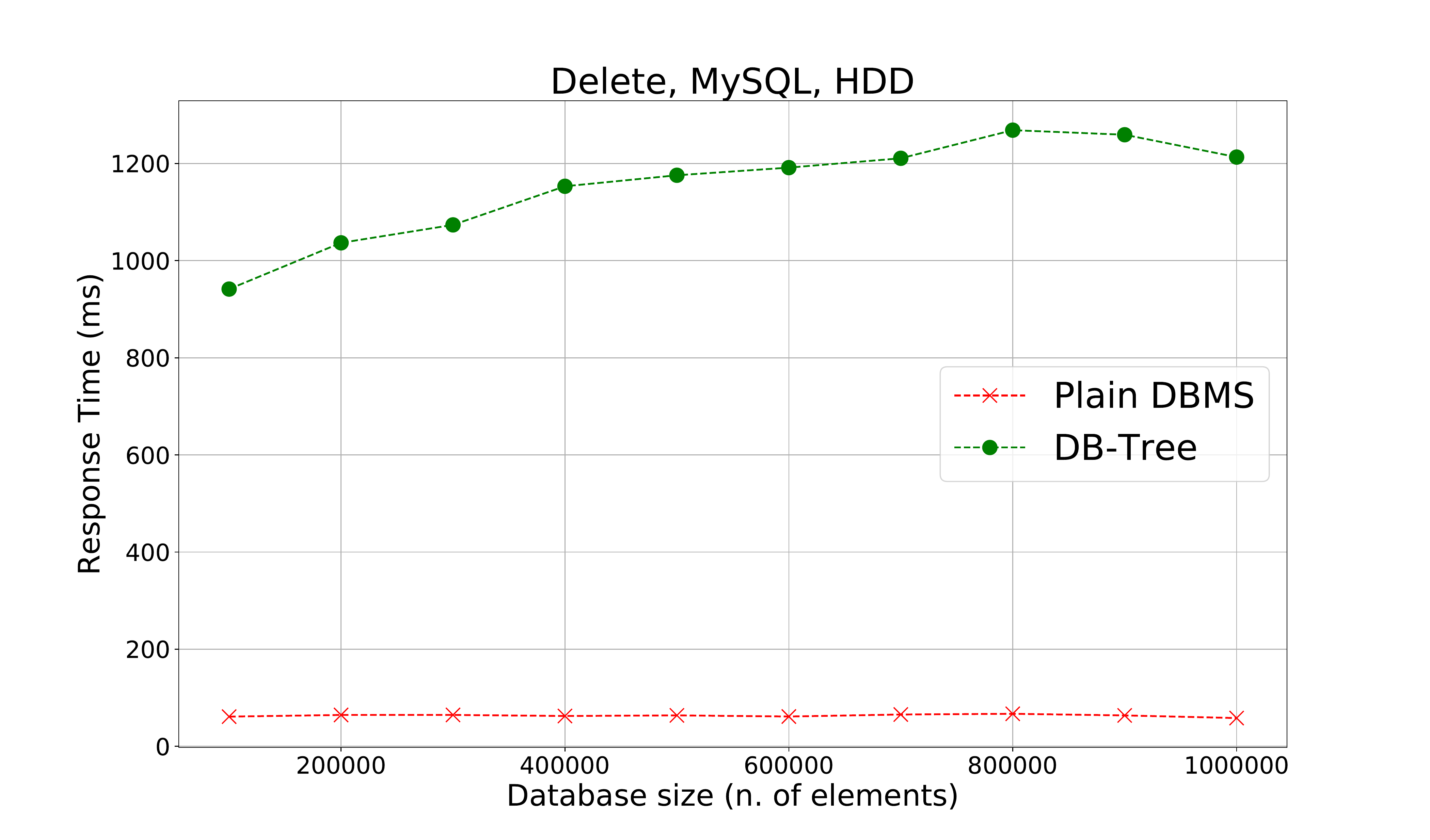}
	\infloatrnote{C4.2}\caption{Performance of the delete operation of a key with a MySQL, on the HDD platform.} 
	\label{fig:my_del_SUMHDD}
\end{figure}

Concerning Objective~\ref{obj:ins-del-time}, we created 10 datasets containing
\rnote{C4.2}from 100.000 to 1 million key-value pairs, with step of 100.000. For each dataset,
we created the corresponding databases for both plain DBMS tests and for
DB-trees tests. Tables and indexes are as for Objective~\ref{obj:range-time}. To
test insert response time, we preformed 200 insertions of random keys that were
not in the dataset. For each of them, we measured the response time and then
deleted the just inserted key to keep the dataset size constant. 
Operations are executed without any delay in between. Results are
shown in Figures~\ref{fig:pos_ins} and~\ref{fig:my_ins}, for platform SSD, and
in Figures~\ref{fig:pos_ins_SUMHDD} and~\ref{fig:my_ins_SUMHDD}, for platform
HDD. 
In these figures, the x-axis shows the size of the number
of key-value pairs in the database and the y-axis shows the response time in
milliseconds.

Analogously, we measured response time for deletion. We preformed 200 deletions
of random keys that were present in the dataset. For each of them, we measured
the response time and then re-inserted the just deleted key-value pair to keep
the dataset size constant. Results are shown in Figures~\ref{fig:pos_del}
and~\ref{fig:my_del}, for platform SSD, and in Figures~\ref{fig:pos_del_SUMHDD}
and~\ref{fig:my_del_SUMHDD}, for platform HDD. Axes are as in the
figures for insert tests. Operations are executed without any delay in between for
all cases, but for the experiment shown in Figure~\ref{fig:my_del_SUMHDD}. In
fact, in this case, with no delay, we observed an anomalous ramp-up trend, which
we think was due to the exceeding of the maximum throughput of the system. In
this case, we performed the test waiting 400ms after each operation.

As expected, charts show that, adopting DB-trees, insert and delete operations
are more costly. 
For plain DBMS tests, the response time looks
essentially constant. This is likely due to the fact that changes are just
written in a log before acknowledging the operation to the client, and
operations are all equal in size. On the contrary, for DB-trees, we can see a
slow but clearly increasing trend, which conforms to the expected logarithmic
trend predicted by the theory. In fact, also in this case operations are just
stored in a log, but the number of actual operations requested to the DBMS is
logarithmic.
For our experiments, when using DB-trees, the slow-down factor for both insert and
delete operations is within 2-4, for PostgreSQL, and within 10-20, for MySQL. 
These factors are quite independent from the kind of platform adopted (i.e., SSD vs. HDD).

%  GIGA INFO
%  CPU: Core 2 Quad Q6600 2.4GHz 
%  RAM: 
%  HDD: 

About the comparison of charts for SSD and HDD platforms, the same remarks we
made for Objective~\ref{obj:range-time} applies.

%In both approach, each result, is the arithmetic mean of the execution of 200 operations with different keys. To maintain a fixed size of the DB, for all the 200 loops, the insert (delete) operation, first insert (delete) a specific key, measuring the time of execution, than delete (insert) the same key. 
%

%
%Figures~\ref{fig:pos_ins} and \ref{fig:my_ins} show the performance of the execution of insertion of a key with and without our solution.
%Figures~\ref{fig:pos_del} and \ref{fig:my_del} show the performance of the execution of deletion of a key with and without our solution.

% !TeX encoding = UTF-8
% !TeX root = main.tex
% !TeX spellcheck = en_US

\section{Supporting Group-By Range Queries}\label{sec:groupBy}

In this section, we address a common need that is slightly more complex than an aggregate range query. We deal with the aggregation of values of a certain column performed on the basis of distinct values of a second column limited to a certain range of a third one.
This need is usually fulfilled using the SQL construct GROUP-BY and we refer to this kind of queries as \emph{group-by range queries}. 
\begin{figure}
	\begin{verbatim}
	SELECT Department, SUM(Amount)
	FROM Sale
	WHERE '2010-02-01' <= Date 
	       AND Date <= '2010-03-15'
	GROUP BY Department
	\end{verbatim}
	\caption{Example of GROUP-BY range query.}
	\label{fig:SQLgroupby}
\end{figure}
For example, suppose to have a table that represents the sales of goods performed by each department of a company for each day, and we want to  know the total sales in a specific period for each department.
Suppose the table is called Sale and has columns Department, Date, and Amount. 
Figure~\ref{fig:SQLgroupby} shows an example of group-by range query to obtain this result, 
for an arbitrary range, expressed in plain SQL.

%For example, if we have a table that represents the purchase of an asset sold by a Company on a given Date, and we want to 
%Table "Purchases" with the following columns (Company, Date, Billing), 

To exploit DB-trees with this purpose, we define the key of the DB-tree to be the pair (Department, Date),
where Department is the most significant part. 
A possible approach to perform the query, is to execute a plain query to obtain the distinct departments $d_1, d_2, \dots$ and then execute, in parallel, Algorithm~\ref{algo:aggregate-range-query} for each department 
with ranges  ($d_1$, start\_range)-($d_1$, end\_range), ($d_2$, start\_range)-($d_2$, end\_range), etc.
With this approach, we perform two query rounds. In the second round, we perform a number of independent queries (one for each department) and
each of them is independently optimized by the query planner.

\begin{algorithm}
	\caption{This algorithm performs a group-by range query on a DB-tree and its associated regular data table. }
	\label{algo:aggregate-range-query-GroupBy}
	
	\begin{algorithmic}[1] 
		
		\Require A regular data table $D$ with columns $x$, $y$ and $v$, an associated DB-tree $T$ on all keys $(x,y)$ with associated values $v$, and two values $y'$ and $y''$ (with $y' < y''$).
		
		\Ensure A set of pairs in the form $(x, \alpha_x)$, such that $\alpha_x=\alpha(v_1,\dots, v_{r_x})$, where $r_x$ is the number of rows of $D$, selected such that they have the given $x$ and have $y' \leq y \leq y''$, and 
		$v_1,\dots, v_{r_x}$ are the values for $v$ for those rows of $D$ (and corresponding key-value pairs in $T$).
				
		\LineComment Execute Lines~\ref{line:groupby:L_tot}-\ref{line:groupby:bar-n_tot} in one query round.

		\State \label{line:groupby:L_tot}$\bar L \gets $ nodes $n$ from $T$ such that 
		$y' < n.\mathrm{max}.y \leq y''$ AND $(n.\mathrm{min}.y < y'$ OR $n.\mathrm{min}.x \neq n.\mathrm{max}.x) $.

		\State \label{line:groupby:R_tot}$\bar R \gets $ nodes $n$ from $T$ such that 
		$y' \leq n.\mathrm{min}.y < y''$ AND $(y'' < n.\mathrm{max}.y $ OR $n.\mathrm{min}.x \neq n.\mathrm{max}.x)$. 
		
%		\State \label{line:query:-bar-n_tot}$\bar N \gets $ pairs $ (x, n_x) $ where
%		$\forall x $ in $ T $ associates with it the node $ n_x $ such that $ n_x.\mathrm{min} < (x, y') < (x, y'') < n_x.\mathrm{max} $ with minimum level $ l $. \todo[inline]{questo va rivisto}
%		
		\State Let $X$ denotes the distinct values of $x$ in $D$.

		\State \label{line:groupby:bar-n_tot}$\bar N \gets $ pairs $ (x, n) $ such that %
		$x\in X$, $n$ is in $T$, %
		$ n.\mathrm{min} < (x, y') < (x, y'') < n_x.\mathrm{max} $, and $n.\mbox{level} $ is minimum.

		%		select DISTINCT ON (dip) dip, *
		%		from DB_tree, (select distinct dip from sta) as s
		%		where ((min_dip = s.dip and min_dat < 2) OR min_dip < s.dip) AND
		%		((max_dip = s.dip and max_dat > 4) OR max_dip > s.dip)
		%		order by dip,l

		%		$(n.\mathrm{min}.y < y' < y'' < n.\mathrm{max}.y ) $ OR $ (n.\mathrm{min}.x \neq n.\mathrm{max}.x $ AND $ y'' < n.\mathrm{max}.y ) $
		%		$(n.\mathrm{min}.y < y' $ AND $(n.\mathrm{max}.y > y'' $ OR $n.\mathrm{min}.x \neq n.\mathrm{max}.x) )$ OR 
		%		$(n.\mathrm{min}.y > y' $ AND $n.\mathrm{max}.y > y'' $ AND $n.\mathrm{min}.x \neq n.\mathrm{max}.x )$
		%		 \todo[inline]{Da ricontrollare, il $ \neq $ implica che è minore, min.y può essere più grande di max.y}
		%		 with minimum level. \Comment removing duplicates if present

		\State For all $x \in X$
		\State \label{line:groupby:nx}\qquad let $n_x$ be the node associated with $x$ in $\bar N$,  
		\State \label{line:groupby:Lx}\qquad let $ L_x$ be sequence of nodes $n$ in $ L$ with $n.\mathrm{max}.x=x$,
		\State \label{line:groupby:Rx}\qquad let $ R_x$ be sequence of nodes $n$ in $ R$ with $n.\mathrm{min}.x=x$.
%		\State \label{line:groupby:Rx}Let $ R_x$ be the list of nodes $n$ in $ R$ where $x=n.\mathrm{min}.x$ for all possible values of $n.\mathrm{min}.x$ in $ R$.
		%\State \label{line:groupby:nx}Let $ n_x$ be the node in $\bar N$ with at least one key $(\bar x, \bar y)$ having $y' \leq \bar y \leq y''$ and having $x=\bar x$.

		%		\State Let $ M_L $ an empty map that associate $ x $ to a list of nodes $ L_x $  \Comment Splitting phase
		%		\State Let $ M_R $ an empty map that associate $ x $ to a list of nodes $ R_x $  
		%		\State Let $ M_n $ an empty map that associate $ x $ to a node $ \bar n_x $ 
		%\State \Comment The WHERE clause can cause that not all $ x $ in $ T $ are selected
		
		%		\ForAll { nodes $n$ in $L$ }
		%			\State Append $ n $ to the list $ M_L[n.max.first] $
		%		\EndFor
		%		\ForAll { nodes $n$ in $R$ }
		%			\State Append $ n $ to the list $ M_R[n.min.first] $	
		%		\EndFor
		%		\ForAll { nodes $n$ in $\bar n$ }
		%			\ForAll  { $ p $ in $ n $ }
		%				\If {$y' \leq p.k.second \leq y''$}
		%					\State $M_n[ p.k.first] \gets n$
		%				\EndIf
		%			\EndFor			
		%		\EndFor
		\State Let $S$ be an empty set.
		\ForAll {triples $\langle n_x,  L_x,  R_x \rangle$ }
		\State Execute Algorithm~\ref{algo:aggregate-range-query} 
		starting from Line~\ref{line:query:computation-begin} 
		with $\bar n \gets  n_x$, $L \gets L_x$, $R \gets R_x$ and 	
		with $ k' \gets (x, y') $ and $k'' \gets  (x, y'') $. 
		In executing Algorithm~\ref{algo:aggregate-range-query}, modify the behavior of Lines~\ref{line:query:L-groupby-changed},~\ref{line:query:R-groupby-changed}, and~\ref{line:query:n-groupby-changed} so that aggregate values next to a key 
		that have the most significant part $\neq x$ should be ignored (since they are 
		not related to group for $x$).
		\State  Let $\alpha_x$ be the result of the above call to Algorithm~\ref{algo:aggregate-range-query}.
		\State Add $(x, \alpha_x)$ to $S$.
		\EndFor
		
		\State \Return $S$
	\end{algorithmic}
	
\end{algorithm}

We now show that, adopting DB-trees, we can perform the above query in one query
round using only three distinct queries as in
Algorithm~\ref{algo:aggregate-range-query}. The procedure is shown in
Algorithm~\ref{algo:aggregate-range-query-GroupBy}, which is a variation of
Algorithm~\ref{algo:aggregate-range-query}. We assume a setting with a regular
data table $D$, containing the data, and additional overlay-table $T$ containing
a DB-tree, also denoted by $T$ and coherent with $D$ (see
Section~\ref{ssec:architectural-aspects}). We assume that $D$ has columns
$x,y,v$ and that the user intends to perform aggregation on column $v$, grouping
on column $x$, while selecting a range on column $y$. For simplicity, in the
following, we use symbols $x$, $y$ and $v$ to denote both column names and the
corresponding generic values of the columns. To build the DB-tree $T$, we consider all the
triples $(x,y,v)$ taken from the rows of $D$ considering each pair $(x,y)$ as
key and $v$ as its corresponding value. We intend $x$ to be the most significant
part of the key, regarding ordering in $T$. 

The first objective of Algorithm~\ref{algo:aggregate-range-query-GroupBy} is to
obtain from $T$, node $n_x$, and sequences $L_x$ and $R_x$, for each distinct
$x$ in $D$. These play the same role that $\bar n$, $L$ and $R$ play in
Algorithm~\ref{algo:aggregate-range-query}. Then, the part of
Algorithm~\ref{algo:aggregate-range-query} that can run in main memory is
performed on each triple $\langle n_x, L_x, R_x\rangle$ to obtain each row of
the result. Lines~\ref{line:groupby:L_tot}-\ref{line:groupby:bar-n_tot} retrieve
all the data needed in one round. Then, we proceed with computations performed
in main memory. Nodes in $\bar N$, $\bar L$, and $\bar R$ are sorted out into
their respective $n_x$, $L_x$ and $R_x$, in
Lines~\ref{line:groupby:nx}-\ref{line:groupby:Rx}. Finally, we execute a
procedure very similar to that of Algorithm~\ref{algo:aggregate-range-query},
for each group, i.e., for each value of $x$. The only notable change is to
additionally check that aggregations take into account only aggregate values
between keys having $x$ as most significant part. In fact, if this is not true,
the aggregate value is not related to the current group (or at least not
completely) and should be ruled out. 

\begin{figure*}
	\centering
	\includegraphics[width=\textwidth]{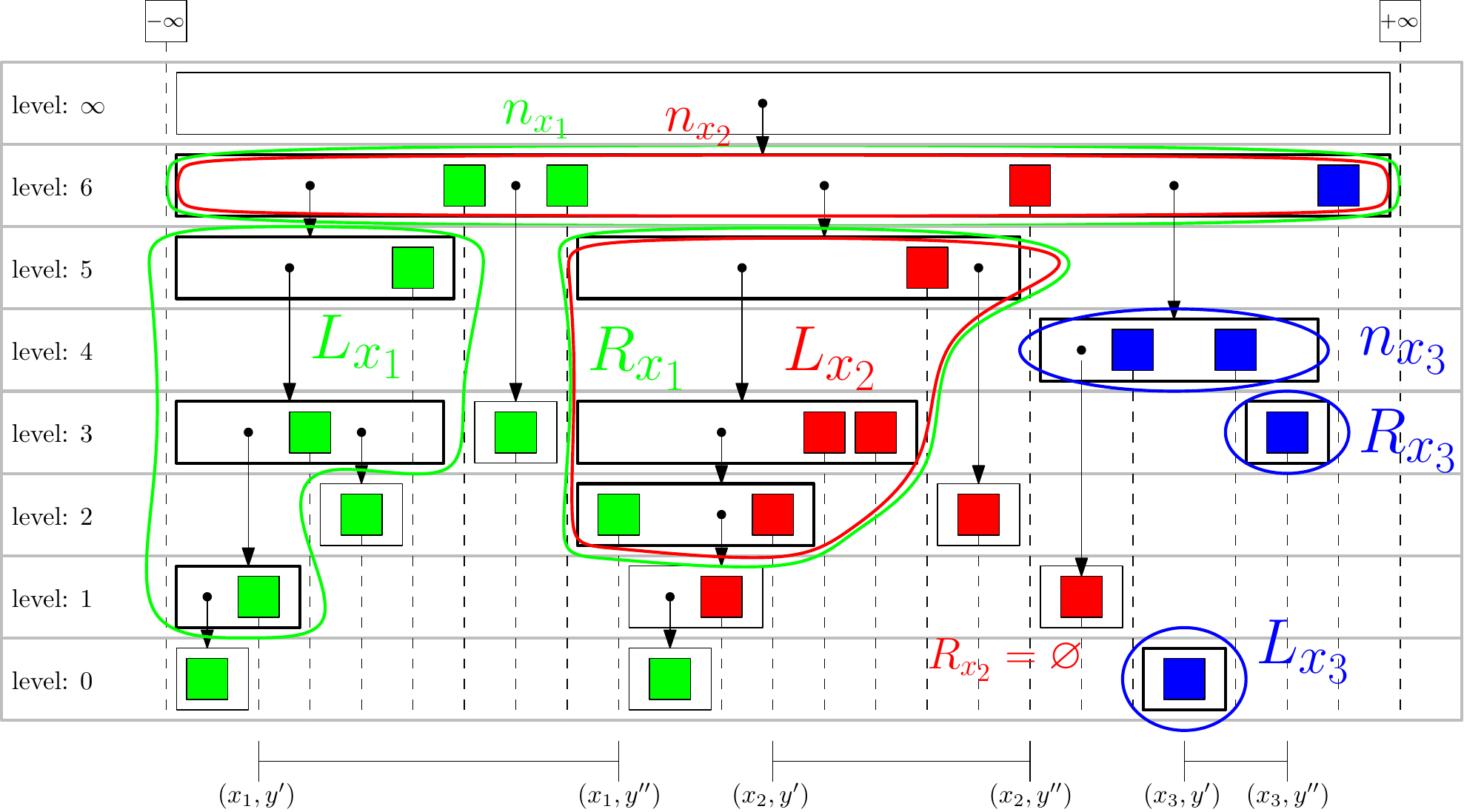}
	\caption{An example of DB-tree evidencing sets $L_x$, $R_x$, and $n_x$
	for a group-by range query executed by Algorithm~\ref{algo:aggregate-range-query-GroupBy}. The details are
    explained in the text.} 
    \label{fig:DB-Tree_groups}
\end{figure*}

\begin{figure*}
	\begin{lstlisting}
	SELECT DISTINCT ON (S.x) x, $T$.*
	FROM $T$, (SELECT DISTINCT x FROM $D$) AS S
	WHERE ((min_x = S.x and min_y < $y'$) OR min_x < S.x) AND
	((max_x = S.x and max_y > $y''$) OR max_x > S.x)
	ORDER BY S.x, level
	\end{lstlisting} 
	\caption{Example of SQL query (in PostgreSQL dialect) to obtain $ \bar N $ in Algorithm~\ref{algo:aggregate-range-query-GroupBy}.}
	\label{fig:SQLbarN}
\end{figure*}

Figure~\ref{fig:DB-Tree_groups} shows an example of DB-Tree on which a group-by
range query is performed. Squares of distinct colors correspond to keys with distinct
values of $x$ (green for $x_1$, red for $x_2$, and blue for $x_3$). The range on
$y$ is given by $ y' $ and $ y'' $. Lines on the bottom of the figure represent
ranges for each value of $x$. Colored contours show, for each value of $x$, the
nodes in $n_x$, $L_x$, and $R_x$. Note that, nodes related to distinct groups
(i.e., for distinct values of $x$) may overlap.

\begin{figure*}
	\centering
	\includegraphics[width=0.9\linewidth]{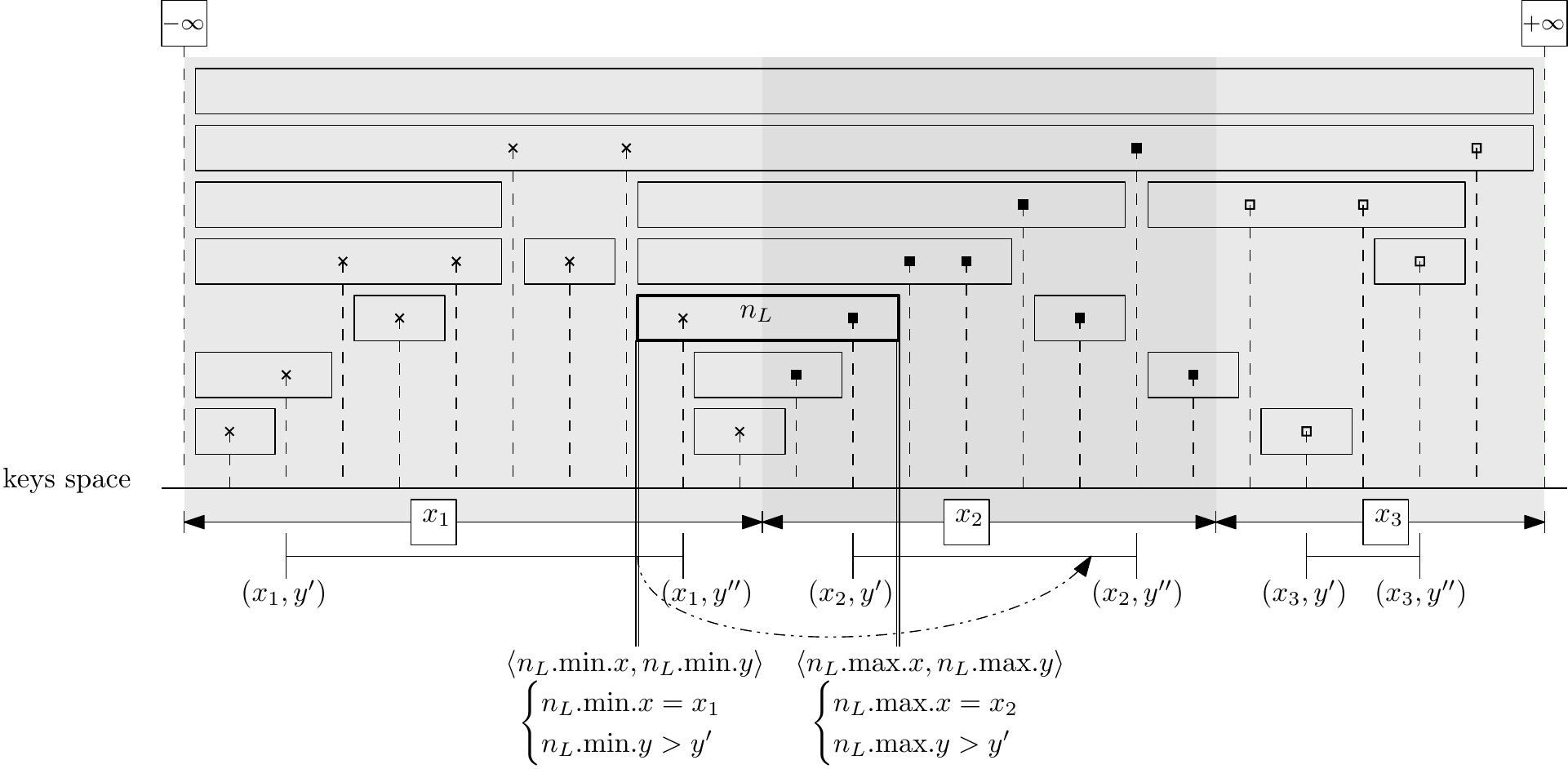} \infloatrnote{C2.1} 
	\caption{A case handled by Algorithm~\ref{algo:aggregate-range-query-GroupBy} in a special way. Node $ n_L $ spans more than one value for $ x $. See text for details.}
	\label{fig:groupbymissing}
\end{figure*}

It is useful to make some remarks on
Lines~\ref{line:groupby:L_tot}-\ref{line:groupby:bar-n_tot}.
Line~\ref{line:groupby:L_tot} introduces a further condition with respect to
what we find in Algorithm~\ref{algo:aggregate-range-query}. In fact, by requiring 
only $n.\mbox{min}.y < y' < n.\mbox{max}.y \leq y''$, as in
Algorithm~\ref{algo:aggregate-range-query}, we would have collected many nodes whose ranges
intersect $(x, y')$ for any $x$, but not all of them. In particular, if a node
has its range that spans more than one value of $x$, it may occur that $
n.\mbox{min}.y < y'$ is not true, but the node intersect a certain $(x, y')$.
The condition $n.\mathrm{min}.x \neq n.\mathrm{max}.x$ includes all the missing
cases.  Symmetrically, the same holds for Line~\ref{line:groupby:R_tot}. 
\rnote{C2.1}Figure~\ref{fig:groupbymissing} shows an example of this case. In this figure, 
node $n_L$ is required to compute aggregate for the group between 
$(x_2,y')$ and $(x_2,y'')$. However, since
$n_L.\mbox{min}.y > y'$, $n_L$ turns out not to be selected when only the 
condition $n.\mbox{min}.y < y' < n.\mbox{max}.y$, of 
Algorithm~\ref{algo:aggregate-range-query}, is considered. On the contrary, $n_L$ is selected by 
the condition $n_L.\mbox{min}.x \neq n_L.\mbox{max}.x$ considered in Algorithm~\ref{algo:aggregate-range-query-GroupBy}, since $n_L.\mbox{min}.x= x_1 \neq x_2 = n_L.\mbox{max}.x$.

About
Line~\ref{line:groupby:bar-n_tot}, performing this operation with \emph{one} SQL
query is not obvious. In Figure~\ref{fig:SQLbarN}, we show an example of this
query using the PostgreSQL dialect. %
The \verb|DISTINCT ON (S.x)| clause returns only rows with distinct values for
\verb|S.x|. The chosen one is the first according to the \verb|ORDER BY|
semantic. We use the regular data table $D$ to obtain all values for $x$. Other
approaches are possible to get all values of $x$ when only $T$ is present. We do
not go into the details of the schemes that can be used for $T$. We just note
that, for what described before this section, there is no reason to represent
keys contained in nodes in a way that can be easily extracted by a query. This
is also not trivial to do in a relational database, since their number for each
node can vary (see Section~\ref{sec:exepriments}). This is the main reason why we described our solution assuming to have a regular DBMS table $D$ as an
input to Algorithm~\ref{algo:aggregate-range-query-GroupBy}.

The correctness of this algorithm derives form the fact that, by construction,
ranges for each $x$ do not overlap (see Figure~\ref{fig:DB-Tree_groups}) and by
the correctness of Algorithm~\ref{algo:aggregate-range-query}. Further, we note
that, elements of aggregate sequences of retrieved nodes that aggregate values
for keys with different value of $x$ are never used by
Algorithm~\ref{algo:aggregate-range-query-GroupBy}. 

\subsection{Experiments with DB-Trees for Group-By Range Queries}\label{ssec:groupby-comparison-plaindb} \rnote{C2.3}
We performed some experiments to assess the performance of 
Algortihm~\ref{algo:aggregate-range-query-GroupBy} on realistic data. In
Section~\ref{sec:exepriments}, we noted that DB-trees are more advantageous for
aggregate range queries with large ranges. As we show in the following, this is
also true for group-by range queries. The dataset we used for our experiments is
derived from the TPC-H benchmark~\cite{tpc-h}. From that dataset, we considered
the \texttt{lineitem} table containing about six millions of rows. \rnote{C4.2} We picked
columns \texttt{L\_suppkey} (numeric IDs), \texttt{L\_shipdate} (dates that span
seven years), and \texttt{L\_extendedprice} (floating point numbers). We focused on a query that
aggregates the values of \texttt{L\_extendedprice} grouping by distinct values
of \texttt{L\_suppkey} on ranges defined on \texttt{L\_shipdate}. Other columns
were not imported into our test database. To support this group-by range query,
we had to choose, as key of the DB-tree, the pair (\texttt{L\_suppkey},
\texttt{L\_shipdate}). For this pair, TPC-H contains duplicated values, which is not
compatible with our prototypical implementation of DB-trees. To circumvent this
problem, we transformed each date to a timestamp adding a random time. To show
performances of DB-trees in a range where they can provide a substantial
benefit, we modified values in \texttt{L\_suppkey} as follows. The original
dataset contains 10,000 distinct values in \texttt{L\_suppkey}. We replaced each value $x$ in
\texttt{L\_suppkey} with $x \mod 20$ to obtain 20 distinct values. In this way,
we obtained 20 groups, each one containing a number of elements that is large enough to 
show the effectiveness of the adoption of DB-trees. 
We refer to the resulting dataset as \emph{modified TPC-H}.
Since elements are quite uniformly distributed over the
groups, each group turns out to contain about 300,000 elements, for the whole
dataset. The application of range selection reduces it. This reduction is quite
regular, since values in \texttt{L\_shipdate} are uniformly distributed over
groups and over time.     We performed our experiments with PostgreSQL (SSD platform). During
the preparation of the experiments, we realized that PostgreSQL is not able to
perform a thorough optimization of the query shown in Figure~\ref{fig:SQLbarN}. We substituted it with
that shown in Figure~\ref{fig:SQLbarN-optimized}. We also added a column \texttt{f}
to the overlay table $T$ to contain character \texttt{'t'} if the result of \texttt{min\_x <>
	max\_x} for the node is true, \texttt{'f'} otherwise.
\begin{figure}
	\begin{lstlisting}
(SELECT * FROM $T$ 
 WHERE ((min_y < $y'$)  AND (max_y > $y''$ )))
UNION 
(Select * FROM $T$ WHERE f = 't')
	\end{lstlisting} 
	\caption{Optimized version of the query shown in Figure~\ref{fig:SQLbarN} that was adopted 
	for testing group-by range queries. See text for further details.}
	\label{fig:SQLbarN-optimized}
\end{figure}
Actually, this query returns a larger set of nodes with respect to the query
shown in Figure~\ref{fig:SQLbarN}. In fact, the new query always returns all nodes
whose range overlaps group boundary (i.e., all nodes with \texttt{f = 't'}). To obtain
$\bar N$ (see Algorithm~\ref{algo:aggregate-range-query-GroupBy}), a further
selection is performed in memory by the overlay logic software when data are received. \rnote{C3.9}On table $T$, we
configured plain B-tree indexes on \texttt{f} and on (\texttt{min\_y},
\texttt{max\_y}). We also measured execution times for the plain DBMS query on
the regular table, where we configured a primary key on (\texttt{L\_suppkey},
\texttt{L\_shipdate}). \rnote{C3.9} In PostgreSQL, this means that a B-tree index kept on that columns. Since dates in \texttt{L\_shipdate} are uniformly
distributed over time, it was easy to pick random ranges such that each group
contained a number of elements from 50,000 to 300,000, with steps of 50,000. For each step, we
queried 200 different random ranges (i.e., each with a random shift of the range) and we took the average of the
execution times. The results, shown in Figure~\ref{fig:groupbyrange}, confirms that
DB-trees outperform the plain DBMS query on the regular table for ranges that
give rise to groups with large number of elements: greater that 200,000 in our
case.

\begin{figure}
	\centering
	\includegraphics[width=1\linewidth]{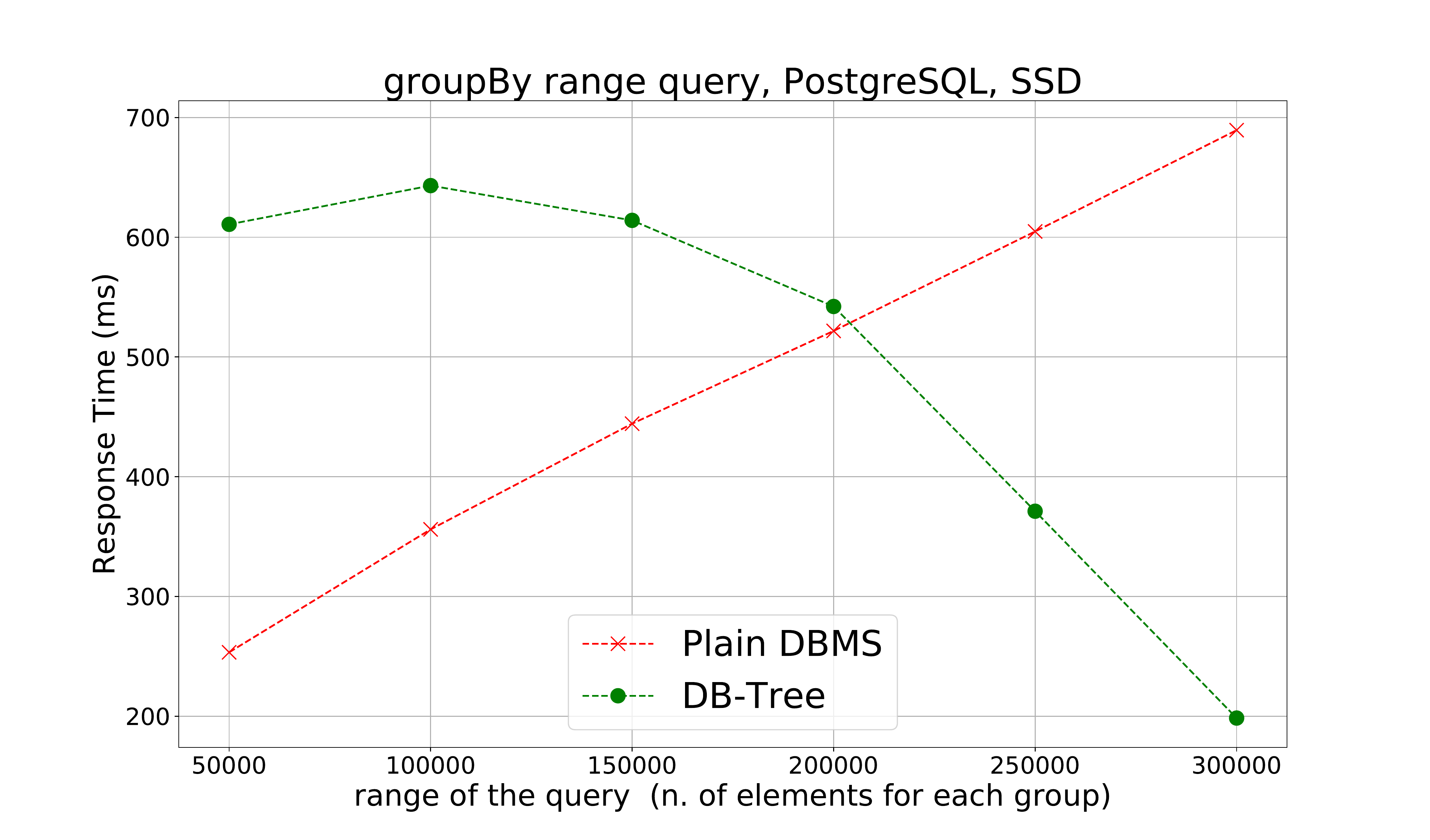}
	\caption{Performances of the execution of group-by range queries on PostgreSQL (SSD platform), on the modified TPC-H dataset.}
	\label{fig:groupbyrange}
\end{figure}

\subsection{Comparison with Materialized Views}\label{ssec:groupby-comparison-materialized-views}

\begin{figure}
	\centering
	\includegraphics[width=1\linewidth]{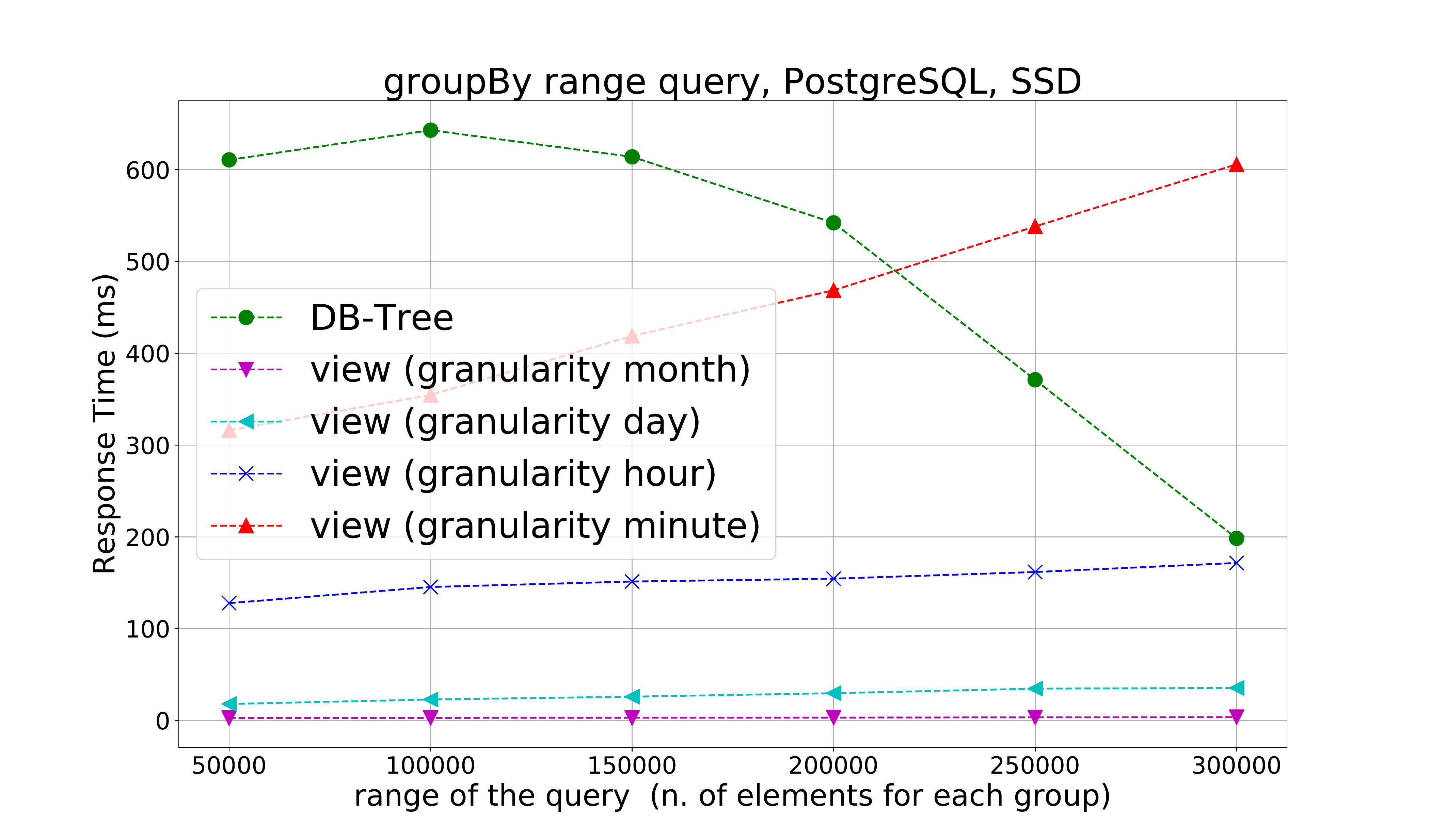}
	\caption{Performances of group-by range queries. Materialized views vs. DB-Trees on PostgreSQL (SSD platform), on the modified TPC-H dataset.}
	\label{fig:matViews}
\end{figure}

\begin{table}
	\centering
\begin{tabular}{|c|c|c|c|c|}
	\hline 
	Granularity & Refresh time & N. Rows\\ 
	\hline 
	Month & 5s & 1672 \\ 
	\hline 
	Day &  6s & 50,488 \\ 
	\hline 
	Hour &  8s & 1,145,760 \\ 
	\hline 
	Minute &  12s & 5,683,522 \\ 
	\hline 
\end{tabular} 
\caption{Refresh times and size of materialized views used in our experiments.}
\label{table:materialized views}
\end{table}
\rnote{C3.1 C2.3}We run the same tests for group-by range queries using materialized views, to
understand how this common technique compare with DB-trees. We performed our
experiments using PostgreSQL. For the same dataset adopted in the above
experiments, we created materialized views at different granularities: month,
day, hour, and minute. We executed the same group-by range queries described
above on these materialized views and we measured execution times. We picked 
ranges giving a number of elements per group between 50,000 and 300,000, with steps of 50,000 and 200 random queries, as above, for each step. 
Figure~\ref{fig:matViews} shows the execution times taken by the queries performed on 
the materialized views vs. execution times of the same queries performed using DB-trees.
For materialized views, the execution time increases linearly with the number of elements
contained in each group, while it decreases for DB-Trees. However, DB-Trees perform better only for minute granularity, in our tests.  Since the query is performed on
aggregated data, the precision of the result of the queries performed on materialized views depends on the granularity of the
view. It is possible to obtain precise results by independently querying the
extremes of the range but we did not performed any experiment regarding this
aspect. In fact, we think that the above results already show the great
potentiality of materialized views. 
%However, materialized views need to be
%refreshed after each change to match the underlying data, unless the specific
%application can admit somewhat outdated results. 
Table~\ref{table:materialized
	views} shows the time taken to refresh the whole materialized view.
 They  are between 5 and 12 seconds for our
dataset. The number of rows of the views are also reported. We note that 
materializing with minute granularity provides very little benefit, since the
original table contains 6,001,215 rows.
 
From the trend shown by our experiments, DB-trees may result to be a better
approach than materialized views when the number of elements in the groups are
very large and this is not compensated by the coarseness of the granularity.
This may occur in practice, since granularities of materialized views are
statically decided in advance, while queries might be on ranges whose size can
vary across several orders of magnitude. For example, this may occur in
graphical systems if the user is allowed to zoom in and out on a timeline and a
corresponding histogram for the current zoom level should be shown. DB-trees
have a substantial overhead but they are adaptive, in the sense that no decision
in advance is needed about the order of magnitude of the ranges to be queried.
Further, we recall, that DB-trees may be the only viable solution in situations
in which the DBMS does not natively support the needed aggregation function.

%We think that it is useful to note that data stored in a DB-tree may be
%considered as a hierarchy of materialized views taken over a large spectrum of
%granularities.  The DB-tree algorithms use the coarser granularity whenever
%possible while providing always a precise result. Clearly, this is done at cost
%of a larger complexity which brings a non negligible overhead.

\subsection{Comparison with VerdictDB}\label{ssec:groupby-comparison-verdictdb}

\begin{figure}
	\centering
	\includegraphics[width=1\linewidth]{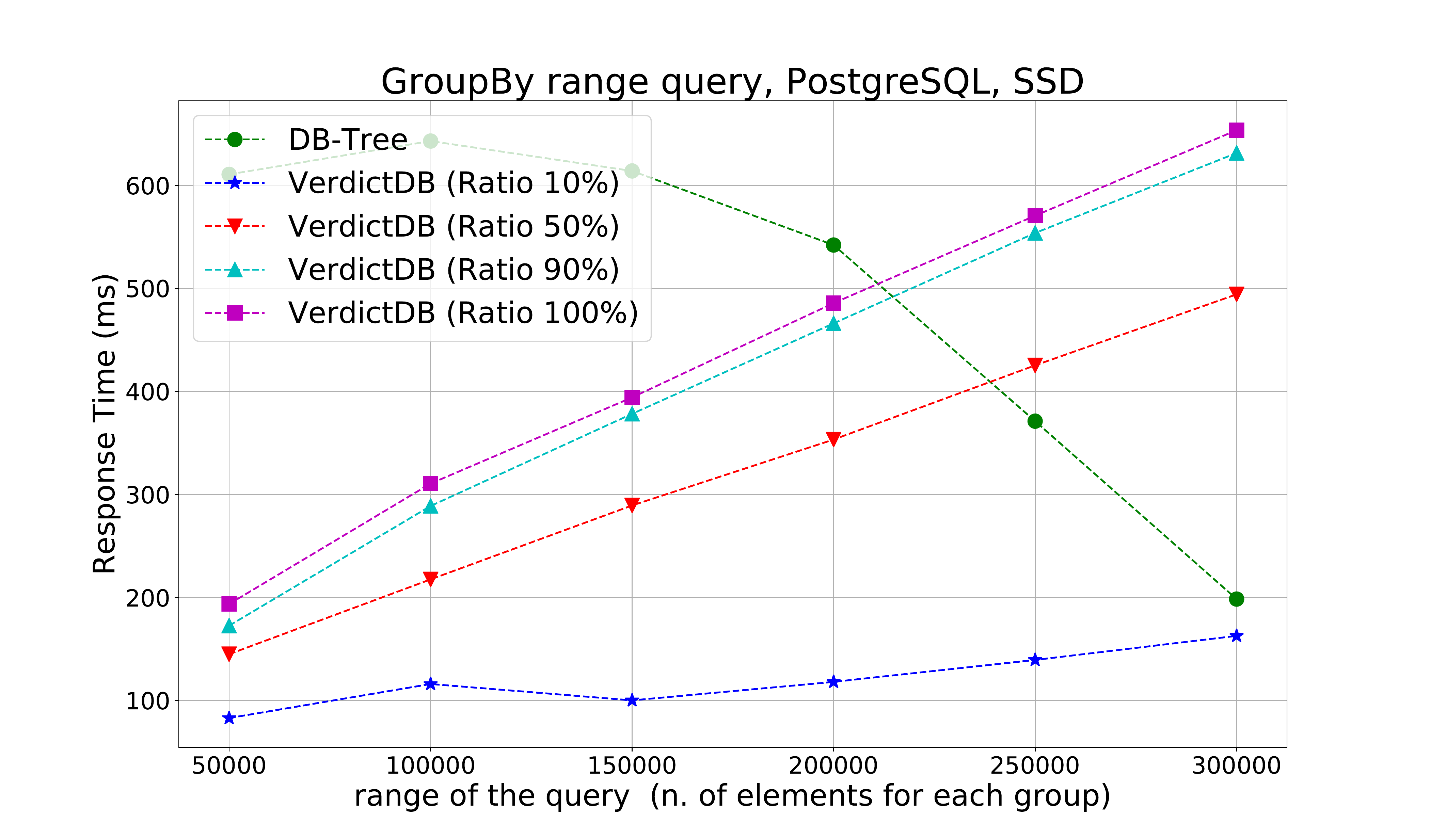}
	\caption{Performances of VerdictDB vs. group-by range queries using DB-Trees on PostgreSQL (SSD platform), on the modified TPC-H dataset.}
	\label{fig:verdict}
\end{figure}

\rnote{C3.1 C2.3} We now compare our DB-tree approach for the execution of group-by range queries  with
the solution provided by VerdictDB~\cite{park2018verdictdb}.
VerdictDB is a very interesting tool that practically realizes an approximate
query processing technique. It allows the user to perform 
group-by range queries choosing a trade-off between speed and precision of the
result. It relies on regular relational DBMS to store data. The user can ask the creation
of a ``sampled'' version of a table, called \emph{scramble}, with a certain
\emph{size ratio}. Higher size ratios provide better precision at the expense of
longer execution time. We performed our experiments using PostgreSQL (SSD platform) as
underlying DBMS. For the same dataset adopted in the above experiments (modified TPC-H), we
created several \emph{scrambles} with different size ratios: 10\%, 50\%,
90\%, and 100\%. We executed the same group-by range queries described above on
all scrambles. We again picked ranges giving a number of elements per group
between 50,000 and 300,000, with steps of 50,000, running 200 different random
queries in each step, as in Sections~\ref{ssec:groupby-comparison-plaindb} and~\ref{ssec:groupby-comparison-materialized-views}.

VerdictDB is very efficient when the size ratio is low. In any case, execution
times increase linearly with the number of elements contained in each group, as
for the execution using the plain DBMS approach. In Figure~\ref{fig:verdict}, we
show the time taken by VerdictDB to process the group-by range query for increasing range sizes
using the four size ratios mentioned above. 
For a clear comparison, we also report the time taken by our approach based on
DB-trees. The comparison is favorable to our approach when the number of
elements in the groups is large and when the size ratio is not too low. Since a
low size ratio negatively impacts the precision of the results produced by VerdictDB, we also
measured errors. In Table~\ref{table:verdictErrors}, we
reported, for  each point of the chart in Figure~\ref{fig:verdict} that is
related to VerdictDB results, the maximum error we obtained, relative to the
correct answer. Each shown percentage is the maximum of the relative errors
across all the queries (200) that we run for each step.
These results are specific for our case and are only useful for the purpose of
comparison with the DB-tree approach in this specific test. A broad description
of VerdictDB with respect to precision can be found in~\cite{park2018verdictdb}.

In our experiments, the time taken by VerdictDB to generate the scramble from
scratch is always about 6 seconds.

\begin{table}
	\centering
	\begin{tabular}{|c|c|c|c|c|c|c|}
		\hline
		\multirow{2}{2cm}{\centering Scramble\\Size Ratio}&\multicolumn{6}{c|}{Number of Elements for Each Group}\\
		\cline{2-7} 
		 & 50K & 100K & 150K & 200K & 250K & 300K \\ 
		\hline  
		10\% & 5.18\% & 3.79\% & 2.87\% & 1.99\% & 1.54\% & 1.45\%  \\ 
		\hline 
		50\% &  2.44\% & 2.53\% & 2.0\% & 1.74\% & 1.56\% & 1.53\% \\ 
		\hline 
		90\% &  3.0\% & 2.0\%7 & 1.89\% & 1.35\% & 1.35\% & 1.26\%  \\ 
		\hline 
		100\% & 3.02\% & 2.02\% & 1.76\% & 1.43\% & 1.18\% & 1.2\% \\ 
		\hline 
	\end{tabular} 
	\caption{Maximum relative errors of results of group-by range queries  using VerdictDB. See text for details.}
\label{table:verdictErrors}
\end{table}

The most evident difference between DB-trees approach and VerdictDB is that
DB-trees always produce an exact result, while results produced by VerdictDB may
have non-negligible errors. It depends on the application how much precision can
be traded for speed. In any case, when the number of elements in the groups are
large, the DB-tree approach outperform VerdictDB even for moderately small size
ratios. For example, in our case, when the number of elements for each group is 250,000 or more, DB-trees provide exact results and perform better even if VerdictDB is used with a size ratio of 50\%, which gives a maximum error of about 1.5\% in our case.

% !TeX encoding = UTF-8
% !TeX root = main.tex
% !TeX spellcheck = en_US

%\section{Supporting Other Needs}\label{sec:extensions}
%
%In this section, we show how it is possible to adapt the use of DB-trees to support 
%additional needs that could be supported by proper index-like data structures.
%We focus on two cases that have a significant applicative impact.
%\emph{Authenticated Data Structures} (\emph{ADS}) and
%GROUP-BY-like queries. For ADSes, we briefly introduce them with their
%properties and show a typical use case. Then, we show how we can use DB-trees as
%a persistent and efficient ADS.
%For GROUP-BY-like queries... \todo[inline]{da finire}

\section{Supporting Authenticated Data Structures}\label{sec:ADS}

In this section, we show how it is possible to use DB-trees to
support \emph{Authenticated Data Structures} (\emph{ADS}). We briefly introduce
ADSes with their properties and show a typical use case. Then, we show how we can
use DB-trees as a persistent and efficient ADS.

An Authenticated Data Structure (ADS)  is an ordered container of elements that
deterministically provides a constant-size digest of its content that has the same
properties of a cryptographic hash. We call this digest the \emph{root-hash} of the ADS, and we denote it by $r$. 
For our purposes, we limit ADSes to contain implicitly ordered elements, like key-value pairs.
In other words, we regard
ADSes as search trees augmented with security features.
If the content of the ADS changes, $r$ changes. 
Further, it is hard to find two sets of
elements with the same root-hash.
An ADS provides two operations: \emph{authenticated query}  and \emph{authenticated update}.
A query returns the queried element and the proof, associated with a certain $r$, 
that the result is indeed among the elements of the ADS instance having that $r$ as root-hash. If a trusted entity safely stores the current
$r$, it can query the ADS and execute a cryptographic check of the proof against its
trusted version of $r$ to verify that the query result matches what
expected. The update operation on key $k$ changes $v$ associated with
$k$ into a provided $v'$ and changes $r$ in $r'$, as well. Insertion and deletion 
also change $r$.
The interesting aspect is
that a trusted entity that intends to update the ADS should be able to autonomously
compute  $r'$ starting from the proof of the elements that are changing.

\begin{figure}
	\centering
	\includegraphics[width=\linewidth]{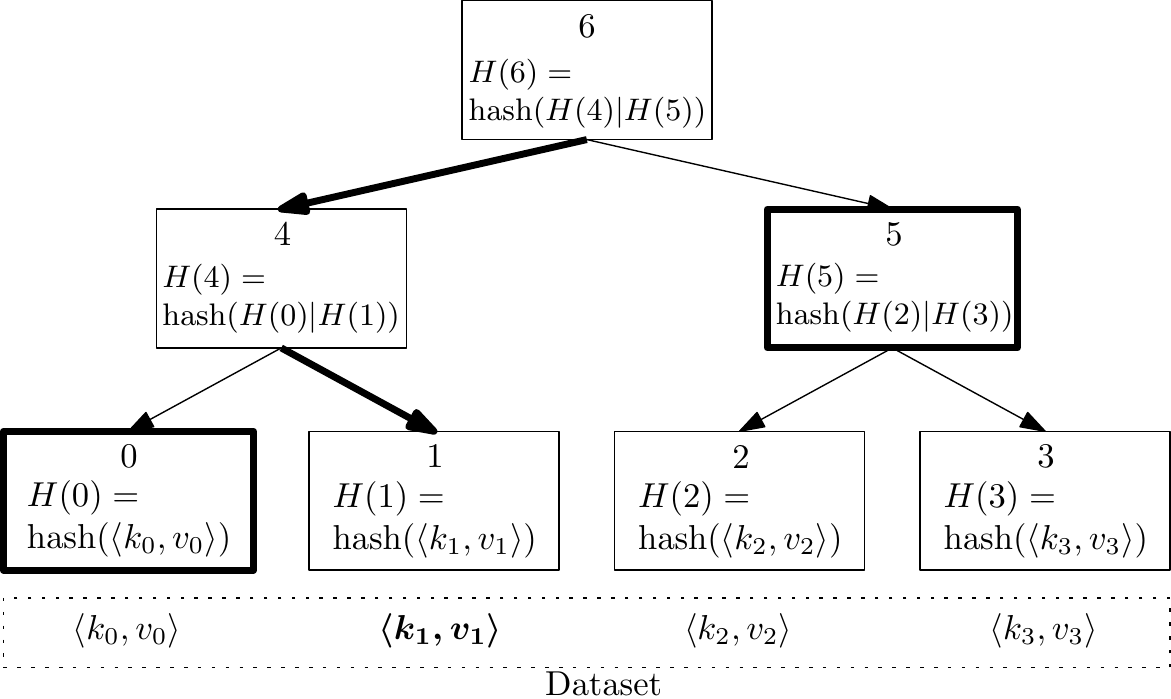}
	\caption{An example of Merkle Hash Tree with four leaves 
    and a binary structure. We evidenced the elements regarding the proof of $\langle k_1, v_1\rangle$ with thick contours.}
	\label{fig:MHT_example}
\end{figure}

A simple ADS is the Merkle Hash Tree~\cite{merkle1987digital} (\emph{MHT}). An
example of MHT is shown in Figure~\ref{fig:MHT_example}. In our example, the MHT
is a binary search tree in which every leaf is associated with a
key-value pair $\left<k,v\right>$. Function $\hash{\cdot}$ is a cryptographic
hash function. Every node $n$ is labeled by a cryptographic hash $H(n)$. If $n$ %\todo{$H(n)$ si confonde con la definizione di funzione aggregata.}
is a leaf, we define $ H(n) = \hash{\left<k,v\right>}$. If $n$ is an internal
node, with $n'$ and $n''$ its children, $ H(n) = \hash{H(n')|H(n'')}$. If $n$ is
the root, $r=H(n)$ is the root-hash of the MHT.

A typical application of MHT is to allow a client with limited amount of
resources to outsource the storage of a large amount of data to an untrusted
server. The client keeps only a trusted version of the root-hash while the
server keeps the MHT. The server provides proofs for each query performed by the
client. A proof for a leaf is obtained by considering the path from the leaf to
the root and, for each node of the path, putting the hash of the sibling into
the proof (see Figure~\ref{fig:MHT_example} for an example). This allows the receiver of the proof to be able to compute the
root-hash from the content of the leaf, i.e., from the result of the query. The
client can compare the resulting root-hash against its trusted copy to verify
authenticity of the reply. The length of the proofs is $O(\log n)$ for balanced
trees (where $n$ is the number of leaves).

Suppose that the client intends to change the value associated with a certain key
$k$. Client query the server for the $k$ and get the current associated $v$ with
its proof. After checking the proof against its root-hash, client can compute
the new root-hash for a new value $v'$ just performing the same computation as
for proof checking but pretending the new value $v'$ is associated with $k$. The
obtained root-hash should be the root-hash of the updated MHT, and be used as
trusted root-hash for subsequent queries.

A DB-tree can be adapted to serve as a persisted and efficient ADS. In this
case, we call it an \emph{authenticated DB-tree}. From the point of view of the
algorithms, we use $\hash{a|b}$ as aggregation function on $a$ and $b$, and we consider
cryptographic hash values as a sort of aggregate values (details are provided
below). Note that, $\hash{a|b}$ is usually non-associative, since standard hash
functions (like, e.g., SHA-2) are not associative\footnote{An associative
	cryptographic hash function was proposed by Tillich and
	Z\'emor~\cite{tillich1994hashing} in 1994. It resisted cryptanalysis attempts
	untill 2011~\cite{grassl2011cryptanalysis}, but now it is considered insecure.
	However, since these functions have many interesting properties, the research is
	still active in this area (see, for example,~\cite{bromberg2017navigating}) and
	in the future we might have cryptographic hash functions both associative and
	secure, to use with regular DB-trees.}. In this case, the concept of range query is not correctly defined, and
Algorithm~\ref{algo:aggregate-range-query} is not useful. On the contrary, the
new \emph{authenticated query} operation is of interest in this context, which
allows the user not only to get the value of a key but also to get the
corresponding proof. Further, without associativity, the structure of the tree
impacts the resulting root-hash. This is not desirable, since users would like
to uniquely associate a root-hash to a given content of the ADS. To eliminate
this dependency, we can ``de-randomize'' on the basis of the content as follows.
In running Algorithm~\ref{algo:random-level}, to choose the level associated with
a key $k$, we suppose to adopt a random generator that is initialized by a seed
that we set to $k$ itself\footnote{A more practical approach is
	to compute $\hash{k}$ and take, from the result, the position of the first bit
	that is zero.}. In this way, given a key $k$, a level is deterministically
associated with $k$, while keeping unchanged the statistics of the levels and
the properties described in Section~\ref{ssec:db-tree-formal}.

In an authenticated DB-tree, the hash of a node $n$ with $n.\mbox{aseq}=a_0,
p_1, a_1, \dots, a_{m-1}, p_m, a_m$ is $\hash{n}=\hash{a_0 | p_1 |a_1 | \dots | a_{m-1}|
	p_m| a_m}$, where each $a_i=\hash{n_i}$ and $n_i$ is the corresponding child. As
we did in the rest of the paper, we intend that missed $a_i$'s are simply
omitted in the formulas. If $n$ is a leaf, it has no children, hence, its hash is
computed on the concatenation of the pairs it contains, unnoticed.
The root of an authenticated DB-tree contains only one
hash, that is its \emph{root-hash}. A \emph{proof}, for a given pair $p=\langle
k, v\rangle$, is the sequence $N$ of nodes that have $k$ in their range, ordered
by ascending level, up to the root. This set of nodes can be retrieved in one query round and has size logarithmic in the number of elements contained in the DB-tree (see Section~\ref{sec:algorithms}).

 The client checks that the response to an
authenticated query operation is genuine, by the following procedure.
It computes $\hash{n}$ for the first node $n\in N$.
For each node $n\in N$ after the first, let $n'$ be the node that precedes $n$ in $N$.
Node $n$ contains in $n.$aseq 
one hash that is related to the node $n'$. This is the hash between the keys that are the closest to $k$. This hash is ignored and substituted by 
$\hash{n'}$. Then, $\hash{n}$ is computed and used in the next iteration. 
When all nodes are scanned, the hash of the root is compared with the trusted root-hash
the client should have.
A proof of non-existence for key $k$ has the very same structure, just the first node of $N$ does 
not contain a key-value pair of $k$ and the keys closest to $k$ in $n.$aseq have no aggregate value between them. The same checking procedure can be applied.

In an authenticated DB-tree, the algorithms to perform update, insertion and deletion of a key-value
pair are very similar to those of the corresponding operations for a regular DB-tree. We
should take care of the following aspects. When the server provides a set of
nodes they should always be considered proofs and checked against the trusted
root-hash of the client before proceeding. This is always possible since all
DB-tree-changing algorithms deal with paths up to the root of the DB-tree. After
the change, hashes, including the root-hash, are updated by the execution of
Algorithm~\ref{algo:update-aggrvalue-up}. After executing any changing
algorithm, the client should locally store the new root-hash as the trusted
root-hash to be used in the following operations.

Concerning security, we assume a threat model in which the DBMS can perform any
tampering on the DB-tree with the purpose to change the key-value pairs it
contains. We do not consider other kinds of attacks. The \emph{security} of an
authenticated DB-tree is its ability to always detect any misbehavior of the
DBMS, i.e., we would like to rule out any \emph{false negative}. As for Merkle
Hash Trees, this is a direct consequence of the inability of the untrusted DBMS
to find a collision on the adopted cryptographic hash function. In the context
of ADSes, \emph{correctness} means that any misbehavior detected by proof
checks is a real DBMS misbehavior, i.e., we would like to rule out any
\emph{false positive}. Correctness of DB-trees derives directly from the ability
of keeping invariants after all DB-tree-changing operations. This comprises correctly
computing the hashes of all nodes, which is responsibility of
Algorithm~\ref{algo:update-aggrvalue-up}.

\rnote{C2.3}%
Concerning performances of authenticated DB-trees in practice, we note that all
their operations interact with the DBMS using exactly the same queries as for
plain DB-trees. Only the local processing performed by the overlay-logic is
slightly different. In all experiments of Section~\ref{sec:exepriments}, the
time spent for the overlay-logic processing is negligible with respect to the
time spent to execute the queries. For authenticated DB-trees, the overlay-logic
may additionally need to check proofs performing all needed cryptographic
hashes. But this turns out to be negligible, too, as we show in the following.
For these reasons, in practice, authenticated DB-trees perform like regular
DB-trees. In particular, for insertion and deletion, experimental results are
essentially the same of those shown in Section~\ref{sec:exepriments}, hence they
are not reported here. An authenticated query for a key $k$ selects only the
nodes $n$ for which $n.\mbox{min} < k <n.\mbox{max}$. We measured the average
execution time of this query on the same DB-tree of 1 million elements that we
used in Section~\ref{sec:exepriments}. The time it takes on PostgreSQL (using
SSD) is about 20ms (average on 200 random queries). Our instance has 23 levels,
hence each query returns at most 23 nodes, which should be interpreted as the
proof returned by the authenticated query performed on the authenticated DB-tree. Now, we
show that the time spent to check a proof is negligible with respect to the time
taken to perform the query on the DBMS. The actual verification of the proof
should compute all the cryptographic hashes along the proof up to the root. In
our instance, each node contains 2 key-value pairs and 3 ``aggregates'', that is hashes of children, on average.
To verify the proof, for each node, we have to compute the hash of the pairs
plus one hash of the whole node. We performed the tests using SHA256. Each hash
takes about $1.5\mu s$ to be computed. Hence, the time taken by proof
verification turns out to be about $103.5\mu s$. This is four order of magnitude
less than the time taken to perform the query on the DBMS.

%\begin{table}
%	\centering
%	%\vskip 0.5cm
%	\begin{tabular}{|c|c|c|}
%		\hline 
%		DBMS & Platform & Execution Time \\ 
%		\hline 
%		\multirow{2}{*}{PostgresSQL} & SSD & \todo{????}\\ 
%		\cline{2-3} 
%		& HDD & \todo{????}\\ 
%		\hline 
%		\multirow{2}{*}{MySQL} & SSD & \todo{????}\\ 
%		\cline{2-3}
%		& HDD & \todo{????} \\ 
%		\hline 
%	\end{tabular}
%	
%	\caption{ Time taken to perform an authenticated query using DB-trees. }
%	\label{table:auth-queries-times}
%\end{table}

% !TeX encoding = UTF-8
% !TeX root = main.tex
% !TeX spellcheck = en_US

\section{Discussion of the Simplifying Assumptions}\label{sec:arch_discussion}

\rnote{C3.4}In Section~\ref{ssec:architectural-aspects}, we presented general architectural
problems related with overlay indexes and introduced some simplifying
assumptions. The objective of this section is to discuss these assumptions, also
considering the specific DB-tree approach, and show whether generalizations are
simple or require further scientific investigation.

\textbf{Middle layer.} In Section~\ref{ssec:architectural-aspects}, we mentioned
the possibility to develop a query rewriting middle layer to support
overlay-indexes. Query rewriting is a classical topic in database research (see,
for example,~\cite{park2018verdictdb,gupta1995aggregate,acharya1999aqua,
	goldstein2001optimizing,halevy2001answering}), but practical widely used
realizations are rare. Concerning the development of a middle layer targeted to DB-trees, we
identify some challenges. Firstly, query rewriting means dealing with a complex
SQL syntax and its many proprietary variations. We think that the scientific
interest of this aspect is marginal, but the development effort is large.
Further, a middle layer may address only the exact kinds of queries shown in
this paper (passing all others queries to the DBMS unchanged) or trying to
support the optimization of more complex aggregate (group-by) range queries, for
example comprising equality selections, \rnote{C4.3}two dimensional range selections, joins,
etc. Some of these objectives are direct extensions of what is described in this
paper, while some may require further scientific investigation. For example,
suppose to have an aggregate range query $Q$ supported by a DB-tree $T$. To
support $Q'$ derived from $Q$ by including an additional equality selection on a
column $c$, we can simply add $c$ as the most significant part of the key of $T$. \rnote{C4.3 C3.7}On the
contrary, the extension to  more-than-one-dimension range queries is not trivial
and, in our opinion, may deserve further scientific investigation (see also Section~\ref{sec:conclusions}). In any case, the
middle layer should allow the user to specify which DB-trees to build, specifying
the key, the value, and \rnote{C3.6}the aggregation function(s) to be supported. To do
that, an extension of the data definition language should be provided. The
design of this extension is a critical aspect from the point of view of the
usability and of the power of the resulting layer. Further, having a number of
DB-trees at disposal, the middle layer should be able to rewrite certain queries
taking advantage ot them, but only when this is deemed useful. 
This can be regarded as an optimization problem per se that may be independently studied. 
For example, the middle layer may choose not to rewrite an aggregated range
query whose range is small.

%As an example, for an 
%aggregated range query whose range is very small, the middle layer may choose
%not to rewrite it if the overhead of the DB-trees is believed to be larger than the total 
%cost of the execution of the original query.

\textbf{Transactions, multiple clients, fault tolerance.} In this paper, 
for the sake of simplicity, we deliberately avoided to consider the use
of overlay-indexes and DB-trees in a context where transactions are needed.
Typical situations in which this occurs is when multiple clients access the same
DB-tree and when faults of the DBMS may occur. In principle, nothing
prevents to execute the algorithms presented in this paper within transactions
(supposing to adopt a DBMS that supports them).
Algorithms~\ref{algo:update},~\ref{algo:insert}, and~\ref{algo:delete} follow
the scheme showed at the beginning of Section~\ref{ssec:DBTree:upd-ins-del}.
That is, they wait for the reply of the read round, compute the changes and perform the update round. However, the best practice is to avoid the execution of schemes like this within a
transaction. In fact, a communication problem with the DBMS may keep the transaction
open (and involved tables locked) until a timeout expires. A possible workaround to this problem is to move the overlay-logic within
the DBMS by using stored procedures (if supported) or moving overlay-logic so
close to the DBMS so that network faults cannot independently occur (for example within the same
machine). 

The above considerations also apply to the case in which a DB-tree is
associated with a regular table. In this case,
Algorithms~\ref{algo:update},~\ref{algo:insert}, and~\ref{algo:delete} should be
executed in a transaction together with the change of the regular table.

When changing operations are rare, and the DB-tree is not associated with a
regular table, optimistic approaches may be adopted. For example, we could perform the
read round of a changing operation within a transaction and perform the
update round of that operation in a distinct transaction. In the second transaction, before applying
changes, it should be checked, preferably directly within the DBMS, that no change was
applied (e.g., by another client) in the meantime to the DB-tree. If the check fails, the
transaction should be aborted and the whole changing operation should be re-run, starting
from the query round. An interesting way to perform this check is to use the same DB-tree 
to realize an authenticated data structure (see Section~\ref{sec:ADS}). 
In this case, the root-hash can be checked to verify if any change to the DB-tree occurred from the last read.
It is
possible to support an aggregation function and a cryptographic hash within the
same DB-tree as explained in the following.

\textbf{Multiple aggregation functions.} \rnote{C3.5 C3.7} There are cases where more than one
aggregation function $\langle f_i, g_i, h_i\rangle$, for $i=1,\dots,q$, (see
Section~\ref{ssec:db-tree-formal}) should be supported, where each $f_i(\cdot)$ aggregates
values in $A_i$, and each $g_i(\cdot)$ generates values in $A_i$ from one or more columns. Here, we refer to columns as if 
they were stored independently in the database in a regular data table, but this is not strictly needed (see
Section~\ref{ssec:architectural-aspects}).  
If range selections are always performed on
the same key for all aggregation functions,
we can build a single
DB-tree to support all of them. For this DB-tree, the set $A$ (see Section~\ref{ssec:db-tree-formal}) is
$A_1\times \dots \times A_q$, aggregation function is defined as $f(a_1, \dots, a_q) = (f_1(a_1),\dots,f_q(a_q))$, and functions $g(\cdot)$ and $h(\cdot)$ are consistently defined.
\rnote{C3.7}Note that, if the queries that we have to support perform range
selections on distinct columns, to support them, we have to construct one
DB-tree for each of that columns (having each of them as its key). 
%
%Each DB-tree
%should support the aggregation functions that are used together with the
%specific kind of range selection for which that DB-tree was created.
%
% with the
%following features.
%\begin{inparaenum}[(i)] \item The values are the 
%	juxtaposition of a the values of the columns in the union of all $C_i$. \item
%In the definition of aggregation function (see
%Section~\ref{ssec:db-tree-formal}), the set $A$ is a vector space having one
%dimension for each aggregation function. \item Each element of the vector
%returned by $f$ is obtained by the corresponding $f_i$.
%\item Function $g$ returns a vector in $A$ whose elements are obtained by computing the corresponding $g_i$ on each value.
%\item Function $h$ takes $h_i$ takes the $i$-th element of a vector in $A$.
%
%\end{inparaenum}
Cryptographic hash functions can also be
supported along with aggregation functions, if we take care of the fact that they are not associative (see the details in Section~\ref{sec:ADS}). 
This allows us to support the
optimistic approach described at the end of the previous paragraph. In this case, checking if a
DB-tree has been changed since the execution of a previous query, boils down to 
checking that its root-hash is unchanged.

\textbf{Caching, consistency, batch insertion.} In
Section~\ref{ssec:architectural-aspects}, we mentioned the possibility to
implement caching in the overlay-logic. We also mentioned that consistency
problems may arise in the case of multiple clients. This occurs when the DB-tree
is changed by Algorithms~\ref{algo:update},~\ref{algo:insert},
and~\ref{algo:delete}. The most obvious workaround is to broadcast all changes
to all clients, so that they can update (or invalidate) their cache. This
approach is very demanding in terms of networking and CPU resources, especially
if changes and clients are many. We may obtain most of the
benefits of caching by moving the overlay-logic close to the DBMS or by using
stored procedures. In fact, in this case, the overlay-logic can quickly access
the DBMS, which we suppose to have its own cache. Finally, we note that the notable
case of batch insertion performed by one client described in Section~\ref{ssec:DBTree:upd-ins-del}
is essentially a special case of caching.

% !TeX encoding = UTF-8
% !TeX root = main.tex
% !TeX spellcheck = en_US

\section{Conclusions and Future Work}\label{sec:conclusions}

We showed how it is possible to support aggregate range queries efficiently on
conventional databases that do not implement specific optimizations. To do that,
we introduced the DB-tree: a new data structure that realizes a specific type of
index that is represented at the database level (an \emph{overlay-index}) 
and can be accessed by performing regular queries to the
DBMS. It can be also customized in many ways to support a wide class of
aggregation functions, authenticated data structures, and group-by range
queries. DB-trees can be queried downloading only $O(\log r)$ data (and hence taking $O(\log r)$ time), 
with $r$
the size of the data on which aggregation is performed, while common DBMSes
scan $O(r)$ data and thus take $O(r)$ time to answer the same queries. 
Experiments show that the improvement with respect to
the plain DBMS can be very large even for moderately large datasets. DB-trees
introduce some overhead for insertion and deletion, which was experimentally
measured to be a factor from 2 to 20 in our tests.

Regarding future research directions, from the theoretical
point of view, it would be interesting to investigate a \rnote{C4.3} bi/multi-dimensional
generalization of DB-trees to support speedup of aggregate range queries in GIS
systems. This generalization may be inspired to the work of Eppstein et
al.~\cite{eppstein2008skip} about skip quadtrees.

From the practical point of view, it would be desirable to have
a framework that streamlines the use of DB-trees in practical contexts, like
VerdictDB~\cite{park2018verdictdb} does for approximate query processing.
Further, experiments in a big-data context may be useful to better asses 
the spectrum of applicability of DB-trees.


\begin{thebibliography}{10}
	
	\bibitem{postgresql12doc}
	{\em PostgreSQL 12.0 Documentation}.
	\newblock \url{https://www.postgresql.org/docs/9.4/sql-createaggregate.html}
	[on-line 2019-10-14].
	
	\bibitem{acharya1999aqua}
	Swarup Acharya, Phillip~B Gibbons, Viswanath Poosala, and Sridhar Ramaswamy.
	\newblock The aqua approximate query answering system.
	\newblock In {\em ACM Sigmod Record}, volume~28, pages 574--576. ACM, 1999.
	
	\bibitem{angiulli2012indexing}
	Fabrizio Angiulli and Fabio Fassetti.
	\newblock Indexing uncertain data in general metric spaces.
	\newblock {\em IEEE transactions on knowledge and data engineering},
	24(9):1640--1657, 2012.
	
	\bibitem{borodin2017optimization}
	Andrey Borodin, Sergey Mirvoda, Ilia Kulikov, and Sergey Porshnev.
	\newblock Optimization of memory operations in generalized search trees of
	postgresql.
	\newblock In {\em International Conference: Beyond Databases, Architectures and
		Structures}, pages 224--232. Springer, 2017.
	
	\bibitem{boutcher2010does}
	David Boutcher and Abhishek Chandra.
	\newblock Does virtualization make disk scheduling pass{\'e}?
	\newblock {\em ACM SIGOPS Operating Systems Review}, 44(1):20--24, 2010.
	
	\bibitem{bromberg2017navigating}
	Lisa Bromberg, Vladimir Shpilrain, and Alina Vdovina.
	\newblock Navigating in the cayley graph of $ sl_2 ( \mathbb{F}_p) $ and
	applications to hashing.
	\newblock In {\em Semigroup Forum}, volume~94, pages 314--324. Springer, 2017.
	
	\bibitem{celko2012joe}
	Joe Celko.
	\newblock {\em Joe Celko's Trees and hierarchies in SQL for smarties}.
	\newblock Elsevier, 2012.
	
	\bibitem{chandramouli2013scalable}
	Badrish Chandramouli, Jonathan Goldstein, and Abdul Quamar.
	\newblock Scalable progressive analytics on big data in the cloud.
	\newblock {\em Proceedings of the VLDB Endowment}, 6(14):1726--1737, 2013.
	
	\bibitem{chaudhuri2007self}
	Surajit Chaudhuri and Vivek Narasayya.
	\newblock Self-tuning database systems: a decade of progress.
	\newblock In {\em Proceedings of the 33rd international conference on Very
		large data bases}, pages 3--14. VLDB Endowment, 2007.
	
	\bibitem{chavan2018accelerating}
	Shasank Chavan, Albert Hopeman, Sangho Lee, Dennis Lui, Ajit Mylavarapu, and
	Ekrem Soylemez.
	\newblock Accelerating joins and aggregations on the oracle in-memory database.
	\newblock In {\em 2018 IEEE 34th International Conference on Data Engineering
		(ICDE)}, pages 1441--1452. IEEE, 2018.
	
	\bibitem{chen2015efficient}
	Lu~Chen, Yunjun Gao, Xinhan Li, Christian~S Jensen, and Gang Chen.
	\newblock Efficient metric indexing for similarity search and similarity joins.
	\newblock {\em IEEE Transactions on Knowledge and Data Engineering},
	29(3):556--571, 2015.
	
	\bibitem{chen2017indexing}
	Lu~Chen, Yunjun Gao, Aoxiao Zhong, Christian~S Jensen, Gang Chen, and Baihua
	Zheng.
	\newblock Indexing metric uncertain data for range queries and range joins.
	\newblock {\em The VLDB Journal}, 26(4):585--610, 2017.
	
	\bibitem{cheng2004efficient}
	Reynold Cheng, Yuni Xia, Sunil Prabhakar, Rahul Shah, and Jeffrey~Scott Vitter.
	\newblock Efficient indexing methods for probabilistic threshold queries over
	uncertain data.
	\newblock In {\em Proceedings of the Thirtieth international conference on Very
		large data bases-Volume 30}, pages 876--887. VLDB Endowment, 2004.
	
	\bibitem{chiou2001optimization}
	Andy~S Chiou and John~C Sieg.
	\newblock Optimization for queries with holistic functions.
	\newblock In {\em Proceedings Seventh International Conference on Database
		Systems for Advanced Applications. DASFAA 2001}, pages 327--334. IEEE, 2001.
	
	\bibitem{comer1979ubiquitous}
	Douglas Comer.
	\newblock Ubiquitous b-tree.
	\newblock {\em ACM Computing Surveys (CSUR)}, 11(2):121--137, 1979.
	
	\bibitem{cormen2009introduction}
	Thomas~H Cormen, Charles~E Leiserson, Ronald~L Rivest, and Clifford Stein.
	\newblock {\em Introduction to algorithms}.
	\newblock MIT press, 2009.
	
	\bibitem{tpc-h}
	Transaction Processing~Performance Council.
	\newblock {TPC BENCHMARK H} decision support, standard specification, 2.18.0.
	\newblock
	\url{http://www.tpc.org/tpc_documents_current_versions/pdf/tpc-h_v2.18.0.pdf}[on-line
	2019-09-25].
	
	\bibitem{cuzzocrea2005providing}
	Alfredo Cuzzocrea.
	\newblock Providing probabilistically-bounded approximate answers to
	non-holistic aggregate range queries in olap.
	\newblock pages 97--106. ACM, 2005.
	
	\bibitem{di2007authenticated}
	Giuseppe Di~Battista and Bernardo Palazzi.
	\newblock Authenticated relational tables and authenticated skip lists.
	\newblock In {\em IFIP Annual Conference on Data and Applications Security and
		Privacy}, pages 31--46. Springer, 2007.
	
	\bibitem{eisenberg1999sql}
	Andrew Eisenberg and Jim Melton.
	\newblock Sql: 1999, formerly known as sql3.
	\newblock {\em ACM Sigmod record}, 28(1):131--138, 1999.
	
	\bibitem{eppstein2008skip}
	David Eppstein, Michael~T Goodrich, and Jonathan~Z Sun.
	\newblock Skip quadtrees: Dynamic data structures for multidimensional point
	sets.
	\newblock {\em International Journal of Computational Geometry \&
		Applications}, 18(01n02):131--160, 2008.
	
	\bibitem{flajolet1995mellin}
	Philippe Flajolet and Robert Sedgewick.
	\newblock Mellin transforms and asymptotics: finite differences and rice's
	integrals.
	\newblock {\em Theoretical Computer Science}, 144(1-2):101--124, 1995.
	
	\bibitem{goldstein2001optimizing}
	Jonathan Goldstein and Per-{\AA}ke Larson.
	\newblock Optimizing queries using materialized views: a practical, scalable
	solution.
	\newblock In {\em ACM SIGMOD Record}, volume~30, pages 331--342. ACM, 2001.
	
	\bibitem{grassl2011cryptanalysis}
	Markus Grassl, Ivana Ili{\'c}, Spyros Magliveras, and Rainer Steinwandt.
	\newblock Cryptanalysis of the tillich--z{\'e}mor hash function.
	\newblock {\em Journal of cryptology}, 24(1):148--156, 2011.
	
	\bibitem{gray1997data}
	Jim Gray, Surajit Chaudhuri, Adam Bosworth, Andrew Layman, Don Reichart, Murali
	Venkatrao, Frank Pellow, and Hamid Pirahesh.
	\newblock Data cube: A relational aggregation operator generalizing group-by,
	cross-tab, and sub-totals.
	\newblock {\em Data mining and knowledge discovery}, 1(1):29--53, 1997.
	
	\bibitem{gupta1995aggregate}
	Ashish Gupta, Venky Harinarayan, and Dallan Quass.
	\newblock Aggregate-query processing in data warehousing environments.
	\newblock In {\em Proceedings of the 21th International Conference on Very
		Large Data Bases}, pages 358--369. Morgan Kaufmann Publishers Inc., 1995.
	
	\bibitem{gupta1993maintaining}
	Ashish Gupta, Inderpal~Singh Mumick, and Venkatramanan~Siva Subrahmanian.
	\newblock Maintaining views incrementally.
	\newblock {\em ACM SIGMOD Record}, 22(2):157--166, 1993.
	
	\bibitem{halevy2001answering}
	Alon~Y Halevy.
	\newblock Answering queries using views: A survey.
	\newblock {\em The VLDB Journal}, 10(4):270--294, 2001.
	
	\bibitem{VLDB95*562}
	J.~M. Hellerstein, J.~F. Naughton, and A.~Pfeffer.
	\newblock Generalized search trees for database systems.
	\newblock In {\em Proceedings of the 21st VLDB Conference Zurich, Switzerland,
		1995}, pages 562--573, San Francisco, Ca., USA, September 1995. Morgan
	Kaufmann Publishers, Inc.
	
	\bibitem{ho1997range}
	Ching-Tien Ho, Rakesh Agrawal, Nimrod Megiddo, and Ramakrishnan Srikant.
	\newblock {\em Range queries in OLAP data cubes}, volume~26.
	\newblock ACM, 1997.
	
	\bibitem{jesus2014survey}
	Paulo Jesus, Carlos Baquero, and Paulo~S{\'e}rgio Almeida.
	\newblock A survey of distributed data aggregation algorithms.
	\newblock {\em IEEE Communications Surveys \& Tutorials}, 17(1):381--404, 2014.
	
	\bibitem{kim2009disk}
	Jaeho Kim, Yongseok Oh, Eunsam Kim, Jongmoo Choi, Donghee Lee, and Sam~H Noh.
	\newblock Disk schedulers for solid state drivers.
	\newblock In {\em Proceedings of the seventh ACM international conference on
		Embedded software}, pages 295--304. ACM, 2009.
	
	\bibitem{kornacker1997concurrency}
	Marcel Kornacker, C~Mohan, and Joseph~M Hellerstein.
	\newblock Concurrency and recovery in generalized search trees.
	\newblock In {\em ACM SIGMOD Record}, volume~26, pages 62--72. ACM, 1997.
	
	\bibitem{kraska2019sagedb}
	Tim Kraska, Mohammad Alizadeh, Alex Beutel, Ed~H Chi, Jialin Ding, Ani Kristo,
	Guillaume Leclerc, Samuel Madden, Hongzi Mao, and Vikram Nathan.
	\newblock Sagedb: A learned database system.
	\newblock 2019.
	
	\bibitem{li2006dynamic}
	Feifei Li, Marios Hadjieleftheriou, George Kollios, and Leonid Reyzin.
	\newblock Dynamic authenticated index structures for outsourced databases.
	\newblock In {\em Proceedings of the 2006 ACM SIGMOD international conference
		on Management of data}, pages 121--132. ACM, 2006.
	
	\bibitem{li2005no}
	Jin Li, David Maier, Kristin Tufte, Vassilis Papadimos, and Peter~A Tucker.
	\newblock No pane, no gain: efficient evaluation of sliding-window aggregates
	over data streams.
	\newblock {\em SIGMOD Record}, 34(1):39--44, 2005.
	
	\bibitem{li2018approximate}
	Kaiyu Li and Guoliang Li.
	\newblock Approximate query processing: what is new and where to go?
	\newblock {\em Data Science and Engineering}, 3(4):379--397, 2018.
	
	\bibitem{lopez2005spatiotemporal}
	IF~Vega Lopez, Richard~T Snodgrass, and Bongki Moon.
	\newblock Spatiotemporal aggregate computation: A survey.
	\newblock {\em IEEE Transactions on Knowledge and Data Engineering},
	17(2):271--286, 2005.
	
	\bibitem{merkle1987digital}
	Ralph~C Merkle.
	\newblock A digital signature based on a conventional encryption function.
	\newblock In {\em Conference on the Theory and Application of Cryptographic
		Techniques}, pages 369--378. Springer, 1987.
	
	\bibitem{muller2013workload}
	Stephan M{\"u}ller, Lars Butzmann, Stefan Klauck, and Hasso Plattner.
	\newblock Workload-aware aggregate maintenance in columnar in-memory databases.
	\newblock In {\em 2013 IEEE International Conference on Big Data}, pages
	62--69. IEEE, 2013.
	
	\bibitem{stackexchange3269979}
	Metamorphy (nickname).
	\newblock Does $\sum_{i=0}^\infty{i\left((1-p^{i+1})^m-(1-p^{i})^m\right)}$ go
	to infinity as $\log m$?
	\newblock Mathematics Stack Exchange
	(\url{https://math.stackexchange.com/q/3269979}).
	\newblock version 2019-06-21.
	
	\bibitem{palazzi2010query}
	Bernardo Palazzi, Maurizio Pizzonia, and Stefano Pucacco.
	\newblock Query racing: fast completeness certification of query results.
	\newblock In {\em IFIP Annual Conference on Data and Applications Security and
		Privacy}, pages 177--192. Springer, 2010.
	
	\bibitem{paleczny2001java}
	Michael Paleczny, Christopher Vick, and Cliff Click.
	\newblock The java hotspot tm server compiler.
	\newblock In {\em Proceedings of the 2001 Symposium on Java TM Virtual Machine
		Research and Technology Symposium}, volume~1, 2001.
	
	\bibitem{papadias2001efficient}
	Dimitris Papadias, Panos Kalnis, Jun Zhang, and Yufei Tao.
	\newblock Efficient olap operations in spatial data warehouses.
	\newblock In {\em International Symposium on Spatial and Temporal Databases},
	pages 443--459. Springer, 2001.
	
	\bibitem{park2018verdictdb}
	Yongjoo Park, Barzan Mozafari, Joseph Sorenson, and Junhao Wang.
	\newblock Verdictdb: universalizing approximate query processing.
	\newblock In {\em Proceedings of the 2018 International Conference on
		Management of Data}, pages 1461--1476. ACM, 2018.
	
	\bibitem{pennino2019pipeline}
	Diego Pennino, Maurizio Pizzonia, and Federico Griscioli.
	\newblock Pipeline-integrity: Scaling the use of authenticated data structures
	up to the cloud.
	\newblock {\em Future Generation Computer Systems}, 100:618--647, 2019.
	
	\bibitem{pugh1998skip}
	William Pugh.
	\newblock A skip list cookbook.
	\newblock Technical report, 1998.
	
	\bibitem{stackoverflow1439016-oracle}
	Quassnoi.
	\newblock Query performance while using oracle aggregate function.
	\newblock Stack Overflow (\url{https://stackoverflow.com/a/16034737/1439016}).
	\newblock version 2019-07-02.
	
	\bibitem{ramakrishnan2000database}
	Raghu Ramakrishnan and Johannes Gehrke.
	\newblock {\em Database management systems}.
	\newblock McGraw Hill, 2000.
	
	\bibitem{sarawagi1997indexing}
	Sunita Sarawagi.
	\newblock Indexing olap data.
	\newblock {\em IEEE Data Eng. Bull.}, 20(1):36--43, 1997.
	
	\bibitem{shanmugasundaram1999compressed}
	Jayavel Shanmugasundaram, Usama Fayyad, Paul~S Bradley, et~al.
	\newblock Compressed data cubes for olap aggregate query approximation on
	continuous dimensions.
	\newblock In {\em KDD}, volume~99, pages 223--232. Citeseer, 1999.
	
	\bibitem{slkezak2018new}
	Dominik {\'S}l{\k{e}}zak, Rick Glick, Pawe{\l} Betli{\'n}ski, and Piotr Synak.
	\newblock A new approximate query engine based on intelligent capture and fast
	transformations of granulated data summaries.
	\newblock {\em Journal of Intelligent Information Systems}, 50(2):385--414,
	2018.
	
	\bibitem{srivastava1996answering}
	Divesh Srivastava, Shaul Dar, Hosagrahar~V Jagadish, and Alon~Y Levy.
	\newblock Answering queries with aggregation using views.
	\newblock In {\em VLDB}, volume~96, pages 318--329, 1996.
	
	\bibitem{su2016approximate}
	Hong Su, Mohamed Zait, Vladimir Barri{\`e}re, Joseph Torres, and Andre Menck.
	\newblock Approximate aggregates in oracle 12c.
	\newblock In {\em Proceedings of the 25th ACM International on Conference on
		Information and Knowledge Management}, pages 1603--1612. ACM, 2016.
	
	\bibitem{tamassia2003authenticated}
	Roberto Tamassia.
	\newblock Authenticated data structures.
	\newblock In {\em ESA}, volume 2832, pages 2--5. Springer, 2003.
	
	\bibitem{thomasian2011survey}
	Alexander Thomasian.
	\newblock Survey and analysis of disk scheduling methods.
	\newblock {\em ACM SIGARCH Computer Architecture News}, 39(2):8--25, 2011.
	
	\bibitem{tillich1994hashing}
	Jean-Pierre Tillich and Gilles Z{\'e}mor.
	\newblock Hashing with sl 2.
	\newblock In {\em Annual International Cryptology Conference}, pages 40--49.
	Springer, 1994.
	
	\bibitem{tropashko2005nested}
	Vadim Tropashko.
	\newblock Nested intervals tree encoding in sql.
	\newblock {\em ACM SIGMOD Record}, 34(2):47--52, 2005.
	
	\bibitem{von1987translating}
	Giinter von Biiltzingsloewen.
	\newblock Translating and optimizing sql queries having aggregates.
	\newblock In {\em Proceedings of the 13th International Conference on Very
		Large Databases, Brighton, England (September 1987)}, pages 235--245.
	Citeseer, 1987.
	
	\bibitem{wang2013novel}
	Hua Wang, Ping Huang, Shuang He, Ke~Zhou, Chunhua Li, and Xubin He.
	\newblock A novel i/o scheduler for ssd with improved performance and lifetime.
	\newblock In {\em 2013 IEEE 29th Symposium on Mass Storage Systems and
		Technologies (MSST)}, pages 1--5. IEEE, 2013.
	
	\bibitem{wesley2016incremental}
	Richard Wesley and Fei Xu.
	\newblock Incremental computation of common windowed holistic aggregates.
	\newblock {\em Proceedings of the VLDB Endowment}, 9(12):1221--1232, 2016.
	
	\bibitem{yun2014fastraq}
	Xiaochun Yun, Guangjun Wu, Guangyan Zhang, Keqin Li, and Shupeng Wang.
	\newblock Fastraq: A fast approach to range-aggregate queries in big data
	environments.
	\newblock {\em IEEE Transactions on Cloud Computing}, 3(2):206--218, 2014.
	
\end{thebibliography}
\end{document}